\newcommand{\ket}[1]{\left| #1 \right\rangle}
\newcommand{\braket}[2]{\langle #1 | #2 \rangle}
\newcommand{\comment}[1]{}
\newcommand{\BraKet}[3]{\langle #1 |\, #2  \,| #3 \rangle}
\newcommand{\E}[1]{\langle #1 \rangle}
\newcommand{\ES}[2]{\langle #1 ; #2 \rangle}
\newcommand{\llangle}{\langle\!\langle}
\newcommand{\rrangle}{\rangle\!\rangle}
\newcommand{\EE}[1]{\llangle #1 \rrangle}
\newtheorem{thm}{Theorem}
\newtheorem{proposition}[thm]{Proposition}
\newtheorem{lemma}[thm]{Lemma}
\newtheorem{remark}[thm]{Remark}
\declaretheoremstyle[
  headfont=\color{red}\normalfont\bfseries,
  bodyfont=\color{red}\normalfont\itshape,
]{colored}
\theoremstyle{remark}
\theoremstyle{colored}
\theoremstyle{colored}
\theoremstyle{colored}
\numberwithin{equation}{section}
\DeclareMathOperator{\dd}{d}
\DeclareMathOperator{\ii}{i}
\DeclareMathOperator{\ee}{e}
\DeclareMathOperator{\im}{Im}
\DeclareMathOperator{\re}{Re}
\title{Indirect Measurements of a Harmonic Oscillator}
\author{Martin Fraas,
\\
\small{Mathematics, Virginia Tech, VA 24061, Blacksburg, U.S.A.}
\\Gian Michele Graf, Lisa H\"{a}nggli
\\
\small{Theoretische Physik, ETH Zurich, 8093 Zurich, Switzerland} }
\begin{document}
\maketitle

\begin{abstract}
The measurement of a quantum system becomes itself a quantum-mechanical process once the apparatus is internalized. That shift of perspective may result in different physical predictions for a variety of reasons. We present a model describing both system and apparatus and consisting of a harmonic oscillator coupled to a field. The equation of motion is a quantum stochastic differential equation. By solving it we establish the conditions ensuring that the two perspectives are compatible, in that the apparatus indeed measures the observable it is ideally supposed to.
\end{abstract}

\begin{section}{Introduction} \label{section_intro}
How long should a measurement last? Ideally, a measurement is instantaneous, since that is implicit in Born's rule. Such an idealization may not be appropriate in principle, because it takes time for the pointer of the apparatus to correlate with the intended observable of the system. It may nonetheless be appropriate effectively if the observable is a constant of motion, making the instant of the measurement irrelevant. Actually, and quite oppositely, the longer the measurement then takes the better the correlation gets established.

Quite often however observables get measured even if they are not constants of motion. It then appears consequential to think of the measurement as reflecting the observable averaged over the time taken by the measurement itself. Still, one should ask to which extent such a measurement is an estimator for the instantaneous measurement which, though impossible, remains of importance since it underlies Born's rule. In this paper we will discuss the available time window, which is limited from below by the need to correlate the system to the apparatus and from above by the back-reaction of the latter on the former.

These issues are discussed in a \emph{model}, to be described below, that is rich enough so that they are not trivial, yet simple enough that its dynamics can be solved for exactly. The model features an observable and a Hamiltonian which do not commute, and they do so to a degree tunable by a parameter $\alpha$, with $\alpha=0$ corresponding to a vanishing commutator and to a (so-called) non-demolition experiment. It moreover features an apparatus which, as appropriate for such devices, consists of a (macroscopically) large number of degrees of freedom. Collectively they realize a pointer which is supposed to meter the observable. In order to do its job, the (microscopic) degrees of freedom are coupled one by one to the system proper with coupling parameter $\gamma>0$, in guise of a repeated measurement. As a result, the pointer observable commutes at different times, in line with the classical nature of a record. 

Even before specifying the model in further detail, the following can be noted. Let $H$ be the \emph{Hamiltonian} of some physical system and let $O$ be some \emph{observable} one intends to measure indirectly. By this we mean that effectively the time averaged observable 
\begin{equation}\label{eq_averaged_obs}
\overline{O}_T=\frac{1}{T}\int_0^T \ee^{\ii Ht}O\ee^{-\ii Ht}\dd t\,,
\end{equation}
with $T$ large shall be measured projectively. If $O$ is the Hamiltonian itself the time average is redundant, $\overline{H}_T=H$, consistently with the fact that the projective measurement of $O=H$ is of the non-demolition type.

The situation discussed in this article is that of $O=H_0$, where $H_0$ is a reference Hamiltonian (or $O$ an observable affiliated with $H_0$) w.r.t.\ which the actual Hamiltonian $H$ is a perturbation. Such a situation can e.g.\ occur because full experimental control of the Hamiltonian is lacking. 

The effect is twofold: First $\overline{(H_0)}_T$ differs from the intended observable $H_0$, though the two operators still agree in expectation for eigenstates of $H$; second, the apparatus steadily demolishes the eigenstates of $H$, thereby heating up the system. The first error is independent of $T$ on eigenstates and oscillatory on (discrete) supersitions thereof and in any event as small as the control of the perturbation of $H-H_0$ allows. By the second error however the measurement no longer reflects the properties of the initial state, at least eventually.

As we will show in the context of the model, the second error is comparatively small. More precisely the measurement time $T$ can still be taken large enough so as to be able to tell eigenstates of $H_0$ apart, yet also small enough so as they do not become blurred by the heating.

The model is as follows. The Hamiltonian of the system proper are harmonic oscillators 
\begin{align*}
H_0&=\omega a^*a\equiv \omega N\,,\\
H&=H_0-\omega\left(\overline{\alpha} a +\alpha a^*\right)\\
&=\omega\left(\left(a^*-\overline{\alpha}\right)\left(a-\alpha\right)-|\alpha|^2\right)\,,
\end{align*}
($[a,a^*]=1$, $\alpha\in \mathbb{C}$), and the observable to be discussed shall be the excitation number $N$. The measurement apparatus will be realized later as a quantum field.

The motion generated by $H$ can be visualized classically as clockwise circular orbits in the complex $a$-plane centered at $\alpha$,
\begin{equation*}
a=\alpha+r\ee^{-\ii \omega t}\,,
\end{equation*}
($\alpha$-circles, see Fig.\ \ref{figure_alpha_circles}). As a result the average value of $N=a^*a$ is
\begin{equation}\label{eq_N_T_infty_c}
\overline{N}_T \to |\alpha|^2+r^2\,,\quad (T\to \infty)\,,
\end{equation}
because the mixed terms are oscillatory. Eq.\ (\ref{eq_N_T_infty_c}) has the following semi-classical interpretation: The Fock state $\ket{n}$, ($n\in \mathbb{N}$) is associated with a $0$-circle $|a|=r$ of square radius $r^2=n$. Its points lie on $\alpha$-circles of different radii, $r^2=|a-\alpha|^2$, and average value $\langle r^2\rangle =|a|^2+|\alpha|^2=n+|\alpha|^2$, leading by (\ref{eq_N_T_infty_c}) to 
\begin{equation}\label{eq_N_T_infty_n_c}
\BraKet{n}{\overline{N}_T}{n}\to \BraKet{n}{\overline{N}_{\infty}}{n}=n+2|\alpha|^2\,,\quad (T\to \infty)\,.
\end{equation}
The time $T$ needed to approach the limit is $T\gg \omega^{-1}$.
\begin{figure}\label{figure_alpha_circles}
\centering
\includegraphics[scale=0.7]{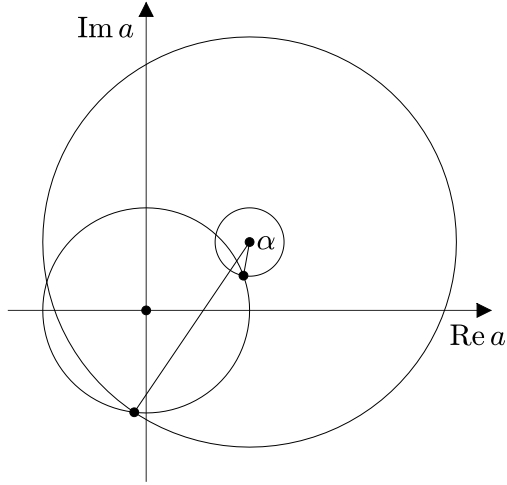}
\caption{$\alpha$-circles and $0$-circles.}
\end{figure}
Not surprisingly the quantum mechanical calculation confirms the above. In fact, and as shown in the Appendix,
\begin{equation}\label{eq_N_T_infty_qm}
\overline{N}_T\to \overline{N}_{\infty}=N-\left(\overline{\alpha}a+\alpha a^*\right)+2|\alpha|^2\,,
\end{equation}
which besides (\ref{eq_N_T_infty_n_c}) also yields
\begin{equation}\label{eq_N_T_infty_n_qm}
\langle \left(\Delta \overline{N}_{\infty}\right)^2\rangle_n=(2n+1)|\alpha|^2\,.
\end{equation}
We conclude that the standard deviation $\langle \left(\Delta \overline{N}_{\infty}\right)^2\rangle_n^{1/2}=\sqrt{2n+1}|\alpha|$ remains small compared to the spacing $\Delta \langle \overline{N}_{\infty}\rangle_n=1$ between the expectations of $\overline{N}_{\infty}$ in two consecutive states $\ket{n}$ and $\ket{n+1}$; at least for finitely many states $\ket{n}$ and provided $\alpha$ is small.

The issue to be investigated is the extent up to which that property persists when the measurement is in itself described as a dynamical process, thus including the feedback it exerts on the system proper. We do so by modelling the degrees of freedom of the apparatus by quantum noise. Full details will be given in the next section. For now it may suffice that the feedback results from the perpetual observation of the system during some time interval $[0,T]$, as it e.g.\ emerges from repeated measurements in the (non-trivial) limit where they become ever weaker yet ever more frequent. For an informal discussion it is best to postpone that limit. At the beginning of every time interval $\Delta t$ new noise degrees of freedom are introduced in a pristine state and are then coupled to the harmonic oscillator just for its duration. What we shall need is:
\begin{enumerate}[label=\roman*)]
\item The degrees of freedom of the noise are field quadratures $P_t$, $Q_t$, ($t\in[0,T]$) with commutation relations
\begin{equation*}
\ii [\Delta P_t,\Delta Q_t]=2\Delta t\,,
\end{equation*}
where $\Delta P_t$, $\Delta Q_t$ are the noises associated to $[t,t+\Delta t]$.
\item The Hamiltonian of the uncoupled apparatus is trivial. In the state $\ket{\Omega}$ of the apparatus $t\mapsto P_t$ has the same distribution as a Brownian motion on the real line with $\langle \left(\Delta P_t\right)^2\rangle=\Delta t$.
\item The coupling to the harmonic oscillator is  
\begin{equation}\label{eq_H_I}
H_I\Delta t=\gamma N\Delta P_t
\end{equation}
with $\gamma>0$. Upon passing to the Heisenberg picture ($O\rightsquigarrow \hat{O}_t=\ee^{\ii Ht}O\ee^{-\ii Ht}$, wherein $O$ may carry $t$ as a label in the Schr\"odinger picture) the change brought about on $\Delta \hat{Q}_t$ is 
\begin{equation*}
\delta \Delta \hat{Q}_{t} = \ii [\hat{H}_I\Delta t,\Delta \hat{Q}_{t}]\cong \ii [\hat{H}_I\Delta t,\Delta Q_{t}]= 2\gamma \hat{N}_t \Delta t\,
\end{equation*}
where $\cong$ refers to the leading order in $\Delta t$. The changes $\delta \Delta \hat{Q}_{t}$ are additive for different time intervals by (ii) and because the noise $\Delta Q_{t}$ of each is just transiently coupled to the oscillator. In particular, in the limit $\Delta t\to 0$, we have
\begin{equation}\label{eq_pointer}
\frac{\hat{Q}_{T}-Q_T}{T}=2\gamma \overline{N}_T
\end{equation}
whence $\mathcal{N}_T=\hat{Q}_{T}/2\gamma T$ eventually serves as a pointer for the observable $\overline{N}_T$, as intended. However by $[H_I,H]\not=0$ the apparatus potentially demolishes the eigenstates of $H$, as announced.
\end{enumerate}
That effect can again be discussed semi-classically, and we do so for simplicity in the weak coupling regime 
\begin{equation}\label{eq_wcr}
\gamma^2\ll \omega\,.
\end{equation}
We recall that the orbits of $H_0=\omega N$ are $0$-circles which run with frequency $\omega$. During a time $\Delta t$ the interaction (\ref{eq_H_I}) thus induces a turn by an angle $\Delta \psi=\gamma \Delta P_t$ along those circles. The resulting motion is diffusive with 
\begin{equation*}
\langle\left(\Delta\psi\right)^2\rangle=\gamma^2\Delta t
\end{equation*}
and is in competition with the drifting dynamics of $H$ which takes place along $\alpha$-circles (see Fig.\ \ref{figure_diffusive}). However during a period $2\pi/\omega$ we have 
\begin{figure}
\centering
\includegraphics[scale=0.7]{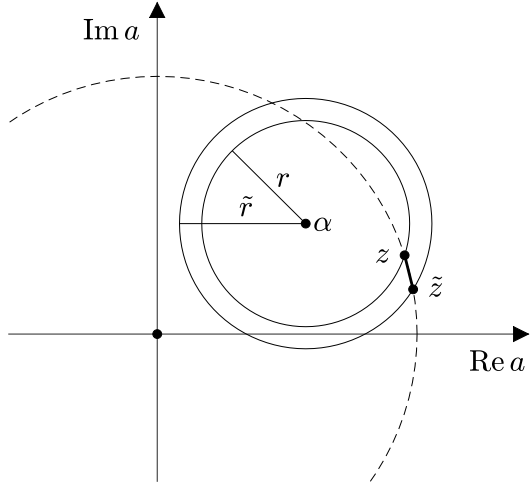}
\caption{Jump between $\alpha$-circles along a $0$-circle.}\label{figure_diffusive}
\end{figure}
%
\begin{equation*}
\langle\left(\Delta\psi\right)^2\rangle=2\pi\gamma^2\omega^{-1}\ll 1
\end{equation*}
by (\ref{eq_wcr}). Therefore the overall semi-classical motion follows some $\alpha$-circle interrupted by some rare jump along a $0$-circle to the next $\alpha$-circle. Let them be 
\begin{gather*}
z(\varphi)=\alpha+r\ee^{\ii \varphi}\,,\quad \tilde{z}(\tilde{\varphi})=\alpha+\tilde{r}\ee^{\ii \tilde{\varphi}}
\end{gather*}
with radii $r$, $\tilde{r}$, and be connected by the jump $z\mapsto \tilde{z}=z\ee^{-\ii \Delta \psi}$ which occurs at uniformly distributed $\varphi$. Then 
\begin{align*}
\tilde{r}^2&=|\tilde{z}-\alpha|^2
=|z(\varphi)\ee^{-\ii \Delta \psi}-\alpha|^2\\
&=|\alpha\left(\ee^{-\ii\Delta \psi}-1\right)+r\ee^{\ii \varphi}\ee^{-\ii\Delta \psi}|^2
\end{align*}
and 
\begin{align*}
\langle\tilde{r}^2\rangle&=2|\alpha|^2\left(1-\cos(\Delta\psi)\right)+r^2\,,\\
\langle\tilde{r}^2\rangle&=r^2+|\alpha|^2\langle\left(\Delta\psi\right)^2\rangle=r^2+|\alpha|^2\gamma^2\Delta t
\end{align*}
upon averaging in $\varphi$ first and then, for small $\Delta t$, in the jumps as well.

In conclusion: At large times $T$ Eq. (\ref{eq_N_T_infty_n_c}) is superseded by (cf.\ (\ref{eq_N_T_infty_c}))
\begin{equation}\label{eq_N_T_approx}
\langle \overline{N}_T\rangle\approx |\alpha|^2\gamma^2\frac{T}{2}\,
\end{equation}
because the average of $t$ is $T/2$. However for times $T\ll |\alpha|^{-2}\gamma^{-2}$
the observable $\overline{N}_T$ remains a good estimator of the quantum number $n$. The pointer observable $\mathcal{N}_t$ associated to (\ref{eq_pointer}) will then also serve its purpose, at least after some settling time that will be seen to be $T\gg \gamma^{-2}$.

We close the introduction by making a few selected references to the literature on the measurement process. Early on von Neumann \cite{VN96} mathematically discussed the Heisenberg cut, by which the conceptual dividing line between the observer and the observed system is meant. He showed that it can be shifted, providing examples which nowadays would be said to correspond to non-demolition measurements. Perez and Rosen \cite{P64} emphasized decoherence and the essential macroscopic character of the measurement apparatus, to which Hepp \cite{H72} supplied structure and examples. Merkli et al. \cite{BMS07, BMS08} rigorously exhibited decoherence for more general, not explicitly solvable examples. Fr\"ohlich \cite{BFS16} and coworkers give a notion of events that forgoes decoherence. From a more applied perspective, Makhlin \cite{MSS01} et al. discuss the limitations to an accurate readout of superconducting qubits. They are of a similar nature than in the present work.  

%
\end{section}

\begin{section}{Definition of the model and results}\label{section_model_results}
The precise definition of the model will be given in terms of its propagator, rather than the Hamiltonian, and in fact by means of quantum stochastic differential equation (QSDE).

Quantum stochastic calculus is a mathematical theory for quantum noise in open quantum systems developed by Hudson and Parthasarathy in 1984 \cite{PH84}. Applications are wide ranging. The first ones, which provided the construction of unitary dilations of quantum dynamical semigroups (\cite{PH84} already, and \cite{F85b}), were followed by other ones in quantum measurement theory \cite{BL85,B86}, and in quantum optics. The starting point of the latter applications is to model a Bosonic field using creation and annihilation processes. The first introduction of quantum stochastic calculus in this area was in \cite{GC85}, but related noise models already showed up earlier \cite{L66}. One kind of application of quantum stochastic calculus in quantum optics is in order to establish the master equation for the system of interest (e.g.\ \cite{G86}), another to describe the detection of photons \cite{B86,B87,B90,BMK03,BGM04,M87}, and yet another to model quantum input and output channels \cite{A88,B86,B88,B90,CW88,GC85,LRW88}. The latter is the field where we would locate our result. Other applications can for example be found in quantum filtering (e.g.\ \cite{B92a,B92b}).

Indirect measurements of quantum systems on the other hand have been discussed, including the non-demolition case, for different setups. The focus and the methods applied however differ from those presented in this article. We mention the experiment \cite{BHLRZ90, GBDSGKBRH07} by the Haroche group and the theoretical work of Bauer and Bernard \cite{BB11}, as well as \cite{BJM06, FF16, FF17a, FF17b, BMS07, BMS08}. In the experiment by Haroche, photons in a cavity are counted by letting them repeatedly interact with Rubidium atoms in a circular Rydberg state, which subsequently are measured. A description of such repeated indirect quantum non-demolition measurements was given by Bauer and Bernard.

The system proper we consider here is a harmonic oscillator. Its degrees of freedom are a creation and annihilation operator, $a^*$ and $a$, with commutation relation $[a,a^*]=1$ irreducibly realized on some Hilbert space $\mathcal{H}$. The operators 
\begin{equation}\label{eq_def_H_Gamma}
H=\omega\left(a^*a-\left(\overline{\alpha}a+\alpha a^*\right)\right)\,,\quad \Gamma=\gamma a^*a
\end{equation}
($\omega,\gamma >0$, $\alpha\in \mathbb{C}$) represent the Hamiltonian and the coupling strength to the measurement apparatus. Because the two operators do not commute, the measurement will not be of the non-demolition type, as pointed out earlier.

The state space of the apparatus is the bosonic Fock space $\mathcal{F}=\mathcal{F}(L^2(\mathbb{R}_+))$, with vacuum state $\ket{\Omega}$ and with creation and annihilation operators $A^*(f)$, $A(g)$, 
\begin{equation*}
[A(g),A^*(f)]=\braket{g}{f}1\,,\quad (f,g\in L^2(\mathbb{R}_+))\,.
\end{equation*}
We set in particular $A_t=A(1_{[0,t]})$, $(t\geq 0)$, and 
\begin{equation*}
Q_t=A_t+A_t^*\,,\quad P_t=-\ii(A_t-A_t^*)\,,
\end{equation*}
resulting in 
\begin{equation}\label{eq_comm_rel_QP}
\ii[P_t,Q_t]=2t\,.
\end{equation}
Finally the propagator is a solution of the QSDE
\begin{equation}\label{eq_U_t}
\dd U_t=-\left(\ii H+\frac{1}{2}\Gamma^2\right)U_t\dd t-\ii \Gamma U_t\dd P_t\,,\quad U_0=\mathbb{1}\,,
\end{equation}
which is of so-called Hudson-Parthasarathy form. The last term reflects the interaction (\ref{eq_H_I}), except that the equation is written in the It\^o rather than in the Stratonovich sense. Moreover, the equation is of a form that ensures the unitarity of $U_t$, or at least would if the operators were bounded. This point of precision is however inconsequential.

The rescaling of time ($t \mapsto \varepsilon t$) on the quantum field is represented by a unitary operator $T_\varepsilon$ acting as the second quantization of $f\mapsto f_\varepsilon$, $f_\varepsilon(t) = \sqrt{\varepsilon} f(\varepsilon t)$, i.e.
\begin{equation*}
T_\varepsilon A(f) T_\varepsilon^* = A (f_\varepsilon), \quad T_\varepsilon\ket{\Omega}=\ket{\Omega}.
\end{equation*}
In particular, 
\begin{equation}\label{eq:tscaling}
T_\varepsilon P_{\varepsilon t} T_\varepsilon^* = \sqrt{\varepsilon} P_t,
\end{equation}
and $T_\varepsilon \dd P_{\varepsilon t} T_\varepsilon^* = \sqrt{\varepsilon} \dd P_t$. This reflects the Wiener scaling, by which the probability distributions of the rescaled Brownian motion $W_{\varepsilon t}$  and  $\sqrt{\varepsilon} W_t$ are equal. For the evolution of the system, this implies that if $U(t)$ is the solution of (\ref{eq_U_t}) for parameters $(\omega, \alpha, \gamma)$ then 
\begin{equation*}\label{eq:Uscaling}
\tilde{U}(t) = T_\varepsilon U(\varepsilon t) T_\varepsilon^*
\end{equation*}
is the solution of the equation for $(\varepsilon \omega, \alpha, \sqrt{\varepsilon} \gamma)$. We conclude that the model has two dimensionless scales, $\alpha$ and $\omega \gamma^{-2}$.

The propagator can be computed explicitly. To this end we introduce the Weyl operators in the form
\begin{equation}\label{eq_dop}
D(z)= \ee^{z a^*-\overline{z} a}\,,\quad (z\in\mathbb{C})\,.
\end{equation}
They form a unitary projective representation of $\mathbb{C}$ by
\begin{equation*}
D(z')D(z)=\ee^{-\ii\im(\bar{z}'z)}D(z'+z)\,,\quad D(z)^{-1}=D(-z)=D(z)^*\,.
\end{equation*}

\begin{proposition}\label{prop_sol}
The family of operators on $\mathcal{H}\otimes \mathcal{F}$
\begin{equation}\label{eq_def_U_t}
U_t=\ee^{-\ii \phi_t a^*a}D(\mathcal{Z}_t)\ee^{-\ii G_t}\,,\quad (t\geq 0)
\end{equation}
is a solution to (\ref{eq_U_t}). Here 
\begin{gather}
\phi_t=\omega t+\gamma P_t\,,\nonumber\\
\mathcal{Z}_t=\ii \omega \alpha Z_t\,,\quad Z_t=\int_0^t\ee^{\ii \phi_s}\dd s\,,\label{eq_defs_U_t}\\
G_t=\omega^2|\alpha|^2\int_0^t\int_0^{s_1}\im \ee^{\ii\left(\phi_{s_2}-\phi_{s_1}\right)}\dd s_2 \dd s_1\,.\nonumber
\end{gather}
Moreover, $U_t$ is manifestly unitary for all $t$.
\end{proposition}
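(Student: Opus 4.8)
The plan is to verify the ansatz directly by differentiating the product \eqref{eq_def_U_t} with the quantum It\^o calculus and matching the result term by term against \eqref{eq_U_t}. Write $N=a^*a$ and $U_t=V_tW_tS_t$ with $V_t=\ee^{-\ii\phi_t N}$, $W_t=D(\mathcal{Z}_t)$, $S_t=\ee^{-\ii G_t}$. Two structural facts organize the whole computation. First, the It\^o rule $(\dd P_t)^2=\dd t$ (with $\dd P_t\,\dd t=0$), which follows from $P_t=-\ii(A_t-A_t^*)$ and the Hudson--Parthasarathy table $\dd A_t\,\dd A_t^*=\dd t$. Second, the commutativity $[P_s,P_{s'}]=0$, hence $[\phi_s,\phi_{s'}]=0$, so that $\mathcal{Z}_t$, $\dd\mathcal{Z}_t$ and $G_t$ are \emph{field} operators commuting with one another and with $a,a^*$; this centrality is what lets us treat the operator-valued displacement $D(\mathcal{Z}_t)$ with the scalar Weyl calculus.

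First I would differentiate each factor. For $V_t$ the second-order It\^o formula in $\phi_t=\omega t+\gamma P_t$, together with $[V_t,N]=0$ and $(\dd\phi_t)^2=\gamma^2\,\dd t$, gives
\[
\dd V_t=V_t\bigl(-\ii\omega N-\tfrac12\gamma^2N^2\bigr)\dd t-\ii\gamma N V_t\,\dd P_t.
\]
For $W_t$, since $\dd\mathcal{Z}_t=\ii\omega\alpha\,\ee^{\ii\phi_t}\dd t$ is of bounded variation, the Weyl composition law $D(z'+z)=\ee^{\ii\im(\bar z'z)}D(z')D(z)$ --- valid here because the displacements commute --- yields to first order $\dd W_t=\bigl(\dd\mathcal{Z}_t\,a^*-\overline{\dd\mathcal{Z}_t}\,a+\ii\,\im(\overline{\dd\mathcal{Z}_t}\mathcal{Z}_t)\bigr)W_t$. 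For $S_t$ one has simply $\dd S_t=-\ii\,\dd G_t\,S_t$ with $\dd G_t=\omega^2|\alpha|^2\int_0^t\im\,\ee^{\ii(\phi_s-\phi_t)}\dd s\,\dd t$.

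Next I would assemble $\dd U_t$ by the product rule. Because only $\dd V_t$ carries a noise term while $\dd W_t$ and $\dd S_t$ are $O(\dd t)$, every It\^o cross-term vanishes and $\dd U_t=(\dd V_t)W_tS_t+V_t(\dd W_t)S_t+V_tW_t(\dd S_t)$. The $V_t$-differential already reproduces three of the required terms: commuting $N$ through $V_t$ gives $-\ii\omega N\,U_t\,\dd t-\tfrac12\Gamma^2U_t\,\dd t-\ii\Gamma U_t\,\dd P_t$. It remains to recover the $+\ii\omega(\overline{\alpha} a+\alpha a^*)U_t\,\dd t$ piece of $-\ii HU_t\,\dd t$ and to dispose of the two phases. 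For the former I would push $a,a^*$ leftward through $V_t$ via $V_t a V_t^{-1}=\ee^{\ii\phi_t}a$ and $V_t a^* V_t^{-1}=\ee^{-\ii\phi_t}a^*$; then $\dd\mathcal{Z}_t\,\ee^{-\ii\phi_t}=\ii\omega\alpha\,\dd t$ and $\overline{\dd\mathcal{Z}_t}\,\ee^{\ii\phi_t}=-\ii\omega\overline{\alpha}\,\dd t$, so the linear part of $\dd W_t$ collapses exactly to $\ii\omega(\alpha a^*+\overline{\alpha} a)U_t\,\dd t$. Finally the phase from $\dd W_t$ and the contribution of $\dd S_t$ cancel because a short computation gives $\im(\overline{\dd\mathcal{Z}_t}\mathcal{Z}_t)=\dd G_t$ --- which is precisely the motivation for the definition of $G_t$. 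This matches \eqref{eq_U_t}, while $U_0=\mathbb{1}$ follows from $\phi_0=0$, $\mathcal{Z}_0=0$, $G_0=0$. Unitarity is immediate once one observes that each factor has an (anti-)self-adjoint exponent: $\phi_tN$ and $G_t$ are self-adjoint and $\mathcal{Z}_t a^*-\overline{\mathcal{Z}_t}a$ is anti-self-adjoint, so $U_t$ is a product of unitaries.

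The step I expect to be the main obstacle is the careful treatment of the operator-valued Weyl factor $D(\mathcal{Z}_t)$: one must justify that its stochastic differential is governed by the scalar Weyl relation, with a central (field-operator) cocycle phase, and then check that this phase is exactly the one removed by $\ee^{-\ii G_t}$. The enabling structural input throughout is the commutativity $[P_s,P_{s'}]=0$; without it the displacements would fail to commute, the clean Baker--Campbell--Hausdorff bookkeeping would break down, and with it the very form of $G_t$.
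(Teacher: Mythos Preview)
Your proposal is correct and follows essentially the same route as the paper: differentiate the three factors by It\^o's formula, use the product rule (noting that only the $\ee^{-\ii\phi_tN}$ factor carries noise so the It\^o corrections vanish), push $a,a^*$ through $V_t$ via $V_taV_t^{-1}=\ee^{\ii\phi_t}a$, and identify the leftover scalar phase $\im(\overline{\dd\mathcal{Z}_t}\mathcal{Z}_t)$ with $\dd G_t$. The only cosmetic difference is that the paper frames the computation as solving for unknowns $\phi_t,\mathcal{Z}_t,G_t$ in a generic ansatz by comparing coefficients of $1,a,a^*,a^*a,(a^*a)^2$, whereas you take the formulas as given and verify them directly; your explicit remarks on why the operator-valued Weyl calculus is licit mirror the paper's Remark~\ref{remark_Z_t_G_t}.
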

\begin{remark}\label{remark_Z_t_G_t}
$Z_t$ and $G_t$ are well-defined operators on $\mathcal{F}$, because the family $(P_s)_{0\leq s\leq t}$ is commuting; so is $D(\mathcal{Z}_t)$ because the exponent in (\ref{eq_dop}) remains self-adjoint up to a factor $\ii$.
\end{remark}
\begin{remark}
The field $P_t$ may be identified with a random process, namely Brownian motion $W_t$, see above and in more detail in Sect.~\ref{pf_limiting}. By (\ref{eq_defs_U_t}) we have
\begin{equation}\label{eq_pf_Z_t}
Z_t=Z_s+\ee^{\ii\phi_s}\int_s^t\ee^{\ii(\phi_r-\phi_s)}\dd r\equiv Z_s+\ee^{\ii \phi_s}\hat{Z}_{t-s}\,,
\end{equation}
where, for fixed $s$, $(\hat{Z}_{t-s})_{t\geq s}$ is a process independent of $(\ee^{\ii \phi_t})_{0\leq t\leq s}$ (and hence of $Z_s$), but equal in distribution, $(\hat{Z}_{t-s})_{t\geq s}\overset{d}{=}(Z_{\tau})_{\tau\geq 0}$ with $\tau=t-s$. In the limit $\varepsilon \to 0$, the phase factor $e^{\ii \phi_s}$ oscillates quickly and the process $Z_t$ has independent increments. Under the appropriate rescaling (\ref{eq:tscaling}), $Z_t$ becomes proportional to complex Brownian motion.

We also note that $R_t:=\ee^{-\ii\phi_t}Z_t$ satisfies the renewal equation $R_t=\ee^{-\ii(\phi_t-\phi_s)}R_s+\hat{R}_{t-s}$. (A related, but different renewal process is treated in \cite{GDL}.)  
\end{remark}

The next result says that $Q_t$ indeed meters the excitation number $a^*a$:
\begin{proposition}\label{prop_dU} 
We have
\begin{equation}\label{eq_prop_U_t_astara}
  \dd_s\left(U_s^*Q_tU_s\right)=
  \begin{cases}2\gamma U_s^*a^*aU_s\dd s\,, &\quad (0\leq s< t)\,,\\
    0\,, &\quad (s\ge t)\,.
    \end{cases}
\end{equation}
\end{proposition}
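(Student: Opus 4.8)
The plan is to compute the stochastic differential $\dd_s(U_s^*Q_tU_s)$ by quantum It\^o calculus, feeding in the Hudson--Parthasarathy equation (\ref{eq_U_t}) and its adjoint. First I would rewrite the driving noise through $\dd P_s=\ii(\dd A_s^*-\dd A_s)$, so that (\ref{eq_U_t}) reads $\dd U_s=-KU_s\,\dd s+\Gamma U_s(\dd A_s^*-\dd A_s)$ with $K=\ii H+\tfrac12\Gamma^2$, and its adjoint reads $\dd U_s^*=-U_s^*K^*\,\dd s-U_s^*\Gamma(\dd A_s^*-\dd A_s)$; here the increments at time $s$ may be moved freely past the adapted factors $U_s,U_s^*$ and the system operators $H,\Gamma$, since these act on tensor-commuting parts of the space. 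The only input from the It\^o table is $\dd A_s\,\dd A_s^*=\dd s$, all other products of two increments vanishing; in particular $(\dd P_s)^2=\dd s$, consistent with the Brownian nature of $P_s$.

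For $0\le s<t$ I would first replace the middle factor by an adapted one. Since $Q_t-Q_s$ lives on the field modes in $[s,t]$ it commutes with the adapted $U_s$, whence $U_s^*Q_tU_s=U_s^*Q_sU_s+(Q_t-Q_s)$ and therefore $\dd_s(U_s^*Q_tU_s)=\dd_s(U_s^*Q_sU_s)-\dd Q_s$. Applying the triple product rule to $U_s^*Q_sU_s$ (the three-increment contribution being negligible) yields three groups of terms. The genuine increment terms collect into $U_s^*[Q_s,\Gamma]U_s(\dd A_s^*-\dd A_s)+\dd Q_s$; the commutator vanishes because $Q_s$ is a field and $\Gamma$ a system operator, and the leftover $\dd Q_s$ is exactly cancelled by the $-\dd Q_s$ above. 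Among the $\dd s$ terms, the two drift contributions $(\dd U_s^*)Q_sU_s$ and $U_s^*Q_s(\dd U_s)$ combine via $K+K^*=\Gamma^2$ into $-U_s^*Q_s\Gamma^2U_s\,\dd s$, which is annihilated by the It\^o term $(\dd U_s^*)Q_s(\dd U_s)=U_s^*Q_s\Gamma^2U_s\,\dd s$. What survives are the two cross terms between $\dd Q_s$ and the diffusive part of the propagator, $(\dd U_s^*)(\dd Q_s)U_s$ and $U_s^*(\dd Q_s)(\dd U_s)$, which through $(\dd A_s^*-\dd A_s)(\dd A_s+\dd A_s^*)=-\dd s$ and $(\dd A_s+\dd A_s^*)(\dd A_s^*-\dd A_s)=\dd s$ each contribute $U_s^*\Gamma U_s\,\dd s$. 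Their sum is $2U_s^*\Gamma U_s\,\dd s=2\gamma U_s^*a^*aU_s\,\dd s$, as claimed.

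For $s\ge t$ the argument is the same but simpler: now $Q_t$ is adapted to $[0,s]$, commutes with the increments at time $s$, and is constant in $s$, so no $\dd Q_s$ term arises. The increment terms again reduce to $U_s^*[Q_t,\Gamma]U_s(\dd A_s^*-\dd A_s)=0$, and the $\dd s$ terms cancel in pairs exactly as before, leaving $\dd_s(U_s^*Q_tU_s)=0$. The main obstacle, and the point requiring the most care, is the bookkeeping of non-commutativity: one must verify that the increments at time $s$ fail to commute only among themselves, through the It\^o table, while passing freely through every adapted or system operator, and one must track the signs of the three quadratic products $(\dd A_s^*-\dd A_s)^2$, $(\dd A_s^*-\dd A_s)(\dd A_s+\dd A_s^*)$ and $(\dd A_s+\dd A_s^*)(\dd A_s^*-\dd A_s)$, since it is precisely the asymmetry of the latter two that produces the measurement signal $2\gamma$ rather than a cancellation. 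As a consistency check, integrating the $s<t$ case over $[0,T]$ with $U_0=\mathbb{1}$ recovers $U_T^*Q_TU_T-Q_T=2\gamma T\,\overline{N}_T$, i.e. the informal pointer relation (\ref{eq_pointer}).
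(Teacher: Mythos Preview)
Your argument is correct. The paper's proof reaches the same conclusion by a slightly different organization: it keeps $Q_t$ as a fixed middle factor and applies the It\^o product rule to $U_s^*Q_tU_s$ directly, using that for $0\le s<t$ the future noise increment fails to commute with $Q_t$, namely $[\dd P_s,Q_t]=-2\ii\,\dd s$. The term $\ii\bigl(\Gamma(\dd P_s)Q_t-Q_t(\dd P_s)\Gamma\bigr)=2\Gamma\,\dd s$ is then the sole survivor, since $-\{\Gamma^2/2,Q_t\}\dd s$ is cancelled by the It\^o correction $\Gamma(\dd P_s)Q_t(\dd P_s)\Gamma=\Gamma^2 Q_t\,\dd s$. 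Your route instead removes the non-adapted piece first via $U_s^*Q_tU_s=U_s^*Q_sU_s+(Q_t-Q_s)$, so that every factor is adapted and the increments commute through; the signal $2\Gamma\,\dd s$ then emerges from the cross terms $(\dd U_s^*)(\dd Q_s)U_s+U_s^*(\dd Q_s)(\dd U_s)$ rather than from a commutator with $Q_t$. The paper's version is more compact (two factors vary, one is fixed), while yours stays entirely within the standard adapted framework at the cost of a genuine triple product and an extra $-\dd Q_s$ to track; the two are of course equivalent, the commutator $[\dd P_s,Q_t]$ being nothing but $[\dd P_s,\dd Q_s]$ in disguise.
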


The increments then add up as follows:

\begin{proposition}\label{prop_HEQ}
\begin{gather}
U_t^*a^*aU_t=\left(a^*+\mathcal{Z}_t^*\right)\left(a+\mathcal{Z}_t\right)\,,\label{eq_prop_Q_t_ev_1}\\
U_t^*Q_tU_t=X_{2,t}a^*a+X_{1,t}^*a+X_{1,t}a^*+X_{0,t}\label{eq_prop_Q_t_ev_2}\,,
\end{gather}
where 
\begin{gather}
X_{2,t}=2\gamma t\,,\label{eq_prop_Q_t_ev_3}\\
X_{1,t}=2\ii \gamma \omega \alpha Y_{1,t}\,,\quad Y_{1,t}=\int_0^tZ_s\dd s\,,\label{eq_prop_Q_t_ev_4}\\ 
X_{0,t}=Q_t+2\gamma \omega^2 |\alpha|^2Y_{0,t}\,,\quad Y_{0,t}=\int_0^tZ_s^*Z_s\dd s\,.\label{eq_prop_Q_t_ev_5}
\end{gather}
\end{proposition}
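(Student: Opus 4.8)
The plan is to derive the two identities in turn: the first, \eqref{eq_prop_Q_t_ev_1}, directly from the explicit form of the propagator in Proposition~\ref{prop_sol}, and the second, \eqref{eq_prop_Q_t_ev_2}, by integrating the differential relation of Proposition~\ref{prop_dU} and feeding in the first.

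For \eqref{eq_prop_Q_t_ev_1} I would start from $U_t=\ee^{-\ii\phi_t a^*a}D(\mathcal{Z}_t)\ee^{-\ii G_t}$. The factor $\ee^{-\ii G_t}$ acts on $\mathcal{F}$ alone and commutes with $a^*a$, so it drops out of the conjugation; likewise $\ee^{-\ii\phi_t a^*a}$ is a function of $a^*a$ (with the $\mathcal{F}$-valued coefficient $\phi_t$) and hence also commutes with $a^*a$. Thus $U_t^*a^*aU_t=D(\mathcal{Z}_t)^*a^*a\,D(\mathcal{Z}_t)$. I would then invoke the displacement relation $D(z)^*aD(z)=a+z$ together with its adjoint $D(z)^*a^*D(z)=a^*+\overline{z}$. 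These continue to hold for the operator-valued argument $z=\mathcal{Z}_t$ precisely because $\mathcal{Z}_t$ commutes with $a$ and $a^*$ (it is built from the commuting family $(P_s)$, cf.\ Remark~\ref{remark_Z_t_G_t}), so the one-term Baker--Campbell--Hausdorff computation goes through verbatim. Multiplying the two relations gives $U_t^*a^*aU_t=(a^*+\mathcal{Z}_t^*)(a+\mathcal{Z}_t)$.

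For \eqref{eq_prop_Q_t_ev_2}, the right-hand side of \eqref{eq_prop_U_t_astara} carries only a $\dd s$ differential (no $\dd P_s$ or $\dd Q_s$), so it may be integrated as an ordinary operator-valued integral with no It\^o correction. Using $U_0=\id$ and the vanishing of the increment for $s\ge t$, integration of \eqref{eq_prop_U_t_astara} over $s\in[0,t]$ yields
\begin{equation*}
U_t^*Q_tU_t=Q_t+2\gamma\int_0^t U_s^*a^*aU_s\,\dd s\,.
\end{equation*}
Substituting \eqref{eq_prop_Q_t_ev_1} and expanding $(a^*+\mathcal{Z}_s^*)(a+\mathcal{Z}_s)=a^*a+\mathcal{Z}_s a^*+\mathcal{Z}_s^* a+\mathcal{Z}_s^*\mathcal{Z}_s$ (the cross terms recombining since $\mathcal{Z}_s$ commutes with $a,a^*$), I would read off the four coefficients. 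The $a^*a$-term gives $2\gamma\int_0^t\dd s=2\gamma t=X_{2,t}$; the $a^*$-term gives $2\gamma\int_0^t\mathcal{Z}_s\,\dd s=2\ii\gamma\omega\alpha\int_0^t Z_s\,\dd s=X_{1,t}$ upon inserting $\mathcal{Z}_s=\ii\omega\alpha Z_s$ from \eqref{eq_defs_U_t}; the $a$-term is its adjoint $X_{1,t}^*$; and the field part is $Q_t+2\gamma\int_0^t\mathcal{Z}_s^*\mathcal{Z}_s\,\dd s=Q_t+2\gamma\omega^2|\alpha|^2\int_0^t Z_s^*Z_s\,\dd s=X_{0,t}$, using $\mathcal{Z}_s^*\mathcal{Z}_s=\omega^2|\alpha|^2Z_s^*Z_s$. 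This reproduces \eqref{eq_prop_Q_t_ev_3}--\eqref{eq_prop_Q_t_ev_5}.

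The computation is essentially bookkeeping, so there is no deep obstacle; the points demanding care are all matters of rigour. I would need to confirm that $Y_{1,t}$ and $Y_{0,t}$ are genuinely well-defined operators, which follows from the commutativity of $(P_s)_{0\le s\le t}$ exactly as for $Z_t$ in Remark~\ref{remark_Z_t_G_t}, and that the unbounded operators are treated on a suitable common domain (e.g.\ finite-particle or coherent vectors), where the displacement relation and the interchange of conjugation with the $s$-integral are justified. The only genuine subtlety is legitimizing the integration of \eqref{eq_prop_U_t_astara}; since its right-hand side is a pure-$\dd s$ term, this reduces to the fundamental theorem of calculus rather than quantum It\^o calculus.
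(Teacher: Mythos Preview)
Your proposal is correct and follows essentially the same route as the paper: reduce $U_t^*a^*aU_t$ to $D(\mathcal{Z}_t)^*a^*aD(\mathcal{Z}_t)$ via the explicit form (\ref{eq_def_U_t}) and the displacement relation, then obtain (\ref{eq_prop_Q_t_ev_2}) by integrating (\ref{eq_prop_U_t_astara}) and expanding. Your write-up is more explicit about why the two outer factors of $U_t$ drop and about the coefficient matching, but the argument is the same.
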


The next result computes the expectation and the variance of the two observables. More precisely, let us focus on the initial state $\ket{n,\Omega}$, where $\ket{n}\in \mathcal{H}$, ($n\in\mathbb{N}$), is the eigenstate of the excitation number, $a^*a\ket{n}=n\ket{n}$, and $\ket{\Omega}\in\mathcal{F}$ is the field vacuum, $A_t\ket{\Omega}=0$, ($t\geq 0$). We set
\begin{equation*}
\langle A \rangle=\BraKet{n,\Omega}{A}{n,\Omega}\,,\quad \llangle A^2 \rrangle =\langle A^2\rangle-\langle A\rangle^2
\end{equation*}
for any operator $A$ on $\mathcal{H}\otimes \mathcal{F}$.

\begin{proposition}\label{prop_exp_var}
We have the following expectations and variances in the state $\ket{n,\Omega}$:
\begin{align}
&\E{U_t^*a^*aU_t}=n+\omega^2|\alpha|^2\E{Z_t^*Z_t}\,,\label{eq_prop_exp_var_1}\\
&\EE{(U_t^*a^*aU_t)^2}=\omega^2|\alpha|^2\left(\omega^2|\alpha|^2\EE{(Z_t^*Z_t)^2}+(2n+1)\E{Z_t^*Z_t}\right)\,,\label{eq_prop_exp_var_2}\\
&\E{U_t^*Q_tU_t}=2\gamma \left(t n+\omega^2|\alpha|^2\E{Y_{0,t}}\right)\,,\label{eq_prop_exp_var_3}\\
&\EE{(U_t^*Q_tU_t)^2}=t+4\gamma^2\omega^2|\alpha|^2\left(\omega^2|\alpha|^2\EE{Y_{0,t}^2}+(2n+1)\E{Y_{1,t}^*Y_{1,t}}\right)\,.\label{eq_prop_exp_var_4}
\end{align}
\end{proposition}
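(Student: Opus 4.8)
The plan is to evaluate all four quantities as expectations in the product state $\ket{n,\Omega}$ by exploiting two structural facts. First, the oscillator operators $a,a^*$ act on $\mathcal H$ and hence commute with every field operator; and since $\langle n|a|n\rangle=\langle n|a^*|n\rangle=0$, any term carrying a single unpaired $a$ or $a^*$ drops out of an $\ket n$-expectation. Second, because the family $(P_s)_{0\le s\le t}$ is commutative (Remark \ref{remark_Z_t_G_t}), the field operators $Z_t,Z_t^*,Y_{0,t},Y_{1,t},Y_{1,t}^*$ generate a commutative algebra; in the vacuum they are jointly realized as functionals of a Brownian path, so one may condition on the field, i.e.\ replace $\mathcal Z_t$ by a complex number $z$, carry out the (finite-dimensional) oscillator algebra, and finally average. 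The operator $Q_t$, by contrast, does \emph{not} commute with these functionals, and keeping track of that non-commutativity is the only delicate point.

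For the number observable I would start from (\ref{eq_prop_Q_t_ev_1}), writing $U_t^*a^*aU_t=b^*b$ with $b=a+\mathcal Z_t$. Since $\mathcal Z_t$ commutes with $a,a^*$ and with $\mathcal Z_t^*$, one has $[b,b^*]=[a,a^*]=1$, so $b$ is a displaced annihilation operator. Conditioning on the field ($\mathcal Z_t\rightsquigarrow z$) and using the shifted-oscillator matrix elements $\langle n|b^*b|n\rangle=n+|z|^2$ and $\langle n|(b^*b)^2|n\rangle=\langle n|(b^{*2}b^2+b^*b)|n\rangle=n^2+(4n+1)|z|^2+|z|^4$ (the last from $\|b^2\ket n\|^2$), I would substitute $|z|^2=\omega^2|\alpha|^2 Z_t^*Z_t$ and average. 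Equation (\ref{eq_prop_exp_var_1}) is immediate; subtracting the square of the mean from the averaged second moment, the $n^2$ and the mean of $|z|^4$ cancel and the coefficient $4n+1-2n=2n+1$ emerges, giving exactly (\ref{eq_prop_exp_var_2}).

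For the pointer observable I would use (\ref{eq_prop_Q_t_ev_2}) in the schematic form $M:=U_t^*Q_tU_t=\sum_i F_iO_i$, with oscillator factors $O_i\in\{a^*a,\,a,\,a^*,\,1\}$ and field factors $F_i\in\{2\gamma t,\,X_{1,t}^*,\,X_{1,t},\,X_{0,t}\}$. Because $[O_i,F_j]=0$ we get $M^2=\sum_{i,j}F_iF_j\,O_iO_j$, and the expectation factorizes as $\sum_{i,j}\langle\Omega|F_iF_j|\Omega\rangle\,\langle n|O_iO_j|n\rangle$. Only diagonal oscillator matrix elements survive, leaving six contributions, from the pairs $(a^*a,a^*a)$, $(a^*a,1)$, $(1,a^*a)$, $(1,1)$, $(a,a^*)$ and $(a^*,a)$; crucially no surviving term pairs $X_{0,t}$ with $X_{1,t}$, so the $Q$--$P$ non-commutativity enters only through $X_{0,t}^2$. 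After subtracting $\langle M\rangle^2=(2\gamma tn+\langle X_{0,t}\rangle)^2$ one is left with $\EE{M^2}=\EE{X_{0,t}^2}+(n+1)\E{X_{1,t}^*X_{1,t}}+n\E{X_{1,t}X_{1,t}^*}$. Since $Y_{1,t}$ and $Y_{1,t}^*$ lie in the commutative field algebra, $\E{X_{1,t}^*X_{1,t}}=\E{X_{1,t}X_{1,t}^*}=4\gamma^2\omega^2|\alpha|^2\E{Y_{1,t}^*Y_{1,t}}$, which combines $(n+1)+n$ into the stated $(2n+1)$; the expectation (\ref{eq_prop_exp_var_3}) follows the same way but more simply.

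The main obstacle is the remaining term $\EE{X_{0,t}^2}$ with $X_{0,t}=Q_t+2\gamma\omega^2|\alpha|^2Y_{0,t}$. Expanding, $\EE{X_{0,t}^2}=\EE{Q_t^2}+2\gamma\omega^2|\alpha|^2\E{\{Q_t,Y_{0,t}\}}+4\gamma^2\omega^4|\alpha|^4\EE{Y_{0,t}^2}$, where $\{\cdot,\cdot\}$ is the anticommutator. From $A_t\ket\Omega=0$ one reads off $\E{Q_t}=0$ and $\EE{Q_t^2}=\E{A_tA_t^*}=\|1_{[0,t]}\|^2=t$, which supplies the leading ``$t$'' of (\ref{eq_prop_exp_var_4}). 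The crux is that the cross term vanishes, $\E{\{Q_t,Y_{0,t}\}}=0$. I would prove this from the quasi-free (Gaussian) character of the vacuum together with the symmetric two-point function $\E{\{Q_t,P_s\}}=0$, which itself follows from the elementary computation $\E{Q_tP_s}=\ii\min(s,t)=-\E{P_sQ_t}$ (consistent with (\ref{eq_comm_rel_QP})). Indeed $Y_{0,t}$ is an integral of Weyl exponentials $e^{\ii\gamma(P_u-P_v)}$ of the $P$'s, and for a Gaussian state the $c$-number commutator contributions to $\E{Q_t\,e^{\ii\lambda\cdot P}}$ and $\E{e^{\ii\lambda\cdot P}\,Q_t}$ cancel upon symmetrization, leaving a factor proportional to $\E{\{Q_t,\lambda\cdot P\}}=0$. (Equivalently, an anti-unitary time reversal $\Theta$ fixing $\ket\Omega$ with $\Theta Q_t\Theta^{-1}=-Q_t$ and $\Theta P_s\Theta^{-1}=P_s$ fixes $Y_{0,t}$ and thus forces the real number $\E{\{Q_t,Y_{0,t}\}}$ to equal its own negative.) With the cross term gone, $\EE{X_{0,t}^2}=t+4\gamma^2\omega^4|\alpha|^4\EE{Y_{0,t}^2}$, and collecting terms yields (\ref{eq_prop_exp_var_4}).
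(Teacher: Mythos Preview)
Your proof is correct and follows essentially the same route as the paper: isolate the oscillator monomials that survive in $\ket{n}$, reduce to field expectations, and use commutativity of the $P$-algebra. The one substantive difference is the handling of $\EE{X_{0,t}^2}$: the paper uses the identity $Q_t\ket{\Omega}=-\ii P_t\ket{\Omega}$ (from $A_t\ket{\Omega}=0$) to replace $X_{0,t}$ by $\tilde X_{0,t}=-\ii P_t+2\gamma\omega^2|\alpha|^2 Y_{0,t}$, after which the cross terms vanish by $[P_t,Y_{0,t}]=0$; you instead argue $\E{\{Q_t,Y_{0,t}\}}=0$ via Gaussianity or a $\Theta$-symmetry. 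Both are valid---in fact your vanishing anticommutator follows in one line from the paper's observation, since $\E{Q_t f(P)}+\E{f(P)Q_t}=\ii\E{P_t f(P)}-\ii\E{f(P)P_t}=\ii\E{[P_t,f(P)]}=0$---so the paper's replacement is the more economical way to execute the step you sketched.
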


Let us stress once more that (\ref{eq_prop_Q_t_ev_1}) is the instantaneous excitation number, whereas (\ref{eq_prop_Q_t_ev_2}), divided by $2\gamma t$, is its time-averaged value, as sampled by the apparatus initialized in $\ket{\Omega}$, cf.\ (\ref{eq_prop_U_t_astara}) and $\BraKet{\Omega}{Q_t}{\Omega}=0$. We shall thus focus on the observables
\begin{equation}\label{eq_def_A_B}
N_t=U_t^*a^*aU_t\,,\quad \mathcal{N}_t=\frac{U_t^*Q_tU_t}{2\gamma t}\,.
\end{equation}
We note in passing that the family $(\mathcal{N}_t)_{t\ge 0}$ is commuting, as appropriate to pointer observables. This follows because of
\begin{equation*}\label{eq_fldout}
U_s^*Q_tU_s=U_t^*Q_tU_t\,,\quad (s\ge t)
\end{equation*}
by (\ref{eq_prop_U_t_astara}) and because $(Q_t)_{t\ge 0}$ is commuting.

The quantum number $n=0,1,2,\dots$ labels the states $\ket{n,\Omega}$. Their values are spaced by one, and so are the expectations of $\E{N_t}$ and $\E{\mathcal{N}_t}$, cf.\ (\ref{eq_prop_exp_var_1}, \ref{eq_prop_exp_var_3}). The issue however is as to whether the measurement of $N_t$ and, more importantly, $\mathcal{N}_t$ can be used to reliably tell apart finitely many states $n$. To this end (a) their expectations ought to remain close to $n$, and (b) their variances ought to be small w.r.t.\ unity. These two conditions will require that $\alpha$ is small and determine a time interval during which they are met for $N_t$; as for $\mathcal{N}_t$, the interval should not be too short, resulting in a further condition on $t$, $\gamma$. It ensures that the initial uncertainty of the pointer of the apparatus has been effaced. A sketch of how this conclusions are reached is as follows, with details supplied later in Sect.~\ref{section_proof_bds}.

\begin{enumerate}[leftmargin=1.9em, label=(\alph*)]
\item \label{condition_a}Expectations:
\begin{itemize}
\item Bounds
\begin{gather}
0\leq \E{N_t}-n\leq |\alpha|^2(4+\gamma^2 t)\,,\label{eq_exp_bd_1}\\
0\leq \E{\mathcal{N}_t}-n\leq |\alpha|^2(4+\frac{\gamma^2 t}{2})\,.\label{eq_exp_bd_2}
\end{gather}
\item Asymptotics in the regime $\gamma^2\ll \omega$, cf.\ (\ref{eq_wcr}):
\begin{align}
\E{N_t}-n\cong |\alpha|^2&\begin{cases} (\omega t)^2\,,\quad (t\ll \omega^{-1})\,,\\
2(1-\cos \omega t)\,,\quad (\omega^{-1}\ll t\ll \gamma^{-2})\,,\\
\gamma^2t\,,\quad (t\gg \gamma^{-2})\,,
\end{cases}\label{eq_exp_mathcalA}\\
\E{\mathcal{N}_t}-n\cong |\alpha|^2&\begin{cases} \frac{(\omega t)^2}{3}\,,\quad (t\ll \omega^{-1})\,,\\
2\,,\quad (\omega^{-1}\ll t\ll \gamma^{-2})\,,\\
\frac{\gamma^2t}{2}\,,\quad (t\gg \gamma^{-2})\,.
\end{cases}\label{eq_exp_mathcalB}
\end{align}
The last two cases match what was expected in (\ref{eq_N_T_infty_n_c}, \ref{eq_N_T_approx}). 
\end{itemize}
\end{enumerate}

The requirements set by condition (a), i.e. that the right hand sides of eqs.~(\ref{eq_exp_bd_1})-(\ref{eq_exp_mathcalB}) are $\ll 1$, are therefore 
\begin{equation}\label{eq_time_int_cond_a}
|\alpha|\ll 1\,,\quad t\ll |\alpha|^{-2}\gamma^{-2}
\end{equation}
for both observables.

\begin{enumerate}[leftmargin=1.9em, label=(\alph*)]
\setcounter{enumi}{1}
\item \label{condition_b}Variances:
\begin{itemize}
\item $N_t$: Both terms on the r.h.s.\ of (\ref{eq_prop_exp_var_2}) ought to be $\ll 1$. For both of them, this condition does not further limit the time interval (\ref{eq_time_int_cond_a}). For the second term this is even manifest, since it equals the one discussed in connection with (\ref{eq_prop_exp_var_1}).
\item $\mathcal{N}_t$: In view of the normalization of $\mathcal{N}_t$, the terms on the r.h.s.\ of (\ref{eq_prop_exp_var_4}) ought to be $\ll (\gamma t)^2$. For the last two terms, this yields the upper bounds (\ref{eq_time_int_cond_a}). The first term however sets the lower bound 
\begin{equation}\label{eq_time_int_cond_b2}
t\gg \gamma^{-2}\,.
\end{equation}
\end{itemize}
\end{enumerate}

\noindent
{\bf Summary.} The window of opportunity for the effective measurement of $n$ through $N_t$ is set by (\ref{eq_time_int_cond_a}); that for $\mathcal{N}_t$ is further restricted by (\ref{eq_time_int_cond_b2}).\\

The following result estimates the accuracy of $\mathcal{N}_t$ as an estimator for $N=a^*a$ for arbitrary initial states $\ket{\psi}$ of the oscillator.

\begin{proposition}\label{prop_estimator_process}
For any normalized $\ket{\psi}\in \mathcal{H}$ we have
\begin{equation*}
\BraKet{\psi,\Omega}{(\mathcal{N}_t-a^*a)^2}{\psi,\Omega}\leq C_1|\alpha|^2\bigl(1+\gamma^2 t\bigr)\BraKet{\psi}{2a^*a+1}{\psi}+C_2|\alpha|^4\Bigl(1+\bigl(\gamma^2 t\bigr)^2\Bigr)+\bigl(\gamma^2t\bigr)^{-1}\,.
\end{equation*}
In particular the r.h.s.\ is small for (\ref{eq_time_int_cond_a}, \ref{eq_time_int_cond_b2}), provided the excitation number is bounded.
\end{proposition}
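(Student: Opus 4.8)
\emph{Decomposition of the error.} Dividing (\ref{eq_prop_Q_t_ev_2}) by $2\gamma t$ and inserting (\ref{eq_prop_Q_t_ev_3})--(\ref{eq_prop_Q_t_ev_5}), and using that the field operators $Y_{0,t},Y_{1,t}$ commute with $a,a^*$, I would write the error as a sum of three self-adjoint pieces,
\begin{equation*}
\mathcal{N}_t-a^*a=E_1+E_2+E_3\,,
\end{equation*}
with $E_1=\tfrac{\ii\omega}{t}\bigl(\alpha Y_{1,t}a^*-\overline{\alpha}Y_{1,t}^*a\bigr)$ the term linear in $a,a^*$, $E_2=\tfrac{Q_t}{2\gamma t}$ the residual pointer noise, and $E_3=\tfrac{\omega^2|\alpha|^2}{t}Y_{0,t}$ the heating term. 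Writing $\Psi=\ket{\psi,\Omega}$, the quantity to estimate is $\lVert(E_1+E_2+E_3)\Psi\rVert^2$, which by Cauchy--Schwarz is $\le 3\bigl(\lVert E_1\Psi\rVert^2+\lVert E_2\Psi\rVert^2+\lVert E_3\Psi\rVert^2\bigr)$. The three summands will furnish the three terms of the claim, in the order first, third, second; for the last note that $3\cdot\tfrac14\le 1$, so the coefficient of $(\gamma^2t)^{-1}$ comes out as required.

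\emph{Reduction to field moments.} The noise term is exact: $E_2$ acts on $\mathcal{F}$ alone and $\BraKet{\Omega}{Q_t^2}{\Omega}=\lVert 1_{[0,t]}\rVert^2=t$, whence $\lVert E_2\Psi\rVert^2=\tfrac14(\gamma^2t)^{-1}$. For $E_1$ and $E_3$ the oscillator and field sectors factor, and everything reduces to the pure vacuum expectations $\E{Y_{1,t}^*Y_{1,t}}$ and $\E{Y_{0,t}^2}$ already met in Proposition~\ref{prop_exp_var}. These I would compute via the Brownian realization of the commuting family $(P_s)_s$ (cf.\ Remark~\ref{remark_Z_t_G_t} and the following remark): in the vacuum $\phi_s=\omega s+\gamma P_s$ has the law of $\omega s+\gamma W_s$ with $W$ a standard Brownian motion, giving the basic kernel
\begin{equation*}
\E{\ee^{\ii(\phi_{s_2}-\phi_{s_1})}}=\ee^{\ii\omega(s_2-s_1)}\ee^{-\frac{\gamma^2}{2}|s_2-s_1|}\,.
\end{equation*}
Integrating it twice yields $\omega^2\E{Z_t^*Z_t}\le C(1+\gamma^2t)$, the kernel's decay length $\gamma^{-2}$ accounting for the crossover already seen in (\ref{eq_exp_mathcalA}).

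\emph{Assembling $E_1$ and $E_3$.} From the previous bound, $\E{Y_{1,t}^*Y_{1,t}}=\lVert Y_{1,t}\Omega\rVert^2\le\bigl(\int_0^t\lVert Z_s\Omega\rVert\,\dd s\bigr)^2\le C\omega^{-2}t^2(1+\gamma^2t)$. Factorizing $E_1$, using the commutativity of $Y_{1,t}$ with its adjoint and $\lVert a\psi\rVert^2+\lVert a^*\psi\rVert^2=\BraKet{\psi}{2a^*a+1}{\psi}$, gives $\lVert E_1\Psi\rVert^2\le C_1'|\alpha|^2(1+\gamma^2t)\BraKet{\psi}{2a^*a+1}{\psi}$. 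For $E_3$ one needs the second moment $\E{Y_{0,t}^2}$; since $Y_{0,t}=\int_0^t Z_s^*Z_s\,\dd s$ is, in the Brownian picture, the non-negative variable $\int_0^t|Z_s|^2\,\dd s$, Cauchy--Schwarz in $s$ gives $\E{Y_{0,t}^2}\le t\int_0^t\E{(Z_s^*Z_s)^2}\,\dd s$, which one bounds by $C\omega^{-4}t^2(1+\gamma^2t)^2\le 2C\omega^{-4}t^2(1+(\gamma^2t)^2)$, hence $\lVert E_3\Psi\rVert^2\le C_2'|\alpha|^4(1+(\gamma^2t)^2)$. Collecting the three bounds proves the proposition.

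\emph{Main obstacle.} The only laborious step is the fourth-order field moment $\E{(Z_s^*Z_s)^2}$: it is a quadruple integral of $\E{\ee^{\ii(\phi_{r_1}-\phi_{r_2}+\phi_{r_3}-\phi_{r_4})}}$, each factor an explicit Gaussian exponential of $W$. The care required is to estimate this oscillatory integral so as to obtain $\E{(Z_s^*Z_s)^2}\le C\omega^{-4}(1+\gamma^2s)^2$ with the correct two-regime growth, and, throughout, to keep the oscillation $\ee^{\ii\omega u}$ rather than discarding it prematurely---otherwise the prefactors $\omega^2$ in $E_1,E_3$ would not be cancelled by the $\omega^{-2}$ generated at each kernel integration, and the weak-coupling scaling $\gamma^2\ll\omega$ would be lost. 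With the field moments in hand the remainder is bookkeeping.
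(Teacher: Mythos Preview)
Your proposal is correct and follows essentially the same strategy as the paper: split $\mathcal{N}_t-a^*a$ into pieces by monomial degree in $a,a^*$, apply Cauchy--Schwarz with a factor $3$, and reduce everything to the field moments $\E{Y_{1,t}^*Y_{1,t}}$, $\E{Y_{0,t}^2}$ and $\BraKet{\Omega}{Q_t^2}{\Omega}=t$; the hard input in both cases is the fourth-order estimate on $Z$, which the paper records as Lemma~\ref{lemma_var_Z}.

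The organizational differences are minor but worth noting. The paper keeps $Q_t$ together with the heating term inside $X_{0,t}$ and then, to compute $\E{X_{0,t}^2}$, exploits $A_t\ket{\Omega}=0$ to replace $Q_t$ by $-\ii P_t$ (which commutes with $Y_{0,t}$); your three-way split $E_1+E_2+E_3$ sidesteps that trick entirely. For $\E{Y_{1,t}^*Y_{1,t}}$ the paper quotes its exact evaluation (\ref{eq_lem_exp_var_5}), whereas your triangle-inequality bound $\|Y_{1,t}\Omega\|\le\int_0^t\|Z_s\Omega\|\,\dd s$ is more direct. For $\E{Y_{0,t}^2}$ the paper splits into mean-square plus variance and invokes the covariance bound (\ref{eq_lrangle_Z_s_t}); your Cauchy--Schwarz in $s$ reduces instead to the single-time fourth moment $\E{(Z_s^*Z_s)^2}$, which is the paper's (\ref{eq_llangle_Z_t}) combined with $\E{Z_s^*Z_s}^2$. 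Either route works, and both rest on the same oscillatory-integral estimate you correctly flag as the main labor.
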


We conclude this section by discussing the long time limiting regime of the dynamics. To this end we introduce a notion of convergence that allows to express the stochastic but classical nature of the limiting dynamics. It comes in three variants (i-iii). First, (i) we say that the self-adjoint  operators $X_{\varepsilon}$ on $\mathcal{F}$ tend to the $\mathbb{R}$-valued random variable $X$ on some probability space $(\tilde{\Omega}, \tilde{\mu})$ w.r.t. the vacuum $\ket{\Omega}\in \mathcal{F}$ as $\varepsilon\to 0$ (written as $X_{\varepsilon}\to_{\ket{\Omega}}X $, ($\varepsilon\to 0$)) if 
\begin{equation}\label{eq_def_conv_ltl}
\BraKet{\Omega}{f(X_{\varepsilon})}{\Omega}\to \int_{\tilde{\Omega}} f(X(\omega))\dd \tilde{\mu}(\omega)\,,\quad (\varepsilon\to 0)
\end{equation}
for every bounded continuous function $f:\mathbb{R}\to\mathbb{C}$. Second, let $C(\mathbb{R}_+,\mathbb{R})$ denote the classical Wiener space of continuous functions $\mathbb{R}_+\to \mathbb{R}$. Then (ii) is an extension to commuting families $(X_{\varepsilon,t})_{t\ge 0}$ of self-adjoint operators on the l.h.s. of (\ref{eq_def_conv_ltl}) and to stochastic processes $(X_t)_{t\ge 0}$, i.e. random variables taking values in $C(\mathbb{R}_+,\mathbb{R})$, on the r.h.s.; the functions $f$ are now $C(\mathbb{R}_+,\mathbb{R})\to\mathbb{C}$. Finally, (iii), both variants can be extended by replacing $\mathcal{F}$ with $\mathcal{H}\otimes\mathcal{F}$. The random variable $X$ and the stochastic processes $(X_t)_{t\ge 0}$, respectively, are then to be understood as multiples of the identity operator on $\mathcal{H}$; the convergence is meant in the sense of weak convergence of bounded operators on $\mathcal{H}$. Explicitly, (\ref{eq_def_conv_ltl}) is then to be read as 
\begin{equation}\label{eq_def_conv_ltl_expl}
\BraKet{\psi,\Omega}{f(X_{\varepsilon})}{\psi,\Omega}\to \int_{\tilde{\Omega}} f(X(\omega))\dd \tilde{\mu}(\omega)\braket{\psi}{\psi}\,,\quad (\varepsilon\to 0)
\end{equation}
which expresses that the limit is (a) oblivious to the quantum state $\ket{\psi}\in\mathcal{H}$, ($\Vert \psi\Vert=1$) of the oscillator, and (b) indeed given by a classical random process.

In view of the linearly growing expectation of $N_t$ at large times $t$, cf.~(\ref{eq_exp_mathcalA}), a non-trivial scaling limit ought to be given by $\varepsilon N_{t/\varepsilon}$, ($\varepsilon\to 0$), as the following result confirms. 
\begin{proposition}[Long time limit]\label{prop_ltl}
Let $\kappa=\omega \alpha\gamma /c$, where
\begin{equation}\label{eq:c}
  c=\ii \omega-\gamma^2/2\,.
\end{equation}
Then 
\begin{gather}
\varepsilon \left.N_t\right|_{t\to \varepsilon^{-1}t}\to_{\ket{\Omega}} |\kappa|^2|B_t|^2\,,\label{ltl_N}\\
\varepsilon \left.\mathcal{N}_t\right|_{t\to \varepsilon^{-1}t}\to_{\ket{\Omega}} |\kappa|^2\frac{1}{t}\int_0^t|B_s|^2\dd s\,,\label{ltl_mcN}
\end{gather}
as $\varepsilon\to 0$, where $B_t=\re B_t+\ii \im B_t$ is complex Brownian motion. Its components are independent and distributed as $W_t/\sqrt{2}$, where $W_t$ is real Brownian motion. (This normalization ensures that $|\dd B_t|^2=\dd t$, as in the real case.) The two limits are in the sense (i) and (ii), respectively, as extended in (iii).  The limits hold for $t>0$ and for $t\in I$, respectively, where $I\subset \mathbb{R}_+$ is any compact interval.
\end{proposition}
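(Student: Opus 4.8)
The plan is to reduce both limits to a single functional central limit theorem for the rescaled process $\sqrt{\varepsilon}\,Z_{t/\varepsilon}$, and then to assemble the two observables from the explicit formulas of Propositions~\ref{prop_HEQ}. Throughout I would work in the vacuum, where by the remark following Proposition~\ref{prop_sol} the commuting family $(P_s)$ is realized as Brownian motion $W_s$, so that $Z_t=\int_0^t\ee^{\ii\phi_s}\,\dd s$ with $\phi_s=\omega s+\gamma W_s$ becomes an ordinary complex-valued process; the operators $N_t$ and $\mathcal{N}_t$ are functions of this family, together with bounded perturbations carrying $a,a^*$.

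The core identity comes from It\^o calculus. Setting $M_s=\ee^{\ii\phi_s}$ (so that $|M_s|=1$) one finds $\dd M_s=cM_s\,\dd s+\ii\gamma M_s\,\dd W_s$ with $c=\ii\omega-\gamma^2/2$; integrating and using $\dd Z_s=M_s\,\dd s$ yields
\begin{equation*}
Z_t=\frac{1}{c}\Bigl(M_t-1-\ii\gamma\,\mathcal{M}_t\Bigr)\,,\qquad \mathcal{M}_t:=\int_0^t M_s\,\dd W_s\,.
\end{equation*}
Since $M_t-1$ is bounded, the whole diffusive content of $Z_t$ sits in the continuous complex martingale $\mathcal{M}_t$, whose integrand lies on the unit circle. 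This is the object to which I would apply a martingale invariance principle: the rescaled martingale $\sqrt{\varepsilon}\,\mathcal{M}_{t/\varepsilon}$ has real and imaginary quadratic variations $\varepsilon\int_0^{t/\varepsilon}\cos^2\phi_s\,\dd s$ and $\varepsilon\int_0^{t/\varepsilon}\sin^2\phi_s\,\dd s$, and cross-variation $\varepsilon\int_0^{t/\varepsilon}\cos\phi_s\sin\phi_s\,\dd s$. Because the fast phase decorrelates, $\E{\ee^{2\ii\phi_s}}=\ee^{2\ii\omega s-2\gamma^2 s}\to0$, these converge in probability to $t/2$, $t/2$ and $0$. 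The martingale CLT then gives $\sqrt{\varepsilon}\,\mathcal{M}_{t/\varepsilon}\Rightarrow B_t$ with $B_t$ standard complex Brownian motion (components of variance $t/2$, matching the stated normalization), whence $\sqrt{\varepsilon}\,\mathcal{Z}_{t/\varepsilon}=\ii\omega\alpha\sqrt{\varepsilon}\,Z_{t/\varepsilon}\Rightarrow-\ii\omega\alpha\tfrac{\ii\gamma}{c}B_t=\kappa B_t$ as processes on any compact interval, the $M_t-1$ term being $O(\sqrt{\varepsilon})$.

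With this in hand the two statements assemble directly. For \eqref{ltl_N}, expand $\varepsilon N_{t/\varepsilon}=\varepsilon|\mathcal{Z}_{t/\varepsilon}|^2+\varepsilon(a^*\mathcal{Z}_{t/\varepsilon}+\mathcal{Z}_{t/\varepsilon}^*a)+\varepsilon a^*a$ via \eqref{eq_prop_Q_t_ev_1}; the leading term equals $|\sqrt{\varepsilon}\,\mathcal{Z}_{t/\varepsilon}|^2\to|\kappa|^2|B_t|^2$ by the continuous mapping theorem at fixed $t$ (sense (i)), while the cross terms and $\varepsilon a^*a$ are $O(\sqrt{\varepsilon})$ and $O(\varepsilon)$ on $\ket{\psi,\Omega}$ and hence negligible, which simultaneously produces the obliviousness to $\ket{\psi}$ required for (iii). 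For \eqref{ltl_mcN}, I would start from Proposition~\ref{prop_HEQ}, so that $\varepsilon\mathcal{N}_{t/\varepsilon}=\tfrac{\varepsilon^2}{2\gamma t}U_{t/\varepsilon}^*Q_{t/\varepsilon}U_{t/\varepsilon}$ reduces, after discarding the $a^*a$, the $X_{1}$-cross and the $Q_{t/\varepsilon}$ contributions (all $o(1)$ on $\ket{\psi,\Omega}$ by the same power counting), to $\tfrac{\omega^2|\alpha|^2}{t}\int_0^t\varepsilon|Z_{u/\varepsilon}|^2\,\dd u$; applying the continuous path functional $Z\mapsto\int_0^t|Z_u|^2\,\dd u$ to the process-level limit delivers $\tfrac{|\kappa|^2}{t}\int_0^t|B_u|^2\,\dd u$ (sense (ii)).

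The main obstacle is the martingale invariance principle in precisely the form needed, which has two genuinely non-trivial ingredients: (a) promoting the mean decay $\E{\ee^{2\ii\phi_s}}\to0$ to convergence in probability of the quadratic variations, which requires a decay-of-correlations estimate for $\cos2\phi$ and is where the renewal/independent-increment structure of the remark enters; and (b) upgrading convergence of finite-dimensional distributions to tightness in $C(I,\mathbb{C})$, indispensable for \eqref{ltl_mcN} since its limit is a functional of the whole Brownian path. A secondary but real point is translating the resulting classical weak convergence into the paper's operator convergence: one must verify that the functional calculus $f(\varepsilon N_{t/\varepsilon})$ and $f(\varepsilon\mathcal{N}_{t/\varepsilon})$ of the full non-commuting operators is controlled by that of the dominant commuting part up to an error that vanishes on $\ket{\psi,\Omega}$ (e.g.\ via a Duhamel estimate on the perturbation), so that the vacuum expectations converge to the stated classical averages independently of $\ket{\psi}$.
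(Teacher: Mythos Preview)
Your overall architecture matches the paper's: both hinge on the It\^o identity $Z_t=c^{-1}(\ee^{\ii\phi_t}-1)-\ii\gamma c^{-1}\int_0^t\ee^{\ii\phi_s}\dd W_s$, reduce everything to an invariance principle for the martingale piece, and then assemble $N_t$ and $\mathcal{N}_t$ from the explicit formulas of Proposition~\ref{prop_HEQ} via continuous mapping. The differences are in the two hard steps. For the invariance principle you propose the martingale CLT route (show the rescaled quadratic and cross variations converge in probability to $t/2,t/2,0$, then cite a functional martingale CLT); the paper instead works directly on $\mathbb{E}[f(X_t)]$ by It\^o's formula, kills the spurious $\ee^{\pm2\ii\psi_s}$ terms by a second integration by parts, identifies the limiting characteristic function as the fixed point of a contraction, and gets tightness from Kolmogorov moment bounds. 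Your route is more textbook, the paper's is self-contained; note that your step~(a) will in practice require exactly the same integration-by-parts estimate the paper uses, since one needs $\varepsilon\int_0^{t/\varepsilon}\ee^{2\ii\phi_s}\dd s\to 0$ in probability, not just in mean. For the non-commutativity bridge your Duhamel idea is the right instinct but underspecified: the paper makes this explicit by (i) constructing the functional calculus for the commuting family $(\mathcal{M}_t)$ on path space via Kolmogorov continuity, (ii) deriving an exact factorization $\ee^{\ii\lambda U_t^*Q_tU_t}=G_{\lambda,t}(P)\,\ee^{\ii\lambda F_t(P)}\,\ee^{\ii\lambda Q_t}$ through the drifted field $\hat P_{\lambda,s}=P_s-2\lambda s$, and (iii) showing that after diffusive rescaling the factor $G\to 1$ and $\ee^{\ii\lambda\varepsilon^{3/2}Q_t}\to 1$ strongly. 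Your ``discard $Q_{t/\varepsilon}$ by power counting'' is correct at the level of the operator sum but does not by itself control $f$ of that sum for general bounded $f$; the exponential factorization is precisely what fills that gap.
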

In plain terms the proposition states that at large times no trace is
left of the initial state of the oscillator  and that the excitation number follows the square displacement of a diffusive motion of constant $|\kappa|^2/4$. 
\begin{remark}
\begin{enumerate}
\item The different kinds of limit are due to the fact that only the operators seen on the l.h.s. of (\ref{ltl_mcN}) form a commuting family. 
\item The characteristic function of $|W_t|^2$ (Laplace transform of its measure) is $(1+2\lambda t)^{-1/2}$, that of $t^{-1}\int_0^t |W_s|^2\dd s$ is $(\cosh\sqrt{2\lambda t})^{-1/2}$. The first is immediate, the second was derived in \cite{K49}, where that of $\int_0^t|W_s|^2\dd s$ was found to be $(\cosh\sqrt{2\lambda} t)^{-1/2}$. The characteristic function of the complex counterparts is then found by the replacements $\chi(\lambda)\to \chi(\lambda/2)^2$ to be $(1+\lambda t)^{-1}$ and $(\cosh \sqrt{\lambda} t)^{-1}$, respectively.
\end{enumerate}
\end{remark}

\end{section}

\begin{section}{Proofs of identities}

\begin{proof}[Proof of Prop. \ref{prop_sol}]
We first observe that, according to the claim (\ref{eq_defs_U_t}), $\mathcal{Z}_t$ and $G_t$ are differentiable, whereas $\phi_t$ is a stochastic integral. We shall thus regard (\ref{eq_def_U_t}) as a generic ansatz of that type for a solution of (\ref{eq_U_t}). Three preliminaries are in order. The first one is 
\begin{equation*}
\frac{\dd }{\dd t}D(z_t)=\left(\dot{z}_ta^*-\dot{\overline{z}}_t a+\ii \im(\dot{\overline{z}}_tz_t)\right)D(z_t)\,,
\end{equation*}
where $z_t$ takes values in $\mathbb{C}$ as in (\ref{eq_dop}) or is promoted to a differentiable multiplication operator in the variables $(P_s)_{0\leq s\leq t}$, as in Remark \ref{remark_Z_t_G_t}. That claim follows by differentiating the equation
$$
D(z_s) = \ee^{\ii \im(\overline{z}_sz_t)}D(z_s-z_t) D(z_t)
$$
by $s$ at the point $s=t$ and using
$$
\left.\frac{\dd }{\dd s}D(z_s-z_t)\right|_{s=t}=\dot{z}_ta^*-\dot{\overline{z}}_t a\,.
$$
The second preliminary is 
\begin{equation*}
\ee^{-\ii\phi_ta^*a}\left(\dot{z}_ta^*-\overline{z}_ta\right) \ee^{\ii\phi_ta^*a}=\dot{z}_t\ee^{-\ii\phi_t}a^*-\overline{z}_t\ee^{\ii\phi_t}a\,,
\end{equation*}
while the last one is
\begin{equation*}
\dd \left(\ee^{-\ii \phi_t a^*a}\right)=\Bigl(-\ii a^*a\dd \phi_t-\frac{1}{2}(a^*a)^2\left(\dd \phi_t\right)^2\Bigr)\ee^{-\ii\phi_t a^*a}
\end{equation*}
and follows from It\^o's lemma \cite{PH84}. So prepared we differentiate (\ref{eq_def_U_t}) and obtain
\begin{equation*}
\dd U_t=\Bigl(-\ii a^*a \dd \phi_t-\frac{1}{2}(a^*a)^2\left(\dd \phi_t\right)^2+\bigl( a^*\dot{\mathcal{Z}}_t\ee^{-\ii \phi_t}- a\dot{\mathcal{Z}}_t^*\ee^{\ii\phi_t}+\ii \im(\dot{\mathcal{Z}}_t^*\mathcal{Z}_t)-\ii \dot{G}_t\bigr)\dd t\Bigr)U_t\,.
\end{equation*}
Comparing coefficients of $1$, $a^*$ (or $a$), $a^*a$ and $(a^*a)^2$ with those in (\ref{eq_def_H_Gamma}, \ref{eq_U_t}) we obtain
\begin{gather*}
\im\left(\dot{\mathcal{Z}}_t^*\mathcal{Z}_t\right)-\dot{G}_t=0\,,\\
\dot{\mathcal{Z}}_t\ee^{-\ii \phi_t}=\ii \omega \alpha\,,\\
-\ii\dd\phi_t=-\ii\omega\dd t-\ii\gamma\dd P_t\,,\\
-\frac{1}{2}\left(\dd \phi_t\right)^2=-\frac{\gamma^2}{2}\dd t\,,
\end{gather*}
as well as $\phi_0=0$, $\mathcal{Z}_0=0$, $G_0=0$. The last of the four equations is a consequence of the third, which together with the second is solved by the expressions (\ref{eq_defs_U_t}) for $\phi_t$ and $\mathcal{Z}_t$; and so is the first one by that for $G_t$, once it is restated as
\begin{equation*}
\dot{G}_t=\omega^2|\alpha|^2\int_0^t\im\ee^{\ii(\phi_s-\phi_t)}\dd s\,.
\end{equation*}
\end{proof}
\begin{proof}[Proof of Prop. \ref{prop_dU}]
By the It\^o rule $\dd (MM')=(\dd M)M'+M(\dd M')+(\dd M)(\dd M')$ we obtain 
\begin{multline*}
\dd_s\left(U_s^*Q_tU_s\right)=\\U_s^*\bigl(\left(\ii[H,Q_t]-\left\{\Gamma^2/2,Q_t\right\}\right)\dd s+\ii\left(\Gamma(\dd P_s)Q_t-Q_t(\dd P_s)\Gamma\right)
+\Gamma(\dd P_s)Q_t(\dd P_s)\Gamma\bigr) U_s\,.
\end{multline*}
Let $0\le s< t$. In view of $[H,Q_t]=[\Gamma,Q_t]=[\Gamma,P_s]=0$ and of $[\dd P_s,Q_t]=-2\ii \dd s$, $(\dd P_s)^2=\dd s$, $(\dd s)(\dd P_s)=0$ we get
\begin{equation*}
-\left\{\Gamma^2/2,Q_t\right\}\dd s +\Gamma(\dd P_s)Q_t(\dd P_s)\Gamma=0
\end{equation*}
and thus
\begin{equation*}
\dd_s \left(U_s^*Q_tU_s\right)=2U_s^*\Gamma U_s\dd s\,,
\end{equation*}
as claimed. For $s\ge t$ the only change is by $[\dd P_s,Q_t]=0$.
\end{proof}
\begin{proof}[Proof of Prop. \ref{prop_HEQ}]
We have $U_t^*a^*aU_t=D(\mathcal{Z}_t)^*a^*aD(\mathcal{Z}_t)$ by (\ref{eq_def_U_t}) and 
\begin{equation*}
D(\mathcal{Z}_t)^*aD(\mathcal{Z}_t)=a+\mathcal{Z}_t
\end{equation*}
because $[a,\mathcal{Z}_ta^*-\mathcal{Z}_t^*a]=\mathcal{Z}_t$, whence Eq. (\ref{eq_prop_Q_t_ev_1}) follows. Furthermore, expanding the brackets in (\ref{eq_prop_Q_t_ev_1}) and integrating (\ref{eq_prop_U_t_astara}) yields Eq. (\ref{eq_prop_Q_t_ev_2}).
\end{proof}
\begin{proof}[Proof of Prop. \ref{prop_exp_var}]
Let us generically denote by $V$ any monomial $(a^*)^ma^n$ with $m\not=n$, as well as any linear combinations thereof. Then $\E{V}=0$. We have 
\begin{equation*}
U_t^*a^*aU_t=a^*a+\mathcal{Z}_t^*\mathcal{Z}_t+V\,,
\end{equation*}
with $V=a\mathcal{Z}_t^*+a^*\mathcal{Z}_t$, from which (\ref{eq_prop_exp_var_1}) follows, as well as 
\begin{equation*}
(U_t^*a^*aU_t)^2=\left(a^*a+\mathcal{Z}_t^*\mathcal{Z}_t\right)^2+\mathcal{Z}_t^*\mathcal{Z}_t(a^*a+aa^*)+V\,.
\end{equation*}
Thus
\begin{align*}
\EE{(U_t^*a^*aU_t)^2}&= \EE{\left(a^*a+\mathcal{Z}_t^*\mathcal{Z}_t\right)^2}+(2n+1)\E{\mathcal{Z}_t^*\mathcal{Z}_t}\\
&= \EE{(\mathcal{Z}_t^*\mathcal{Z}_t)^2}+(2n+1)\E{\mathcal{Z}_t^*\mathcal{Z}_t}\,,
\end{align*}
where in the last equality we used that $a^*a$ has no variance in $\ket{n}$, thus amounting to a shift of $\mathcal{Z}_t^*\mathcal{Z}_t$. So (\ref{eq_prop_exp_var_2}) follows by (\ref{eq_defs_U_t}).

The other observable is dealt with similarly. We have 
\begin{equation*}
U_t^*Q_tU_t=X_{2,t}a^*a+X_{0,t}+V\,,
\end{equation*}
from which (\ref{eq_prop_exp_var_3}) follows by (\ref{eq_prop_Q_t_ev_3}, \ref{eq_prop_Q_t_ev_5}) and $\E{Q_t}=0$, as well as
\begin{equation*}
(U_t^*Q_tU_t)^2=\left(X_{2,t}a^*a+X_{0,t}\right)^2+X_{1,t}^*X_{1,t}(a^*a+aa^*)+V\,.
\end{equation*}
Again, 
\begin{equation*}
\EE{(U_t^*Q_tU_t)^2}=\EE{X_{0,t}^2}+(2n+1)\E{X_{1,t}^*X_{1,t}}\,,
\end{equation*}
where we used that $X_{2,t}a^*a$ has no variance in $\ket{n,\Omega}$, cf.\ (\ref{eq_prop_Q_t_ev_3}). The last term on the r.h.s.\ yields the corresponding one in (\ref{eq_prop_exp_var_4}) by (\ref{eq_prop_Q_t_ev_4}). It remains to discuss the first term:
\begin{equation}\label{eq_pf_var_X_0}
\EE{X_{0,t}^2}=\E{P_t^2}+4\gamma^2 \omega^4 |\alpha|^4\EE{Y_{0,t}^2}\,.
\end{equation}
To see this, we observe that $Q_t+\ii P_t=2A_t$ and $A_t\ket{\Omega}=0$. Thus $X_{0,t}\ket{\Omega}=\tilde{X}_{0,t}\ket{\Omega}$, where $\tilde{X}_{0,t}$ is obtained from $X_{0,t}$ in (\ref{eq_prop_Q_t_ev_5}) by replacing $Q_t$ with $-\ii P_t$. Thus 
\begin{equation}\label{eq_pf_exp_X0t2}
\E{X_{0,t}^2}= \E{\tilde{X}_{0,t}^*\tilde{X}_{0,t}}= \E{P_t^2}+4\gamma^2 \omega^4 |\alpha|^4\E{ Y_{0,t}^2}
\end{equation}
and (\ref{eq_pf_var_X_0}) follows. Finally $\E{P_t^2}=t$.
\end{proof}
\end{section}

\begin{section}{Proofs of bounds}\label{section_proof_bds}
We shall supply the details leading to the conclusions given at the end of Sect.~\ref{section_model_results}. We observe that by the r.h.s.\ of (\ref{eq_prop_exp_var_1}-\ref{eq_prop_exp_var_4}) we are left with expectations and variances of operators $A$ on $\mathcal{F}$ only, i.e.\ $\E{A}=\BraKet{\Omega}{A}{\Omega}$. The expectations are computed as follows. 

\begin{lemma}\label{lemma_exp_FS}
Let $\mathcal{R}_n(f)(t)$ (also written $\mathcal{R}_n(f(t))$ with slight abuse of notation) be the remainder of the $n$-th Taylor expansion in $t=0$. We set $c=\ii \omega-\gamma^2/2$, as in (\ref{eq:c}). Then 
\begin{gather}
\E{\ee^{\ii\phi_t}}=\ee^{ct}\,,\label{eq_lem_exp_var_0}\\
\langle Z_t\rangle= c^{-1}\mathcal{R}_0(\ee^{ct})\,,\label{eq_lem_exp_var_1}\\
\langle Z_t^*Z_t\rangle= c^{-2}\mathcal{R}_1(\ee^{ct})+c.c.\,,\label{eq_lem_exp_var_2}\\
\langle Y_{1,t}\rangle=c^{-2}\mathcal{R}_1(\ee^{ct})\,,\label{eq_lem_exp_var_3}\\
\langle Y_{0,t}\rangle=  c^{-3}\mathcal{R}_2(\ee^{ct})+c.c.\,,\label{eq_lem_exp_var_4}\\
\langle Y_{1,t}^*Y_{1,t}\rangle=  c^{-4}\mathcal{R}_3((ct-1)\ee^{ct})+c.c.\,.\label{eq_lem_exp_var_5}
\end{gather}
\end{lemma}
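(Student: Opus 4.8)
The plan is to reduce every identity to a vacuum expectation over the field $\mathcal{F}$ and then to a deterministic iterated time integral, the whole computation resting on two elementary building blocks: the one-time characteristic function (\ref{eq_lem_exp_var_0}) and the two-time function $\E{\ee^{\ii(\phi_{s_2}-\phi_{s_1})}}$.

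First I would establish (\ref{eq_lem_exp_var_0}). Writing $\phi_t=\omega t+\gamma P_t$ we have $\ee^{\ii\phi_t}=\ee^{\ii\omega t}\ee^{\ii\gamma P_t}$, and $\ee^{\ii\gamma P_t}$ is a Weyl operator on $\mathcal{F}$ (a displacement along the real multiple $\gamma 1_{[0,t]}$), whose vacuum expectation is $\ee^{-\gamma^2 t/2}$ since $\Vert 1_{[0,t]}\Vert^2=t$. Equivalently, identifying $P_t$ with Brownian motion as discussed after Proposition \ref{prop_sol}, this is just the Gaussian characteristic function $\E{\ee^{\ii\gamma P_t}}=\ee^{-\gamma^2 t/2}$. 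Either way $\E{\ee^{\ii\phi_t}}=\ee^{(\ii\omega-\gamma^2/2)t}=\ee^{ct}$. The same argument applied to the increment $\phi_{s_2}-\phi_{s_1}$, which is Gaussian with mean $\omega(s_2-s_1)$ and variance $\gamma^2|s_2-s_1|$, gives the two-time function
\begin{equation*}
\E{\ee^{\ii(\phi_{s_2}-\phi_{s_1})}}=
\begin{cases}\ee^{c(s_2-s_1)}\,,&(s_1\le s_2)\,,\\ \ee^{\overline{c}(s_1-s_2)}\,,&(s_1> s_2)\,.\end{cases}
\end{equation*}

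With these in hand the single-integral identities are immediate: since the $(\ee^{\ii\phi_s})_s$ commute we may interchange expectation and integration, so $\langle Z_t\rangle=\int_0^t\ee^{cs}\dd s=c^{-1}(\ee^{ct}-1)=c^{-1}\mathcal{R}_0(\ee^{ct})$, which is (\ref{eq_lem_exp_var_1}). The engine for the rest is the elementary rule $\int_0^t\mathcal{R}_{n-1}(\ee^{cs})\dd s=c^{-1}\mathcal{R}_n(\ee^{ct})$ (with $\mathcal{R}_{-1}=\mathrm{id}$), since integrating the exponential yields $c^{-1}$ while the integrated monomials rebuild the next Taylor polynomial; applying it once more gives (\ref{eq_lem_exp_var_3}). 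For the quantities quadratic in $\ee^{\ii\phi}$, namely $\langle Z_t^*Z_t\rangle$ and $\langle Y_{0,t}\rangle$, I would write the double time integral over $[0,t]^2$ and split it along the diagonal $s_1=s_2$ into the two simplices; on one the integrand is $\ee^{c(s_2-s_1)}$ and on the other its conjugate, which produces the structure $(\text{integral})+\text{c.c.}$ On the simplex $s_1\le s_2$ the substitution $u=s_2-s_1$ followed by the two outer integrations gives exactly $c^{-2}\mathcal{R}_1(\ee^{ct})$, yielding (\ref{eq_lem_exp_var_2}); integrating that over $s$ once more, via the same rule, gives (\ref{eq_lem_exp_var_4}).

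The one genuinely laborious step is (\ref{eq_lem_exp_var_5}), a fourfold integral of $\E{\ee^{\ii(\phi_{r_2}-\phi_{r_1})}}$ over $r_1\in[0,s_1]$, $r_2\in[0,s_2]$ and $s_1,s_2\in[0,t]$. The obstacle is that the exponent's sign, i.e.\ whether the integrand carries $c$ or $\overline{c}$, is governed by the ordering of the \emph{inner} variables $r_1,r_2$ and is not fixed by the ordering of the outer ones, so a naive split over $s_1\lessgtr s_2$ fails. The clean route is to first evaluate the inner double integral $\langle Z_{s_1}^*Z_{s_2}\rangle$ as an explicit function of $(s_1,s_2)$ by the diagonal splitting above, and only then integrate the result over $(s_1,s_2)\in[0,t]^2$. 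Carefully collecting the polynomial corrections that accompany $\ee^{ct}$ at each integration and recognizing the outcome as $c^{-4}\mathcal{R}_3((ct-1)\ee^{ct})+\text{c.c.}$ is the only part requiring real bookkeeping; as a check, the small-$t$ behavior $Z_s\sim s$ gives $Y_{1,t}\sim t^2/2$ and hence $\langle Y_{1,t}^*Y_{1,t}\rangle\sim t^4/4$, which matches the leading term of the claimed expression.
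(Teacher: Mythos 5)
Your proposal is correct, but it takes a genuinely different route from the paper. The paper never writes down a two-time correlation function in closed form: after establishing (\ref{eq_lem_exp_var_0}) exactly as you do (via normal-ordering of the Weyl operator $\ee^{\gamma(A_t-A_t^*)}$), it differentiates each expectation in $t$ and reduces everything to a cascade of first-order ODEs, solved by the recursion $\mathcal{R}_n(f)(0)=0$, $\mathcal{R}_n(f)'=\mathcal{R}_{n-1}(f')$. The two ingredients that make this work are the stationarity-of-increments identity $\ee^{-\ii\phi_t}Z_t\overset{d}{=}Z_t^*$ (so that $\frac{\dd}{\dd t}\E{Z_t^*Z_t}=\E{Z_t}+\E{Z_t^*}$ with no double integral ever appearing) and the martingale-type property $\frac{\dd}{\dd t}\E{F_s\ee^{\ii\phi_t-ct}}\big|_{t=s}=0$, which handles the mixed term $\E{Y_{1,t}^*\ee^{\ii\phi_t}}$ in the second derivative of $\E{Y_{1,t}^*Y_{1,t}}$. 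Your approach instead computes the two-point function $\E{\ee^{\ii(\phi_{s_2}-\phi_{s_1})}}$ (correctly: $\ee^{c(s_2-s_1)}$ or its conjugate according to the ordering) and evaluates the iterated integrals directly by splitting along the diagonal; this is more elementary and works fine for (\ref{eq_lem_exp_var_1})--(\ref{eq_lem_exp_var_4}), and your leading-order check $t^4/4$ for (\ref{eq_lem_exp_var_5}) is consistent with $\mathcal{R}_3((ct-1)\ee^{ct})=(ct)^4/8+O(t^5)$. What the paper's method buys is precisely the avoidance of the fourfold bookkeeping you flag as the laborious step, and the same differentiation machinery is then reused verbatim for the variance estimates of Lemma \ref{lemma_var_Z}; what your method buys is that each identity is an explicit closed-form computation with no distributional identities needed. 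If you carry out your route, do spell out the integration of $\E{Z_{s_1}^*Z_{s_2}}$ over the outer square, since that is where the specific combination $(ct-1)\ee^{ct}$ (rather than $\ee^{ct}$) has to emerge.
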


The variances, on the other hand, are estimated as follows. We set
\begin{equation*}
\langle A ; B\rangle=\langle AB\rangle-\langle A\rangle \langle B\rangle\,,\quad \llangle A^2\rrangle=\langle A ; A\rangle\,.
\end{equation*}
\begin{lemma}\label{lemma_var_Z} 
\begin{gather}
\left|\langle  Z_t^*Z_t ; Z_t\rangle\right|\leq \frac{C\gamma^2}{\omega^3}t\,,\label{eq_lrangle_Z_t}\\
\EE{(Z_t^*Z_t)^2}\leq \frac{\gamma^2 }{\omega^4}(2\gamma^2t^2+Ct)\,,\label{eq_llangle_Z_t}
\end{gather}
and more generally 
\begin{equation}
\left|\langle Z_s^*Z_s ; Z_t^*Z_t\rangle\right|\leq \frac{\gamma^2 }{\omega^4}(2\gamma^2s^2+Cs)\label{eq_lrangle_Z_s_t}
\end{equation}
for $0\leq s\leq t$. In particular
\begin{equation}\label{eq_var_Y_0_t}
\llangle  Y_{0,t}^2\rrangle\leq \frac{\gamma^2 }{3\omega^4}\bigl(\gamma^2t^4+Ct^3\bigr)\,,
\end{equation}
where $C$ is a numerical constant changing from line to line.
\end{lemma}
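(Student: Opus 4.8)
The plan is to reduce everything to classical Gaussian expectations and to exploit that every covariance in the statement vanishes at $\gamma=0$, so carries an explicit factor $\gamma^2$. Recall from the remark around (\ref{eq_pf_Z_t}) that $P_t$ may be identified with Brownian motion $W_t$; set $\phi_t=\omega t+\gamma W_t$ and $u_t:=\ee^{\ii\phi_t}$, so that by It\^o $\dd u_t=c\,u_t\,\dd t+\ii\gamma u_t\,\dd W_t$ with $c$ as in (\ref{eq:c}), while $\dd Z_t=u_t\,\dd t$ carries no $\dd W$. Since $(P_s)$ commutes (Remark~\ref{remark_Z_t_G_t}), $\langle\cdot\rangle=\BraKet{\Omega}{\cdot}{\Omega}$ is just the Gaussian expectation, and $\langle u_t^p\bar u_t^q\rangle=\ee^{\lambda_{p-q}t}$ with $\lambda_k=\ii\omega k-\gamma^2k^2/2$. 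The single algebraic fact driving the whole lemma is that $\lambda$ is quadratic in $k$: merging two phase labels of frequencies $j,k$ produces the defect $\lambda_{j+k}-\lambda_j-\lambda_k=-\gamma^2 jk$, and this is the origin of every factor $\gamma^2$.

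First I would treat the one-time covariance (\ref{eq_lrangle_Z_t}) and the case $s=t$ of (\ref{eq_lrangle_Z_s_t}), namely (\ref{eq_llangle_Z_t}). For these I set up the triangular moment hierarchy $M^{p,q}_{m,n}(t):=\langle u_t^p\bar u_t^q Z_t^m(Z_t^*)^n\rangle$. It\^o's rule together with $\dd Z_t=u_t\,\dd t$ (no cross terms) gives $\tfrac{\dd}{\dd t}M^{p,q}_{m,n}=\lambda_{p-q}M^{p,q}_{m,n}+m\,M^{p+1,q}_{m-1,n}+n\,M^{p,q+1}_{m,n-1}$, closed by $|u_t|=1$, i.e. $M^{p+1,q+1}_{0,0}\equiv M^{p,q}_{0,0}=\ee^{\lambda_{p-q}t}$. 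Integrating from the bottom up, each step $\int_0^t\ee^{\lambda_k(t-r)}(\cdots)\dd r$ against a source oscillating at frequency $k'\neq k$ contributes a bounded factor $|\lambda_{k'}-\lambda_k|^{-1}\lesssim\omega^{-1}$, whereas a resonant source ($k'=k$, in particular the neutral mode $k=k'=0$ where $\lambda_0=0$) contributes a growing factor $\sim t$. Forming the covariance combinations $M^{0,0}_{2,1}-M^{0,0}_{1,1}M^{0,0}_{1,0}$ and $M^{0,0}_{2,2}-(M^{0,0}_{1,1})^2$ cancels the $\gamma$-independent parts and all but the intended powers of the neutral-mode growth: what survives is $O(\gamma^2)$, with one net power of $t$ and three off-resonant integrations (hence $\omega^{-3}$) in (\ref{eq_lrangle_Z_t}), and with the two pieces $2\gamma^2t^2$ (two collisions, four integrations, $\omega^{-4}$) and $Ct$ in (\ref{eq_llangle_Z_t}). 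I would pin down the sharp leading constant $2$ in $2\gamma^2t^2$ directly, while the subleading $O(t)$ only needs the crude bounds $|\lambda_{k'}-\lambda_k|^{-1}\le(\omega|k'-k|)^{-1}$ and $|\ee^{-\gamma^2\tau}-1|\le\gamma^2\tau$.

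For the genuinely two-time estimate (\ref{eq_lrangle_Z_s_t}) with $s\le t$ I would use the renewal identity (\ref{eq_pf_Z_t}): writing $Z_t=Z_s+u_s\hat Z_{t-s}$ with $\hat Z_{t-s}$ independent of $(\phi_r)_{0\le r\le s}$ and $|u_s|=1$, one gets $Z_t^*Z_t=Z_s^*Z_s+2\re(Z_s^*u_s\hat Z_{t-s})+|\hat Z_{t-s}|^2$. Taking the covariance against the real, $(\phi_r)_{r\le s}$-measurable quantity $Z_s^*Z_s$, the $|\hat Z_{t-s}|^2$ term drops by independence and one is left with $\langle Z_s^*Z_s;Z_t^*Z_t\rangle=\llangle(Z_s^*Z_s)^2\rrangle+2\re\bigl[\langle Z_{t-s}\rangle\,\langle Z_s^*Z_s;Z_s^*u_s\rangle\bigr]$. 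The first term is exactly (\ref{eq_llangle_Z_t}) at time $s$, supplying the leading $2\gamma^2s^2$; the prefactor is bounded, $|\langle Z_{t-s}\rangle|=|c^{-1}(\ee^{c(t-s)}-1)|\le2/\omega$, and $\langle Z_s^*Z_s;Z_s^*u_s\rangle$ is a one-time covariance of the same degree-three type as (\ref{eq_lrangle_Z_t}), hence $\le C\gamma^2 s/\omega^3$ by the same hierarchy computation; together these fit into $\tfrac{\gamma^2}{\omega^4}(2\gamma^2s^2+Cs)$. Finally (\ref{eq_var_Y_0_t}) is immediate: by bilinearity $\llangle Y_{0,t}^2\rrangle=\int_0^t\!\!\int_0^t\langle Z_{s_1}^*Z_{s_1};Z_{s_2}^*Z_{s_2}\rangle\,\dd s_1\dd s_2=2\int_0^t\!\!\int_0^{s_2}(\cdots)\,\dd s_1\dd s_2$, and inserting (\ref{eq_lrangle_Z_s_t}) and integrating the polynomial $2\gamma^2 s_1^2+Cs_1$ twice yields $\tfrac{\gamma^2}{3\omega^4}(\gamma^2t^4+Ct^3)$.

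The main obstacle is the middle step: organizing the cancellations in the degree-three and degree-four covariance combinations so that the $\gamma$-independent contributions and the spurious powers of neutral-mode growth drop out cleanly, leaving precisely the advertised $\gamma^2\times(\text{polynomial in }t)/\omega^{3\text{ or }4}$. The conceptual content is light—every $\gamma^2$ is the one defect $-\gamma^2 jk$ and every $\omega^{-1}$ is one off-resonant integration—but checking that nothing of larger order survives (in particular that the leading variance term is the two-collision $2\gamma^4t^2/\omega^4$), and fixing the sharp constant $2$, is where the bookkeeping must be done with care.
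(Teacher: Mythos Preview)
Your strategy is correct and is essentially the paper's own approach: the paper sets up the same differentiation hierarchy for the covariances (its Lemmas~\ref{lemma_aux_var_Z_1}--\ref{lemma_aux_var_Z_2} are exactly your $M^{p,q}_{m,n}$ recursion written out term by term, with the $(1-\ee^{-\gamma^2 s})$ factors in $g_\pm$ encoding your defect $\lambda_{j+k}-\lambda_j-\lambda_k=-\gamma^2 jk$), then integrates and splits into secular versus oscillatory pieces, which is your neutral-mode versus off-resonant distinction; for the two-time covariance and for $\llangle Y_{0,t}^2\rrangle$ the paper uses the renewal identity (\ref{eq_pf_Z_t}) and the double integral exactly as you do. The only difference is that the paper carries the bookkeeping through explicitly---in particular the sharp constant $2$ in $2\gamma^2 t^2$ emerges from the improved bound $|S_t+S_t^*|\le 2\gamma^4/\omega^4$ on the secular part after symmetrization---whereas you leave that step as acknowledged work to be done.
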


We postpone the proofs of Lemma \ref{lemma_exp_FS} and \ref{lemma_var_Z} and continue towards the stated goal. The following estimates on Taylor remainders, valid for $\re z\leq 0$, will be used:
\begin{equation}\label{eq_est_Taylor_rem}
\left|\mathcal{R}_0(\ee^{z})\right|\leq 2\,,\quad \left|\frac{\mathcal{R}_1(\ee^{z})}{z}\right|\leq 2\,,\quad \left|\frac{\mathcal{R}_2(\ee^{z})}{z^2}\right|\leq 1\,.
\end{equation}
The first one is elementary, the second follows by the mean value theorem, and so does the third, yet by Cauchy's form. Also 
\begin{equation}\label{eq_id_c}
c^{-1}+{\overline{c}}^{-1}=-|c|^{-2}\gamma^2\,,\quad \omega \leq |c|
\end{equation}
will be often used.

We first focus on $N_t$ as defined in (\ref{eq_def_A_B}) and on the auxiliary quantities it calls for by (\ref{eq_prop_exp_var_1}, \ref{eq_prop_exp_var_2}). They will be written without the prefactors $|\alpha|^2$ or $|\alpha|^4$, which are to be included when finalizing the estimates.

\begin{itemize}[leftmargin=1.2em]
\item The first one is $\omega^2 \E{Z_t^*Z_t}$ and calls in turn for (\ref{eq_lem_exp_var_2}). We have $\mathcal{R}_1(\ee^{ct})=\mathcal{R}_0(\ee^{ct})-ct$ and hence 
\begin{equation}\label{eq_exp_Z*Z}
\E{Z_t^*Z_t}=c^{-2}\mathcal{R}_0(\ee^{ct})-c^{-1}t+c.c.\,.
\end{equation}
By (\ref{eq_est_Taylor_rem}, \ref{eq_id_c}) we find
\begin{equation*}
\omega^2 \E{Z_t^*Z_t}\leq 4+\gamma^2 t\,,
\end{equation*}
confirming (\ref{eq_exp_bd_1}). In the regime $\gamma^2\ll \omega$ the more detailed discussion goes as follows: 
\begin{enumerate}[label=\roman*)]
\item For $\omega t\ll 1$, or equivalently $|c|t\ll 1$, we have $\mathcal{R}_1(\ee^{ct})\cong (ct)^2/2$ and thus
\begin{equation*}
\omega^2\E{Z_t^*Z_t}\cong \frac{(\omega t)^2}{2}+c.c.\,.
\end{equation*}
\item For $\omega t\gg 1 \gg \gamma^2 t$ we use (\ref{eq_exp_Z*Z}), wherein $\ee^{ct}\cong \ee^{\ii \omega t}$ and $c^{-2}\mathcal{R}_0(\ee^{ct})\cong -\omega^2(\ee^{\ii\omega t}-1)$. We end up with (\ref{eq_exp_mathcalA}). 
\item For $ \gamma^2 t \gg 1$ we have $\E{Z_t^*Z_t} \cong -c^{-1} t + c.c. = \gamma^2 t$.
\end{enumerate}
\item The second auxiliary quantity is $\omega^4\EE{(Z_t^*Z_t)^2}$  and calls for (\ref{eq_llangle_Z_t}). For reasons stated below (\ref{eq_time_int_cond_a}), it yields the condition $|\alpha|^4M\ll 1$ with $M=\gamma^2t(2\gamma^2t+C)$. Given that $|\alpha|\ll 1$, this still is  (\ref{eq_time_int_cond_a}).
\end{itemize}

We next focus on $\mathcal{N}_t$ and on the auxiliary quantities it calls for by (\ref{eq_prop_exp_var_3}, \ref{eq_prop_exp_var_4}).

\begin{itemize}[leftmargin=1.2em]
\item The first one is 
\begin{equation}\label{eq_exp_Y0+Y0*}
\frac{\omega^2}{t}\E{Y_{0,t}}\,,
\end{equation}
and calls in turn for (\ref{eq_lem_exp_var_4}). We have $\mathcal{R}_2(\ee^{ct})=\mathcal{R}_1(\ee^{ct})-(ct)^2/2$ and hence 
\begin{equation*}
\frac{\E{Y_{0,t}}}{t}=c^{-2}\frac{\mathcal{R}_1(\ee^{ct})}{ct}-\frac{c^{-1}t}{2}+c.c.\,.
\end{equation*}
By (\ref{eq_prop_Q_t_ev_5}) and (\ref{eq_est_Taylor_rem}, \ref{eq_id_c}) we find
\begin{equation}\label{eq_exp_Y0tomega2}
0\leq \frac{\omega^2}{t}\E{Y_{0,t}}\leq 4+\frac{\gamma^2 t}{2}\,,
\end{equation}
confirming (\ref{eq_exp_bd_2}). In the regime $\gamma^2\ll \omega$, a more detailed discussion goes as follows:
\begin{enumerate}[label=\roman*)]
\item For $\omega t\ll 1$, or equivalently $|c|t\ll 1$, we have $\mathcal{R}_2(\ee^{ct})\cong (ct)^3/6$ and thus
\begin{equation*}
\frac{\omega^2}{t}\E{Y_{0,t}}\cong \frac{(\omega t)^2}{3}\,.
\end{equation*}
\item For $\omega t\gg 1$ we write 
\begin{equation*}
\mathcal{R}_2(\ee^{ct})=\mathcal{R}_0(\ee^{ct})-ct-(ct)^2/2
\end{equation*}
and estimate the contribution to (\ref{eq_exp_Y0+Y0*}) of the three terms separately. Since $|\mathcal{R}_0(\ee^{ct})|\leq 2$ we have 
\begin{equation*}
\frac{\omega^2}{t}\left|c^{-3}\mathcal{R}_0(\ee^{ct})\right|\leq \frac{2}{\omega t}\ll 2\,.
\end{equation*}
Moreover by $-c^{-2}+c.c.\cong 2\omega^{-2}$ we have 
\begin{equation*}
\frac{\omega^2}{t}c^{-3}(-ct+c.c.)\cong 2
\end{equation*}
and 
\begin{equation*}
\frac{\omega^2}{t}c^{-3}\left(\frac{-(ct)^2}{2}+c.c.\right)\cong \frac{\gamma^2 t }{2}\,.
\end{equation*}
\end{enumerate}
These findings are summarized in (\ref{eq_exp_mathcalB}).
\item The second auxiliary quantity is 
\begin{equation*}\label{eq_exp_Y1*Y1}
\frac{\omega^2}{t^2}\E{Y_{1,t}^*Y_{1,t}}\,,
\end{equation*}
and calls for (\ref{eq_lem_exp_var_5}). 
We have $\mathcal{R}_3((ct-1)\ee^{ct})=\mathcal{R}_2((ct-1)\ee^{ct})-(ct)^3/3$ and hence 
\begin{equation*}
\frac{\E{Y_{1,t}^*Y_{1,t}}}{t^2}=c^{-2}\frac{\mathcal{R}_2((ct-1)\ee^{ct})}{(ct)^2}-\frac{c^{-1}t}{3}+c.c.\,.
\end{equation*}
By (\ref{eq_est_Taylor_rem}, \ref{eq_id_c}) we find 
\begin{equation}\label{eq_exp_Y1tstarY1tomega2}
\frac{\omega^2}{t^2}\E{Y_{1,t}^*Y_{1,t}}\leq 2+\frac{\gamma^2t}{3}\,.
\end{equation}
For reasons stated before (\ref{eq_exp_bd_2}) that bound yields the condition
\begin{equation*}
|\alpha|^2\left(2+\frac{\gamma^2 t}{3}\right)\ll 1\,,
\end{equation*}
i.e.\ (\ref{eq_time_int_cond_a}), as announced before (\ref{eq_time_int_cond_b2}).
\item The third auxiliary quantity is 
\begin{equation*}
\frac{\omega^4 }{t^2}\EE{Y_{0,t}^2}
\end{equation*}
and calls for (\ref{eq_var_Y_0_t}). By the reasons just recalled it yields the condition $|\alpha|^4M\ll 1$ seen earlier and thus (\ref{eq_time_int_cond_a}) once more.
\end{itemize}

This completes the discussion of items \ref{condition_a}, \ref{condition_b} in Sect.~\ref{section_model_results}. 

\begin{proof}[Proof of Prop. \ref{prop_estimator_process}]
By (\ref{eq_prop_Q_t_ev_2}, \ref{eq_prop_Q_t_ev_3}), we have 
\begin{equation*}
(\mathcal{N}_t-a^*a)^2=\frac{1}{(2\gamma t)^2}\bigl(X_{1,t}^*a+X_{1,t}a^*+X_{0,t}\bigr)^2\leq \frac{3}{(2\gamma t)^2}\bigl(X_{1,t}^*X_{1,t}(2a^*a+1)+X_{0,t}^2\bigr)
\end{equation*}
by the Cauchy-Schwarz inequality. Expectations in $\ket{\Omega}$ are computed by (\ref{eq_prop_Q_t_ev_4}) as $\E{X_{1,t}^*X_{1,t}}=4\gamma^2\omega^2|\alpha|^2\E{Y_{1,t}^*Y_{1,t}}$, and by (\ref{eq_pf_exp_X0t2}). The proof will thus be finished by showing
\begin{gather*}
\omega^2\E{Y_{1,t}^*Y_{1,t}}\leq Ct^2(1+\gamma^2t)\,,\\
\omega^4\E{Y_{0,t}^2}\leq Ct^2(1+(\gamma^2t)^2)\,.
\end{gather*}
The first bound is just (\ref{eq_exp_Y1tstarY1tomega2}). The second one follows from $\E{Y_{0,t}^2}=\E{Y_{0,t}}^2+\EE{Y_{0,t}^2}$ together with (\ref{eq_exp_Y0tomega2}, \ref{eq_var_Y_0_t}) and $\gamma^2t^3\leq t^2(1+(\gamma^2t)^2)$.
\end{proof}

We next catch up on the proofs of Lemma \ref{lemma_exp_FS} and \ref{lemma_var_Z}. We begin by some preliminaries, which will be used repeatedly.

\begin{enumerate}[leftmargin=1.9em, label=\alph*)]
\item By iterating $f(t)=f(0)+\int_0^tf'(s)\dd s$ 
we get the integral form of the remainder of the $n$-th Taylor approximation ($n=0,1,2,\dots$)
\begin{equation*}
\mathcal{R}_n(f)(t)=\int_0^t\mathcal{R}_{n-1}(f')(s)\dd s\,
\end{equation*}
with $\mathcal{R}_{-1}(f)=f$; equivalently 
\begin{equation}\label{eq_remainder_Taylor}
\mathcal{R}_n(f)(0)=0\,,\quad \mathcal{R}_n(f)'=\mathcal{R}_{n-1}(f')\,.
\end{equation}
\item \label{prelim_equality_distr}We observe that $(\phi_t-\phi_s)_{0\leq s\leq t}\overset{d}{=}(\phi_{t-s})_{0\leq s \leq t}$, where $\overset{d}{=}$ means equality in distribution. Thus 
\begin{equation}\label{eq_Z_Z_star}
\ee^{-\ii \phi_t}Z_t=\int_0^t\ee^{-\ii(\phi_t-\phi_s)}\dd s \overset{d}{=}\int_0^t\ee^{-\ii\phi_s}\dd s=Z_t^*\,.
\end{equation}
\item A continuous function on an interval, which has a continuous right derivative, is differentiable (see e.g.\ \cite{P83}, Cor.\ 1.2). Claims on derivatives can thus be read as being taken from the right.\label{prelim_diff_cont}
\item Let $t\geq s$ and let $F_s$ be a functional of $(P_{\tau})_{0\leq \tau \leq s}$. Then by independence and (\ref{eq_lem_exp_var_0}),
\begin{equation*}
\E{F_s\ee^{\ii\phi_t-ct}}=\E{F_s\ee^{\ii\phi_s-cs}}\E{\ee^{\ii(\phi_t-\phi_s)}}\ee^{-c(t-s)}=\E{F_s\ee^{\ii\phi_s-cs}}\,.
\end{equation*}
In particular
\begin{equation}\label{eq_diff_A_s_exp_t}
\frac{\dd }{\dd t}\left.\E{F_s\ee^{\ii\phi_t-ct}}\right|_{t=s}=0\,.
\end{equation}
This last preliminary shall not be used in the proof of (\ref{eq_lem_exp_var_0}), since it depends on it.
\end{enumerate}
\begin{proof}[Proof of Lm. \ref{lemma_exp_FS}]
The basic expectation value is 
\begin{equation*}
\langle \ee^{\ii \gamma P_t}\rangle =\ee^{-\gamma^2t/2}\,,\quad (t\geq 0)\,,
\end{equation*}
in view of
$\ee^{\ii \gamma P_t}=\ee^{\gamma (A_t-A_t^*)}=\ee^{-\gamma A_t^*}\ee^{\gamma A_t}\ee^{-\gamma^2t/2 }$.
Thus 
\begin{equation*}
  \langle \ee^{\ii \phi_t}\rangle =\ee^{\ii \omega t}\langle \ee^{\ii \gamma P_t}\rangle =\ee^{ct}
\end{equation*}
for $t\geq 0$ proving (\ref{eq_lem_exp_var_0}). We then obtain 
\begin{equation*}
\frac{\dd }{\dd t}\E{Z_t}=\E{\ee^{\ii\phi_t}}=\ee^{ct}\,
\end{equation*}
which by (\ref{eq_remainder_Taylor}) implies (\ref{eq_lem_exp_var_1}). Next
\begin{equation*}
\frac{\dd }{\dd t}\E{Z_t^*Z_t}=\E{\ee^{-\ii\phi_t}Z_t}+\E{Z_t^*\ee^{\ii\phi_t}}=\E{Z_t^*}+\E{Z_t}\,
\end{equation*}
by (\ref{eq_Z_Z_star}), whence (\ref{eq_lem_exp_var_1}) implies (\ref{eq_lem_exp_var_2}). From 
\begin{equation*}
\frac{\dd }{\dd t}\E{Y_{1,t}}=\E{Z_t}\,,\quad \frac{\dd }{\dd t}\E{Y_{0,t}}=\E{Z_t^*Z_t}
\end{equation*}
the equations (\ref{eq_lem_exp_var_3}, \ref{eq_lem_exp_var_4}) follow, too. Finally 
\begin{equation*}
\frac{\dd }{\dd t}\E{Y_{1,t}^*Y_{1,t}}=\E{Z_t^*Y_{1,t}}+\E{Y_{1,t}^*Z_t}\,,\quad \frac{\dd }{\dd t}\E{Y_{1,t}^*Z_t}=\E{Z_t^*Z_t}+\E{Y_{1,t}^*\ee^{\ii\phi_t}}\,,
\end{equation*}
and, by (\ref{eq_diff_A_s_exp_t}, \ref{eq_lem_exp_var_1})
\begin{equation*}
\frac{\dd }{\dd t}\E{Y_{1,t}^*\ee^{\ii\phi_t-ct}}=\E{Z_t^*\ee^{\ii\phi_t-ct}}=\ee^{-ct}\E{Z_t}=c^{-1}\bigl(1-\ee^{-ct}\bigr)\,.
\end{equation*}
Hence 
\begin{equation*}
\E{Y_{1,t}^*\ee^{\ii\phi_t}}=\ee^{ct}\bigl(c^{-1}t+c^{-2}\bigl(\ee^{-ct}-1\bigr)\bigr)=c^{-2}\bigl((ct-1)\ee^{ct}+1\bigr)=c^{-2}\mathcal{R}_1\bigl((ct-1)\ee^{ct}\bigr)
\end{equation*}
since $g(x)=(x-1)\ee^{x}$ has $g(0)=-1$, $g'(0)=0$. We conclude 
\begin{align*}
\frac{\dd^2 }{\dd t^2}\E{Y_{1,t}^*Y_{1,t}} &=\E{Z_t^*Z_t}+\E{Y_{1,t}^*\ee^{\ii\phi_t}}+c.c.\\
&=2c^{-2}\mathcal{R}_1\bigl(\ee^{ct}\bigr)+c^{-2}\mathcal{R}_1\bigl((ct-1)\ee^{ct}\bigr)+c.c.\\
&=c^{-2}\mathcal{R}_1\bigl((ct+1)\ee^{ct}\bigr)+c.c.\,,
\end{align*}
and hence (\ref{eq_lem_exp_var_5}).
\end{proof}

The proof of Lm.~\ref{lemma_var_Z} rests on the following two lemmas.

\begin{lemma}\label{lemma_aux_var_Z_1}
\begin{gather}
\frac{\dd}{\dd t}\left.\ES{Z_s^*Z_s}{Z_t^*Z_t}\right|_{t=s}=\ES{Z_s^*Z_s}{Z_s+Z_s^*}\,,\label{aux_Z_1_1}\\
\frac{\dd}{\dd t}\ES{Z_t^*Z_t}{Z_t^*Z_t}=2\ES{Z_t^*Z_t}{Z_t+Z_t^*}\,,\label{aux_Z_1_2}\\
\frac{\dd}{\dd t}\ES{Z_t^*Z_t}{Z_t}=\ES{\ee^{-\ii\phi_t}Z_t}{Z_t}+\ES{Z_t^*\ee^{\ii\phi_t}}{Z_t}+\ES{Z_t^*Z_t}{\ee^{\ii\phi_t}}\,,\label{aux_Z_1_3}\\
\frac{\dd}{\dd t}\ES{Z_t^*Z_t}{\ee^{\ii\phi_t-ct}}=\ES{\ee^{-\ii\phi_t}Z_t}{\ee^{\ii\phi_t-ct}}+\ES{Z_t^*\ee^{\ii\phi_t}}{\ee^{\ii\phi_t-ct}}\equiv g_{-}(t)-g_{+}(t)\,,\label{aux_Z_1_4}\\
\frac{\dd}{\dd t}\ES{Z_t^*\ee^{\ii\phi_t-ct}}{Z_t}=\ES{Z_t^*\ee^{\ii\phi_t-ct}}{\ee^{\ii\phi_t}}=- g_{+}(t)\,,\label{aux_Z_1_5}\\
\frac{\dd}{\dd t}\ES{\ee^{-\ii\phi_t-\overline{c}t}Z_t}{Z_t}=\ES{\ee^{-\ii\phi_t-\overline{c}t}Z_t}{\ee^{\ii\phi_t}}=\ee^{(c-\overline{c})t}g_{-}(t)\,,\label{aux_Z_1_6}
\end{gather}
with
\begin{equation*}
g_{\pm}(t)=\int_0^t\ee^{\pm c s}\bigl(1-\ee^{-\gamma^2 s}\bigr)\dd s\,.
\end{equation*}
\end{lemma}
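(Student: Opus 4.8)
The plan is to treat every covariance $\ES{\cdot}{\cdot}$ as a symmetric bilinear form and to differentiate it by the product rule. This is legitimate because, by Remark~\ref{remark_Z_t_G_t}, the operators involved ($Z_t$, $Z_t^*$, $\ee^{\pm\ii\phi_t}$) are functions of the commuting family $(P_s)_{0\le s\le t}$, so the covariance behaves exactly like a classical covariance under the Wiener measure carried by $(P_s)$. The elementary inputs are $\tfrac{\dd}{\dd t}Z_t=\ee^{\ii\phi_t}$, $\tfrac{\dd}{\dd t}Z_t^*=\ee^{-\ii\phi_t}$, and hence $\tfrac{\dd}{\dd t}(Z_t^*Z_t)=\ee^{-\ii\phi_t}Z_t+Z_t^*\ee^{\ii\phi_t}$. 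A key structural remark is that $Z_t$, $Z_t^*$, $Z_t^*Z_t$ are absolutely continuous in $t$ (ordinary time integrals), hence of bounded variation, so their It\^o covariation with any $\dd P_t$-driven process vanishes.

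I would dispatch the three structural identities first. Equation~(\ref{aux_Z_1_3}) is just the product rule applied to the three $t$-dependences of $\ES{Z_t^*Z_t}{Z_t}$ with the derivatives above. Equations~(\ref{aux_Z_1_1}) and~(\ref{aux_Z_1_2}) require the extra step of replacing $\ee^{-\ii\phi_t}Z_t$ by $Z_t^*$ (and its conjugate $Z_t^*\ee^{\ii\phi_t}$ by $Z_t$) inside the covariance: for~(\ref{aux_Z_1_1}) only the second slot carries $t$, so differentiating and setting $t=s$ yields $\ES{Z_s^*Z_s}{\ee^{-\ii\phi_s}Z_s+Z_s^*\ee^{\ii\phi_s}}$, while the symmetry of the form produces the factor $2$ in~(\ref{aux_Z_1_2}). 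The replacement is justified by the time reversal $\phi_\tau\mapsto\phi_t-\phi_{t-\tau}$ underlying~(\ref{eq_Z_Z_star}): it preserves the law, sends $\ee^{-\ii\phi_t}Z_t$ to $Z_t^*$, and leaves $Z_t^*Z_t=|Z_t|^2$ invariant, so the joint laws of $(\ee^{-\ii\phi_t}Z_t,Z_t^*Z_t)$ and $(Z_t^*,Z_t^*Z_t)$ coincide and the corresponding covariances are equal.

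For the three remaining identities I would set $M_t=\ee^{\ii\phi_t-ct}$, which by~(\ref{eq_lem_exp_var_0}) has $\E{M_t}=1$ and by the preliminary~(\ref{eq_diff_A_s_exp_t}) is a martingale; It\^o's lemma shows it has zero drift, $\dd M_t=\ii\gamma M_t\,\dd P_t$. Writing each covariance through $\ES{A}{B}=\E{AB}-\E{A}\E{B}$ and differentiating, the terms in which the derivative hits $M_t$ (or its conjugate $\overline{M}_t$) split into a martingale increment and a deterministic drift: the increment has vanishing expectation and vanishing covariation with the bounded-variation factors, and the drift, when present, collapses to a constant ($\ee^{-\ii\phi_t}M_t=\ee^{-ct}$ in~(\ref{aux_Z_1_5}), $\ee^{\ii\phi_t}\overline{M}_t=\ee^{-\overline{c}t}$ in~(\ref{aux_Z_1_6})) whose covariance is zero; both contributions therefore drop. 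This leaves the two covariances $\ES{\ee^{-\ii\phi_t}Z_t}{M_t}$ and $\ES{Z_t^*\ee^{\ii\phi_t}}{M_t}$ in~(\ref{aux_Z_1_4}), and in~(\ref{aux_Z_1_5}),~(\ref{aux_Z_1_6}) the single covariance obtained by differentiating the plain $Z_t$ in the second slot. The surviving covariances are then evaluated explicitly: expanding $Z_t^*=\int_0^t\ee^{-\ii\phi_u}\dd u$ and using $\E{\ee^{\ii\lambda\phi_\tau}}=\ee^{(\ii\lambda\omega-\lambda^2\gamma^2/2)\tau}$ (in particular $\E{\ee^{2\ii\phi_\tau}}=\ee^{(2c-\gamma^2)\tau}$) together with independence of increments gives $\ES{Z_t^*\ee^{\ii\phi_t}}{M_t}=-\int_0^t\ee^{cv}(1-\ee^{-\gamma^2v})\,\dd v=-g_+(t)$ and $\ES{\ee^{-\ii\phi_t}Z_t}{M_t}=g_-(t)$, where the identity $\overline{c}=-(c+\gamma^2)$ recasts the second integral in the stated $g_-$ form. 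The prefactor $\ee^{(c-\overline{c})t}$ in~(\ref{aux_Z_1_6}) finally emerges by pulling the deterministic exponentials $\ee^{-\overline{c}t}$ and $\ee^{ct}$ out of the bilinear covariance, while the analogous factors cancel in~(\ref{aux_Z_1_5}).

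The main obstacle is the rigorous bookkeeping of the stochastic-calculus terms in~(\ref{aux_Z_1_4})--(\ref{aux_Z_1_6}): one must be certain that every It\^o covariation $\dd(Z_t^*Z_t)\,\dd M_t$ and every martingale increment genuinely contributes nothing to the derivative of the covariance, which is precisely where the bounded-variation property of the $Z$-integrals and the martingale property~(\ref{eq_diff_A_s_exp_t}) do the essential work. A secondary care point is the joint-law (not merely marginal) use of the time reversal in~(\ref{aux_Z_1_1}) and~(\ref{aux_Z_1_2}); once these two points are secured, the remaining steps reduce to routine Gaussian integrations.
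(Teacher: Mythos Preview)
Your proposal is correct and follows essentially the same route as the paper's proof: product-rule differentiation of the covariances, the time-reversal identity~(\ref{eq_Z_Z_star}) to rewrite $\ee^{-\ii\phi_t}Z_t$ as $Z_t^*$ inside the covariance, the martingale property of $\ee^{\ii\phi_t-ct}$ (which the paper encodes as preliminary~(\ref{eq_diff_A_s_exp_t})) to kill the terms where the derivative hits the exponential, the collapse to constants in~(\ref{aux_Z_1_5})--(\ref{aux_Z_1_6}), and finally the explicit Gaussian integrations for $g_\pm$. Your framing in terms of It\^o covariations and bounded variation is more explicit than the paper's right-derivative convention, and your observation that the time reversal must preserve the \emph{joint} law of $(\ee^{-\ii\phi_t}Z_t,\,Z_t^*Z_t)$ rather than just the marginal is a point the paper glosses over; but these are refinements of the same argument, not a different one.
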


The terms on the r.h.s.\ of (\ref{aux_Z_1_3}) vanish for $t=0$. As a result, the equations (\ref{aux_Z_1_4}-\ref{aux_Z_1_6}) may be rephrased as follows:
\begin{equation}\label{aux_Z_1_7}
\begin{gathered}
\ES{Z_t^*Z_t}{\ee^{\ii\phi_t}}=\ee^{ct}\int_0^tg_{-}(s)\dd s-\ee^{ct}\int_0^tg_{+}(s)\dd s\,,\\
\ES{Z_t^*\ee^{\ii\phi_t}}{Z_t}=-\ee^{ct}\int_0^t g_{+}(s)\dd s\,,\\
\ES{\ee^{-\ii\phi_t}Z_t}{Z_t}=\ee^{\overline{c}t}\int_0^t\ee^{(c-\overline{c})s}g_{-}(s)\dd s\,.
\end{gathered}
\end{equation}

The computation of the integrals yields:

\begin{lemma}\label{lemma_aux_var_Z_2}
\begin{gather}
\ee^{ct}\int_0^tg_{+}(s)\dd s=c^{-2}\ee^{2ct}-(c-\gamma^2)^{-2}\ee^{(2c-\gamma^2)t}+\frac{\gamma^2(2c-\gamma^2)}{c^2(c-\gamma^2)^2}\ee^{ct}+\frac{\gamma^2}{c(c-\gamma^2)}t\ee^{ct}\,,\label{aux_Z_2_1}\\
\ee^{ct}\int_0^tg_{-}(s)\dd s=c^{-2}-\overline{c}^{-2}\ee^{-\gamma^2t}-2\ii \frac{\gamma^2\omega}{|c|^4}\ee^{ct}-\frac{\gamma^2}{|c|^2}t\ee^{ct}\,,\label{aux_Z_2_2}\\
\ee^{\overline{c}t}\int_0^t\ee^{(c-\overline{c})s}g_{-}(s)\dd s=|c|^{-2}\bigl(1-\ee^{-\gamma^2t}\bigr)+\ii \frac{\gamma^2}{2|c|^2\omega}\bigl(\ee^{ct}-\ee^{\overline{c}t}\bigr)\,.\label{aux_Z_2_3}
\end{gather}
\end{lemma}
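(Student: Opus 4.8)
The plan is to regard (\ref{aux_Z_2_1})--(\ref{aux_Z_2_3}) as a purely elementary exercise in integrating exponentials, carried out in two stages: first evaluate $g_\pm(s)$ in closed form, then perform the outer integration $\int_0^t(\cdots)\dd s$ and multiply by the indicated prefactor. The only structural input needed is the identity $c+\overline{c}=-\gamma^2$ (equivalent to the first relation in (\ref{eq_id_c})), together with $c-\overline{c}=2\ii\omega$ and $c\overline{c}=|c|^2$; these are precisely what collapse the raw output into the compact right-hand sides.

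First I would compute, directly from the definition of $g_\pm$ in Lm.~\ref{lemma_aux_var_Z_1},
\begin{gather*}
g_+(s)=\frac{\ee^{cs}-1}{c}-\frac{\ee^{(c-\gamma^2)s}-1}{c-\gamma^2}\,,\\
g_-(s)=\frac{1-\ee^{-cs}}{c}+\frac{1-\ee^{\overline{c}s}}{\overline{c}}\,,
\end{gather*}
where in $g_-$ I have used $c+\gamma^2=-\overline{c}$ to turn the second exponent into $\overline{c}s$. Each $g_\pm(s)$ is thus a combination of exponentials and constants, so its antiderivative is again a combination of exponentials plus a term linear in $s$. Multiplying by the prefactor $\ee^{ct}$ (resp.\ $\ee^{\overline{c}t}$) then yields, for each of the three integrals, a short list of elementary functions of $t$: for (\ref{aux_Z_2_1}) these are $\ee^{2ct}$, $\ee^{(2c-\gamma^2)t}$, $\ee^{ct}$ and $t\ee^{ct}$; for (\ref{aux_Z_2_2}) they are $1$, $\ee^{-\gamma^2t}$, $\ee^{ct}$ and $t\ee^{ct}$; and for (\ref{aux_Z_2_3}), after using $\ee^{\overline{c}t}\ee^{(c-\overline{c})t}=\ee^{ct}$ and $\ee^{(c+\overline{c})t}=\ee^{-\gamma^2t}$, they are $1$, $\ee^{-\gamma^2t}$, $\ee^{ct}$ and $\ee^{\overline{c}t}$.

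It then remains to collect coefficients and simplify; the denominator factors $c$, $c-\gamma^2$, $\overline{c}$, $c-\overline{c}=2\ii\omega$ are all nonzero since $\omega,\gamma>0$, so no degenerate cases arise. For (\ref{aux_Z_2_1}) the coefficient of $\ee^{ct}$ is $-c^{-2}+(c-\gamma^2)^{-2}$, which becomes $\gamma^2(2c-\gamma^2)/(c^2(c-\gamma^2)^2)$ upon writing the numerator as $c^2-(c-\gamma^2)^2=\gamma^2(2c-\gamma^2)$, while the $t\ee^{ct}$ coefficient $-c^{-1}+(c-\gamma^2)^{-1}$ equals $\gamma^2/(c(c-\gamma^2))$. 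The main points requiring care lie in the remaining two identities. In (\ref{aux_Z_2_2}) the $\ee^{ct}$ terms combine into $(-c^{-2}+\overline{c}^{-2})\ee^{ct}$, and recognizing $c^2-\overline{c}^2=(c-\overline{c})(c+\overline{c})=-2\ii\omega\gamma^2$ turns this into the stated purely imaginary coefficient $-2\ii\gamma^2\omega/|c|^4$, while the $t\ee^{ct}$ term collapses via $c^{-1}+\overline{c}^{-1}=-\gamma^2/|c|^2$. The most delicate bookkeeping is in (\ref{aux_Z_2_3}): after distributing the prefactor one finds that the two $\ee^{\overline{c}t}$ contributions cancel exactly, the $(\ee^{ct}-\ee^{\overline{c}t})$ terms assemble through $(c^{-1}+\overline{c}^{-1})/(c-\overline{c})=-\gamma^2/(2\ii\omega|c|^2)=\ii\gamma^2/(2\omega|c|^2)$, and the constant together with the $\ee^{-\gamma^2t}$ terms combine into $|c|^{-2}(1-\ee^{-\gamma^2t})$. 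I expect this final cancellation-and-reassembly step to be the only place where an error is likely, so I would guard against it by the consistency check that each integral and its stated closed form both vanish at $t=0$ (and, for extra confidence, agree in first derivative there).
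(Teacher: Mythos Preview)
Your proposal is correct and follows essentially the same route as the paper: both reduce the lemma to elementary integration of exponentials and then simplify using $c+\overline{c}=-\gamma^2$, $c-\overline{c}=2\ii\omega$, $c\overline{c}=|c|^2$. The only organizational difference is that the paper first records general identities for $\int_0^t \ee^{\alpha s}(1-\ee^{\beta s})\dd s$ and its iterate, then specializes $(\alpha,\beta)$, whereas you expand $g_\pm$ explicitly and integrate term by term; the resulting algebra and the key cancellations (in particular the vanishing of the $\ee^{\overline{c}t}$ contributions in (\ref{aux_Z_2_3})) are identical.
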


Let us give the proof of this computation straight away, while that of Lm.~\ref{lemma_aux_var_Z_1} will be postponed till after that of Lm.~\ref{lemma_var_Z}.
\begin{proof}[Proof of Lm. \ref{lemma_aux_var_Z_2}]
The computation rests on $\int_0^t\ee^{\alpha s}\dd s=\alpha^{-1}(\ee^{\alpha t}-1)$, which yields 
\begin{align}
\int_0^t\ee^{\alpha s}\bigl(1-\ee^{\beta s}\bigr)\dd s &=\int_0^t\bigl(\ee^{\alpha s}-\ee^{(\alpha+\beta) s}\bigr)\dd s\nonumber\\
&=\alpha^{-1}\ee^{\alpha t}-(\alpha+\beta)^{-1}\ee^{(\alpha+\beta) t}-\frac{\beta}{\alpha(\alpha+\beta)}\label{aux_var_Z_2_pf_1}\\
&=(\alpha+\beta)^{-1}\bigl(\ee^{\alpha t}-\ee^{(\alpha+\beta)t}\bigr)+\frac{\beta}{\alpha(\alpha+\beta)}\bigl(\ee^{\alpha t}-1\bigr)\label{aux_var_Z_2_pf_2}
\end{align}
by $\alpha^{-1}-(\alpha+\beta)^{-1}=\beta \alpha^{-1}(\alpha+\beta)^{-1}$; and similarly
\begin{equation}\label{aux_var_Z_2_pf_3}
\int_0^t\int_0^{s_1}\ee^{\alpha s_2}\bigl(1-\ee^{\beta s_2}\bigr)\dd s_2 \dd s_1 =\alpha^{-2}\ee^{\alpha t}-(\alpha+\beta)^{-2}\ee^{(\alpha+\beta) t}-\frac{\beta(2\alpha+\beta)}{\alpha^2(\alpha+\beta)^2}-\frac{\beta}{\alpha(\alpha+\beta)}t
\end{equation}
by $\alpha^{-2}-(\alpha+\beta)^{-2}=(2\alpha\beta+\beta^2)\alpha^{-2} (\alpha+\beta)^{-2}$. In passing we mention a slight generalization of (\ref{aux_var_Z_2_pf_2}), namely
\begin{equation}\label{aux_var_Z_2_pf_4}
\int_0^t\bigl(c_0^{-1}\ee^{\alpha s}-c_{\beta}^{-1}\ee^{(\alpha+\beta) s}\bigr)\dd s =(\alpha+\beta)^{-1}c_{\beta}^{-1}\bigl(\ee^{\alpha t}-\ee^{(\alpha+\beta)t}\bigr)+\Delta \bigl(\ee^{\alpha t}-1\bigr)
\end{equation}
with $\Delta =(\alpha c_0)^{-1}-((\alpha+\beta)c_{\beta})^{-1}$. Now (\ref{aux_Z_2_1}) follows by applying (\ref{aux_var_Z_2_pf_3}) with $\alpha=c$, $\beta=-\gamma^2$ to its integral. Likewise does (\ref{aux_Z_2_2}) with $\alpha=-c$, $\beta=-\gamma^2$ in view of 
\begin{equation*}
c+\gamma^2=-\overline{c}\,,\quad -2c-\gamma^2=-2i\omega\,.
\end{equation*}
Finally by (\ref{aux_var_Z_2_pf_1}) applied with the same $\alpha$, $\beta$ we have 
\begin{align*}
\ee^{(c-\overline{c})t}g_{-}(t) &=-c^{-1}\ee^{-\overline{c}t}+(c+\gamma^2)^{-1}\ee^{-(\overline{c}+\gamma^2)t}+\frac{\gamma^2}{c(c+\gamma^2)}\ee^{(c-\overline{c})t}\\
&=-c^{-1}\ee^{-\overline{c}t}-\overline{c}^{-1}\ee^{ct}-\frac{\gamma^2}{|c|^2}\ee^{(c-\overline{c})t}
\end{align*}
and 
\begin{equation*}
\int_0^t\ee^{(c-\overline{c})s}g_{-}(s)\dd s=|c|^{-2}\bigl(\ee^{-\overline{c}t}-\ee^{ct}\bigr)-\frac{\gamma^2}{|c|^2(c-\overline{c})}\bigl(\ee^{(c-\overline{c})t}-1\bigr)
\end{equation*}
with $c-\overline{c}=2i\omega$, proving (\ref{aux_Z_2_3}). Incidentally the last integration can be seen as an application of (\ref{aux_var_Z_2_pf_4}) with $c_0=-c$, $c_{\beta}=\overline{c}$, $\alpha=-\overline{c}$, $\alpha+\beta=c$, whence $\Delta=0$.
\end{proof}
\begin{proof}[Proof of Lm. \ref{lemma_var_Z}] 
In the expressions (\ref{aux_Z_2_1}-\ref{aux_Z_2_3}) we distinguish secular terms like $1$, $\ee^{-\gamma^2t}$ from oscillating ones such as $\ee^{\alpha t}$, $t\ee^{\alpha t}$ with $\alpha=c, 2c, \overline{c}$, including factors $\ee^{-\gamma^2 t}$. Equation (\ref{aux_Z_2_1}) only contains terms of the second type. Collecting secular terms on the r.h.s.\ of (\ref{aux_Z_1_3}) we find by (\ref{aux_Z_1_7})
\begin{align*}
S_t & \coloneqq c^{-2}-\overline{c}^{-2}\ee^{-\gamma^2t}+|c|^{-2}\bigl(1-\ee^{-\gamma^2 t}\bigr) =c^{-2}+|c|^{-2}-\bigl(\overline{c}^{-2}+|c|^{-2}\bigr)\ee^{-\gamma^2t}\\
&\,=\frac{c+\overline{c}}{|c|^4}\bigl(\overline{c}-c\ee^{-\gamma^2 t}\bigr)
\end{align*}
and $|S_t|\leq 2\gamma^2\omega^{-3}$ by $c+\overline{c}=-\gamma^2$. For later use we observe that $S_t+S_t^*$ has a better estimate:
\begin{equation}\label{lem_var_Z_pf_1}
S_t+S_t^*=\frac{(c+\overline{c})^2}{|c|^4}\bigl(1-\ee^{-\gamma^2 t}\bigr)\,,\quad |S_t+S_t^*|\leq 2\gamma^4\omega^{-4}\,.
\end{equation}
As for the oscillatory terms, we distinguish between the first two terms in (\ref{aux_Z_2_1}),
\begin{equation*}
O_t=c^{-2}\ee^{2ct}-(c-\gamma^2)^{-2}\ee^{(2c-\gamma^2)t}
\end{equation*}
and all the rest, $R_t$:
\begin{equation*}
\frac{\dd}{\dd t}\ES{Z_t^*Z_t}{Z_t}=S_t-2O_t+R_t\,.
\end{equation*}
We then integrate the equation and estimate terms as follows. 
\begin{equation*}\label{lem_var_Z_pf_2}
\left|\int_0^tR_s\dd s \right|\leq C\frac{\gamma^2}{\omega^3}\min(\omega^{-1},t)\,,
\end{equation*}
because of 
\begin{equation*}
\left|\int_0^t\ee^{w s}\dd s \right|\leq 2|w|^{-1}\min(1,|w|t)\,,\quad \left|w \int_0^ts \ee^{w s}\dd s \right|\leq C|w|^{-1}\min(1,|w|^2t^2)\,
\end{equation*}
for $\re{w}\leq 0$; the first minimum bounds the second one. By (\ref{aux_var_Z_2_pf_4}) we have 
\begin{gather*}
\int_0^tO_s\dd s=\tilde{O}_t+\Delta\bigl(\ee^{2ct}-1\bigr)\,,\quad \tilde{O}_t=(2c-\gamma^2)^{-1}(c-\gamma^2)^{-2}\bigl(\ee^{2ct}-\ee^{(2c-\gamma^2)t}\bigr)\,,
\end{gather*}
where $\Delta=(2c^3)^{-1}-(2c-\gamma^2)^{-1}(c-\gamma^2)^{-2}$ satisfies $|\Delta|\leq (3/2)|c|^{-4}\gamma^2$; moreover $|\ee^{2ct}-1|\leq 2|c|t$, $|\tilde{O}_t|\leq \omega^{-3}(1-\ee^{-\gamma^2t})\leq \omega^{-3}\gamma^2 t$. This proves (\ref{eq_lrangle_Z_t}) and 
\begin{equation*}
\ES{Z_t^*Z_t}{Z_t+Z_t^*}=-2(\tilde{O}_t+\tilde{O}_t^*)+\tilde{R}_t
\end{equation*}
with 
\begin{equation*}
|\tilde{R}_t|\leq \frac{2\gamma^4}{\omega^4}t+C\frac{\gamma^2}{\omega^4}=\frac{\gamma^2}{\omega^4}(2\gamma^2 t+C)
\end{equation*}
by (\ref{lem_var_Z_pf_1}). A further integration will give $\ES{Z_t^*Z_t}{Z_t^*Z_t}$ by (\ref{aux_Z_1_2}). That of $\tilde{O}_t$ is done by (\ref{aux_var_Z_2_pf_2}) with $\alpha=2c$, $\beta=-\gamma^2$. Both resulting terms have bounds $\omega^{-3}$ times $\omega^{-1}\gamma^2 t$. This proves (\ref{eq_llangle_Z_t}). 
For the proof of (\ref{eq_lrangle_Z_s_t}) we make use of (\ref{eq_pf_Z_t}). We so get
\begin{align*}
\langle Z_s^*Z_s; Z_t^*Z_t\rangle &=\langle Z_s^*Z_s; Z_s^*Z_s\rangle+\langle Z_s^*Z_s; Z_s^*\ee^{\ii \phi_s}\rangle\langle \hat{Z}_{t-s}\rangle+\langle Z_s^*Z_s; \ee^{-\ii \phi_s}Z_s\rangle\langle \hat{Z}_{t-s}^*\rangle\\
&=\langle Z_s^*Z_s; Z_s^*Z_s\rangle+\langle Z_sZ_s^*; Z_s\rangle\langle \hat{Z}_{t-s}\rangle+\langle Z_sZ_s^*; Z_s^*\rangle\langle \hat{Z}_{t-s}^*\rangle\,.
\end{align*}
The result now follows from (\ref{eq_llangle_Z_t}, \ref{eq_lrangle_Z_t}, \ref{eq_lem_exp_var_1}).
By definition (\ref{eq_prop_Q_t_ev_5}),
\begin{equation*}
\llangle Y_{0,t}^2\rrangle=\int_0^t\int_0^t\langle Z_{s_1}^*Z_{s_1};Z_{s_2}^*Z_{s_2}\rangle \dd s_2 \dd s_1\,,
\end{equation*}
and thus (\ref{eq_var_Y_0_t}) follows from (\ref{eq_lrangle_Z_s_t}) by integration.
\end{proof}
\begin{proof}[Proof of Lm. \ref{lemma_aux_var_Z_1}]
By straightforward differentiation and by (\ref{eq_Z_Z_star}) we get
\begin{equation*}
\frac{\dd}{\dd t}\left.\ES{Z_s^*Z_s}{Z_t^*Z_t}\right|_{t=s}=\ES{Z_s^*Z_s}{\ee^{-\ii\phi_s}Z_s+Z_s^*\ee^{\ii\phi_s}}=\ES{Z_s^*Z_s}{Z_s^*+Z_s}\,,
\end{equation*}
which is the same as (\ref{aux_Z_1_1}). Equation (\ref{aux_Z_1_2}) then follows. Equation (\ref{aux_Z_1_3}) is straightforward. Equation (\ref{aux_Z_1_4}) follows from (\ref{eq_diff_A_s_exp_t}); likewise for (\ref{aux_Z_1_5}-\ref{aux_Z_1_6}) by also noting that 
\begin{equation*}
\ES{\ee^{-\ii\phi_t}\ee^{\ii\phi_t-ct}}{\ee^{\ii\phi_t}}=\ee^{-ct}\ES{1}{\ee^{\ii\phi_t}}=0\,.
\end{equation*}
It remains to derive $g_{\pm}(t)$. Let $I=[s_1,s_2]$, ($s_1<s_2$) be an interval and set $P_I=P_{s_2}-P_{s_1}$, $\phi_I=\phi_{s_2}-\phi_{s_1}=\omega |I|+\gamma P_I$ in line with (\ref{eq_defs_U_t}). We observe that 
\begin{gather*}
\langle \ee^{\ii \phi_I};\ee^{-\ii \phi_I} \rangle = \langle \ee^{\ii \phi_I}\ee^{-\ii \phi_I} \rangle-\langle \ee^{\ii \phi_I} \rangle \langle \ee^{-\ii \phi_I} \rangle=1-\ee^{c|I|}\ee^{\overline{c}|I|}=1-\ee^{-\gamma^2|I|}\,,\\
\begin{aligned}
\langle \ee^{\ii \phi_I};\ee^{\ii \phi_I} \rangle &= \langle \ee^{2\ii \phi_I}\rangle-\langle \ee^{\ii \phi_I} \rangle^2 
=\ee^{2\ii \omega |I|}\ee^{-(2\gamma)^2|I|/2}-\ee^{2(\ii\omega-\gamma^2/2)|I|}\\
&=\ee^{2\ii \omega |I|}\bigl(\ee^{-2\gamma^2|I|}-\ee^{-\gamma^2|I|}\bigr)=\ee^{2c |I|}\bigl(\ee^{-\gamma^2|I|}-1\bigr)\,.
\end{aligned}
\end{gather*}
Now 
\begin{equation*}
\begin{aligned}
\ES{Z_t^*\ee^{\ii\phi_t}}{\ee^{\ii\phi_t}} &=\int_0^t\ES{\ee^{\ii(\phi_t-\phi_s)}}{\ee^{\ii\phi_t}}\dd s
=\int_0^t\E{\ee^{\ii\phi_s}}\ES{\ee^{\ii(\phi_t-\phi_s)}}{\ee^{\ii(\phi_t-\phi_s)}}\dd s\\
&=\int_0^t\ee^{cs}\ee^{2c(t-s)}\bigl(\ee^{-\gamma^2(t-s)}-1\bigr)\dd s
\end{aligned}
\end{equation*}
and multiplication by $\ee^{-ct}$ followed by substitution $\tau\coloneqq t-s$ gives 
\begin{equation*}
-g_{+}(t)=\int_0^t\ee^{c\tau}\bigl(\ee^{-\gamma^2\tau}-1\bigr)\dd \tau\,.
\end{equation*}
Likewise,
\begin{align*}
\ES{\ee^{-\ii\phi_t}Z_t}{\ee^{\ii\phi_t}} &=\int_0^t\ES{\ee^{-\ii(\phi_t-\phi_s)}}{\ee^{\ii\phi_t}}\dd s
=\int_0^t\E{\ee^{\ii\phi_s}}\ES{\ee^{-\ii(\phi_t-\phi_s)}}{\ee^{\ii(\phi_t-\phi_s)}}\dd s\\
&=\int_0^t\ee^{cs}\bigl(1-\ee^{-\gamma^2(t-s)}\bigr)\dd s\,,
\end{align*}
\begin{equation*}
g_{-}(t)=\int_0^t\ee^{-c\tau}\bigl(1-\ee^{-\gamma^2\tau}\bigr)\dd \tau\,.
\end{equation*}
\end{proof}
\end{section}

\begin{section}{Proof of the limiting regime}\label{pf_limiting}

The proof of Prop.~\ref{prop_ltl} calls for Brownian motion $W=(W_t)_{t\geq 0}$ in a way that is independent of the statement of the proposition itself. In fact, we shall use the Wiener-It\^o-Segal isomorphism, denoted by $\equiv$, between $\mathcal{F}=\mathcal{F}(L^2(\mathbb{R}_{+}))$ and $L^2(\Omega,\mu)$, with $\mu$ the Wiener measure on the Brownian path space $\Omega\ni W$. Any random variable $\xi(W)$ on $\Omega$ defines a multiplication operator on that $L^2$-space, an example being $\xi(W)=W_t$ for some $t\geq 0$. The above map diagonalizes the process $P=(P_t)_{t\geq 0}$, in the sense that $P_t\equiv W_t$. Moreover any such random variable, if square integrable w.r.t.\ $\mu$, also naturally defines a vector $\xi\in L^2(\Omega,\mu)$. In that sense, $\ket{\Omega}\equiv 1$, which is the constant function $1(W)=1$ on $\Omega$. In particular
\begin{equation*}
\BraKet{\Omega}{f(P_t)}{\Omega}=\int f(W_t)\dd \mu(W)\,.
\end{equation*}

Brownian motion is self-similar under diffusive scaling for any normalization of the mean square displacement. The next lemma states that the diffusive scaling limit of the process $(Z_t)_{t\geq 0}$ is Brownian motion, suitably normalized.

\begin{lemma}\label{lemma_diff_sc_Z}
We have the convergence of processes on $\Omega\ni W$:
\begin{equation}\label{eq_lem_diff_sc_Z}
 \tilde{Z}_{\varepsilon,t}(W):= \sqrt{\varepsilon}\,T_{\varepsilon}^*Z_{\varepsilon^{-1}t}(W)T_{\varepsilon}\longrightarrow -\ii\frac{\gamma}{c}B_t\,,\quad (\varepsilon\to 0)
\end{equation}
in distribution, where $B_t$ is complex Brownian motion and $T_{\varepsilon}$ is the dilation seen in (\ref{eq:tscaling}).
\end{lemma}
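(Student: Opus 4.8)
The plan is to make the diffusive scaling explicit, extract from $\tilde Z_{\varepsilon,t}$ a complex continuous martingale, and then invoke the martingale central limit theorem. First I would compute $\tilde Z_{\varepsilon,t}$ explicitly. Using the dilation relation (\ref{eq:tscaling}) in the form $T_\varepsilon^* P_s T_\varepsilon=\varepsilon^{-1/2}P_{\varepsilon s}$, conjugating $Z_{\varepsilon^{-1}t}$ and substituting $u=\varepsilon s$ gives, under the identification $P_u\equiv W_u$,
\begin{equation*}
\tilde Z_{\varepsilon,t}(W)=\varepsilon^{-1/2}\int_0^t \ee^{\ii\Phi^\varepsilon_u}\,\dd u\,,\qquad \Phi^\varepsilon_u=\frac{\omega u}{\varepsilon}+\frac{\gamma W_u}{\sqrt\varepsilon}\,.
\end{equation*}
It\^o's lemma applied to $\ee^{\ii\Phi^\varepsilon_u}$ yields $\dd(\ee^{\ii\Phi^\varepsilon_u})=\varepsilon^{-1}c\,\ee^{\ii\Phi^\varepsilon_u}\dd u+\ii\gamma\varepsilon^{-1/2}\ee^{\ii\Phi^\varepsilon_u}\dd W_u$, with $c=\ii\omega-\gamma^2/2$ exactly as in (\ref{eq:c}). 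Integrating and solving for the ordinary integral gives the decomposition
\begin{equation*}
\tilde Z_{\varepsilon,t}=\frac{\sqrt\varepsilon}{c}\bigl(\ee^{\ii\Phi^\varepsilon_t}-1\bigr)-\frac{\ii\gamma}{c}M^\varepsilon_t\,,\qquad M^\varepsilon_t=\int_0^t\ee^{\ii\Phi^\varepsilon_u}\,\dd W_u\,.
\end{equation*}
The boundary term is bounded by $2\sqrt\varepsilon/|c|$ and hence vanishes uniformly in $t$; everything thus reduces to the limit of the complex martingale $M^\varepsilon_t$.

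Second, I would identify the limit of $M^\varepsilon=X^\varepsilon+\ii Y^\varepsilon$, where $X^\varepsilon_t=\int_0^t\cos\Phi^\varepsilon_u\,\dd W_u$ and $Y^\varepsilon_t=\int_0^t\sin\Phi^\varepsilon_u\,\dd W_u$. Their brackets are $\langle X^\varepsilon\rangle_t=\int_0^t\cos^2\Phi^\varepsilon_u\,\dd u$, $\langle Y^\varepsilon\rangle_t=\int_0^t\sin^2\Phi^\varepsilon_u\,\dd u$ and $\langle X^\varepsilon,Y^\varepsilon\rangle_t=\tfrac12\int_0^t\sin 2\Phi^\varepsilon_u\,\dd u$. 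Writing $\cos^2=\tfrac12(1+\cos2\Phi^\varepsilon)$ and $\sin^2=\tfrac12(1-\cos2\Phi^\varepsilon)$, the claim is that the rapidly oscillating integrals $\int_0^t\ee^{2\ii\Phi^\varepsilon_u}\dd u$ tend to zero, leaving $\langle X^\varepsilon\rangle_t,\langle Y^\varepsilon\rangle_t\to t/2$ and $\langle X^\varepsilon,Y^\varepsilon\rangle_t\to0$. This averaging is the technical heart, and I would quantify it by a second-moment estimate: since for $u>v$ one has $\E{\ee^{2\ii(\Phi^\varepsilon_u-\Phi^\varepsilon_v)}}=\ee^{(2\ii\omega-2\gamma^2)(u-v)/\varepsilon}$, the decay length in $u-v$ is of order $\varepsilon$, so that $\E{\bigl|\int_0^t\ee^{2\ii\Phi^\varepsilon_u}\dd u\bigr|^2}=O(\varepsilon)\to0$. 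This gives $L^2$, hence in-probability, convergence of the three brackets to their deterministic limits $t/2$, $t/2$, $0$.

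Third, with the three continuous brackets converging in probability to those deterministic limits, the martingale central limit theorem (Rebolledo/Ethier--Kurtz; no jump condition is required since $X^\varepsilon,Y^\varepsilon$ are continuous) yields $(X^\varepsilon,Y^\varepsilon)\Rightarrow(\beta^1,\beta^2)$, two independent Brownian motions each of variance $t/2$. This is precisely the complex Brownian motion $B_t$ of Prop.~\ref{prop_ltl}, with independent real and imaginary parts distributed as $W_t/\sqrt2$ and $|\dd B_t|^2=\dd t$. Consequently $M^\varepsilon\Rightarrow B$, and combining with the vanishing boundary term gives $\tilde Z_{\varepsilon,t}\to-\ii(\gamma/c)B_t$ in distribution on any compact time interval, as asserted in (\ref{eq_lem_diff_sc_Z}).

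I expect the main obstacle to lie in the second step, at the level of the full \emph{process} rather than a fixed time: the functional martingale CLT on $C(\mathbb{R}_+,\mathbb{R}^2)$ needs the bracket convergence together with enough uniformity (tightness). The $O(\varepsilon)$ second-moment bound controls fixed-time fluctuations, but upgrading it to locally uniform control of $t\mapsto\langle X^\varepsilon\rangle_t-t/2$ (for instance via a maximal inequality or a moment bound on increments of the oscillatory integral) is the step requiring genuine care. The boundary estimate and the algebraic It\^o identities, by contrast, are routine.
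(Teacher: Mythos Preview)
Your proposal is correct, and it shares with the paper the opening move: the It\^o identity
\[
Z_t=\frac{1}{c}\bigl(\ee^{\ii\phi_t}-1\bigr)-\frac{\ii\gamma}{c}\int_0^t\ee^{\ii\phi_u}\dd P_u
\]
(that is your decomposition after rescaling), together with the observation that the boundary term dies under diffusive scaling. From that point on the two arguments diverge.

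The paper proceeds by Prohorov's theorem applied directly to $X_t=\int_0^t\ee^{\ii\psi_s}\dd W_s$: finite-dimensional convergence is obtained by writing down the It\^o evolution of $\mathbb{E}[f(X_t)]$ for $f(z)=\ee^{\ii(\lambda z+\bar\lambda\bar z)}$, and then showing that the terms carrying $\ee^{\pm 2\ii\psi_s}$ vanish by an integration-by-parts trick (the fast phase gains a factor $\varepsilon^{-1}$ upon differentiation, which after solving back yields an extra $\varepsilon$); tightness is handled separately by a Kolmogorov moment bound $\mathbb{E}[|X_t-X_s|^{2m}]\le C|t-s|^m$. Your route instead packages both steps into the functional martingale CLT, reducing everything to convergence of the brackets, which you dispatch with the $L^2$ estimate $\mathbb{E}\bigl|\int_0^t\ee^{2\ii\Phi^\varepsilon_u}\dd u\bigr|^2=O(\varepsilon)$. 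That estimate is simpler than the paper's integration-by-parts argument precisely because in the bracket the integrand is a pure phase, not a phase times $g(X_s)$.

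One remark on your closing caveat: for \emph{continuous} local martingales the functional CLT (e.g.\ Ethier--Kurtz, Thm.~7.1.4, or Jacod--Shiryaev, Thm.~VIII.3.11) requires only pointwise-in-$t$ convergence in probability of the bracket matrix to a continuous deterministic limit; tightness is then automatic and need not be verified separately. So your ``main obstacle'' is in fact already handled by the theorem you invoke, and your argument is complete as stated.
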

\begin{proof}
We have 
\begin{equation}\label{lem_Z_t_BM_2}
Z_t=\frac{1}{c}\bigl(\ee^{\ii\phi_t}-1\bigr)-\frac{\ii \gamma}{c}\int_0^t\ee^{\ii\phi_u}\dd P_u\,.
\end{equation}
Indeed, by It\^o's lemma \cite{PH84} we have 
\begin{equation}\label{lem_Z_t_BM_3}
\dd \ee^{\ii \phi_t}=\ee^{\ii\phi_t}\biggl(\ii\omega \dd t+\ii \gamma \dd P_t-\frac{1}{2}\gamma^2\dd t\biggr)=\ee^{\ii\phi_t}(c\dd t+\ii \gamma \dd P_t)\,.
\end{equation}
Integrating and solving for the first term on the r.h.s.\ yields (\ref{lem_Z_t_BM_2}). The first one on the r.h.s.\ of the latter is bounded in $t$ (by $2/\omega$) and therefore vanishes in the diffusive scaling limit. As for the integral it becomes under scaling 
\begin{equation}\label{lem_Z_t_BM_3a}
\sqrt{\varepsilon}\,T_{\varepsilon}^*\biggl(\int_0^{\varepsilon^{-1}t}\ee^{\ii \phi_u}\dd P_u \biggr)T_{\varepsilon}\overset{d}{=}\int_0^t\ee^{\ii \psi_s}\dd W_s\eqqcolon X_t
\end{equation}
with $\psi_s=\varepsilon^{-1}\omega s+\varepsilon^{-1/2}\gamma W_s$. In fact under the substitution $u\eqqcolon\varepsilon^{-1}s$, $P_u\eqqcolon \varepsilon^{-1/2}W_s$ we have $\phi_u=\psi_s$ and $W_s$ is real Brownian motion. In order to prove (\ref{eq_lem_diff_sc_Z}), it thus suffices to show 
\begin{equation*}
X_t\overset{d}{\rightarrow}B_t\,,\quad (\varepsilon \to 0)\,.
\end{equation*}
We will do so using Prohorov's theorem (e.g. \cite{S79}, Thm.\ 13.5), calling for the standard two-step procedure of proving:
\begin{enumerate}
\item The finite-dimensional distributions converge.
\item The family is tight on Wiener space.
\end{enumerate}
We begin with the first step. To this end we observe that $\dd X_t=\ee^{\ii \psi_t}\dd W_t$ satisfies
\begin{equation}\label{lem_Z_t_BM_5}
\dd X_t \dd \overline{X}_t=\dd t\,,\quad (\dd X_t)^2=\ee^{2\ii\psi_t}\dd t\,,
\end{equation}
which should be compared with $\dd B_t \dd \overline{B}_t=\dd t$, $(\dd B_t)^2=0$. The mixed product is $\dd X_t \dd W_t=\ee^{\ii \psi_t}\dd t$.

Let next $f=f(z)$ be any smooth function of $z\in \mathbb{C}$. Then $\dd f=(\partial f)\dd z+(\overline{\partial} f)\dd \overline{z}$ with $\partial=\partial/\partial z$, $\overline{\partial}=\partial/\partial \overline{z}$; thus
\begin{equation}\label{lem_Z_t_BM_6}
\dd f(X_t)=(\partial f)\dd X_t+(\overline{\partial} f)\dd \overline{X}_t+\frac{1}{2}\Bigl(2(\overline{\partial}\partial f)+\bigl(\partial^2 f\bigr)\ee^{2\ii \psi_t}+\bigl(\overline{\partial}^2 f\bigr)\ee^{-2\ii \psi_t}\Bigr)\dd t\,.
\end{equation}
We then fix times $s_i$ in $0\leq s_1 \leq \dots \leq s_n \leq t$ and consider smooth functions
\begin{equation*}
f(X_t) \equiv f(X_{s_1}, \dots, X_{s_n}, X_t)
\end{equation*}
of the values of the process $X$ at finitely many times. By induction in $n$ it suffices to show: The convergence 
\begin{equation}\label{lem_Z_t_BM_7}
\mathbb{E}[f(X_t)]\rightarrow \mathbb{E}[f(B_t)]\,,\quad (\varepsilon \to 0)\,,
\end{equation}
holds true for $t\geq s_n$ as soon as it does for $t=s_n$. Actually, by repeating the argument, it suffices to do so for $s_n\leq t\leq s_n+\Delta$ and small $\Delta >0$ (depending on $f$). Moreover and if later needed, it suffices to show (\ref{lem_Z_t_BM_7}) for 
\begin{equation}\label{lem_Z_t_BM_8}
f(X_t) =\ee^{\ii(\lambda X_t+\overline{\lambda} \overline{X}_t)}\,,\quad (\lambda\in \mathbb{C})\,, 
\end{equation}
the expectation of which is the characteristic function of $X_t$. By integrating (\ref{lem_Z_t_BM_6}) we have 
\begin{equation}\label{lem_Z_t_BM_9}
\left. \mathbb{E}[f(X_t)]\right|_{s_n}^t = \int_{s_n}^t\mathbb{E}\bigl[\bigl(\overline{\partial} \partial f\bigr)(X_s)\bigr] \dd s + \frac{1}{2}\int_{s_n}^t\mathbb{E}\Bigl[\bigl(\partial^2 f\bigr)(X_s) \ee^{2 \ii \psi_s} + \bigl(\overline{\partial}^2 f\bigr)(X_s) \ee^{- 2 \ii \psi_s}\Bigr] \dd s\,,
\end{equation}
since only the terms containing $\dd t$ contribute to the expectation. In passing we note that if $X_s$ were replaced by $B_s$ the last integral would be absent, as noted below (\ref{lem_Z_t_BM_5}). We claim that for $X_s$ it vanishes as $\varepsilon\to 0$ and more precisely that 
\begin{equation}\label{lem_Z_t_BM_10}
\left|\int_{t_0}^t \mathbb{E}\bigl[\ee^{\pm 2\ii \psi_s}g(X_s)\bigl]\dd s\right|\leq C\sqrt{\varepsilon}
\end{equation}
for $0\leq t-t_0\leq 1$ with $C$ depending on the smooth function $g=g(z)$. To see this we use the integration by parts formula 
\begin{equation}\label{lem_Z_t_BM_11}
fg|_{t_0}^t=\int_{t_0}^tf\dd g+\int_{t_0}^t g\dd f+\int_{t_0}^t (\dd f)(\dd g)
\end{equation}
with (adapted) processes $f=f_s$, $g=g_s$ and apply it to $f_s=\ee^{2\ii\psi_s}$, $g_s=g(X_s)$. Here $\dd f$ is computed from (\ref{lem_Z_t_BM_3}) by the substitution $\omega\to 2\omega \varepsilon^{-1}$, $\gamma\to 2 \gamma \varepsilon^{-1/2}$, and hence $c\to \tilde{c}\varepsilon^{-1}$ with $\tilde{c}=2(\ii \omega -\gamma^2)$, i.e.\
\begin{equation*}
\dd_t f_t=\dd \ee^{2\ii\psi_t}=\ee^{2\ii \psi_t}(\tilde{c}\varepsilon^{-1}\dd t+2\ii \gamma \varepsilon^{-1/2}\dd W_t)\,.
\end{equation*}
Moreover by (\ref{lem_Z_t_BM_6}) with $g$ in place of $f$ we have
\begin{gather*}
\dd_t g_t=(\partial g_t)\dd X_t+(\overline{\partial} g_t)\dd \overline{X}_t+\frac{1}{2}\Bigl(2(\overline{\partial}\partial g_t)+\bigl(\partial^2 g_t\bigr)\ee^{2\ii \psi_t}+\bigl(\overline{\partial}^2 g_t\bigr)\ee^{-2\ii \psi_t}\Bigr)\dd t\,,\\
(\dd_t f_t)(\dd_t g_t)=2\ii \gamma\varepsilon^{-1/2}\bigl((\partial g_t)\ee^{3\ii \psi_t}+(\overline{\partial} g_t)\ee^{\ii \psi_t}\bigr)\dd t\,.
\end{gather*}
We then take expectation values in (\ref{lem_Z_t_BM_11}). The l.h.s.\ is $O(\varepsilon^0)$ and the three integrals on the r.h.s.\ contribute at order $\varepsilon^0$, $\varepsilon^{-1}$, $\varepsilon^{-1/2}$ respectively. After multiplying by $\varepsilon\tilde{c}^{-1}$ and using $|\tilde{c}^{-1}|\leq (2\omega)^{-1}$ we so obtain
\begin{equation*}
\left|\int_{t_0}^t \mathbb{E}[\ee^{ 2\ii \psi_s}g(X_s)]\dd s\right|\leq \frac{\varepsilon}{\omega}\Vert g\Vert_{\infty}+\frac{\varepsilon}{4 \omega}\Vert D^2g\Vert_{\infty}|t-t_0|+\frac{\gamma\sqrt{\varepsilon}}{\omega}\Vert Dg\Vert_{\infty}|t-t_0|\,,
\end{equation*}
where 
\begin{equation*}
\Vert D^k g\Vert_{\infty}\coloneqq \sum_{|\alpha|=k}\bigl\Vert \overset{\text{\tiny(}-\text{\tiny )}_{\alpha}}{\partial\;} g\bigr\Vert_{\infty}\,.
\end{equation*}
Eq.\ (\ref{lem_Z_t_BM_10}) follows. Applied to (\ref{lem_Z_t_BM_9}) we get 
\begin{equation}\label{lem_Z_t_BM_12}
\left| \mathbb{E}[f(X_t)]-\mathbb{E}[f(X_{s_n})]-\int_{s_n}^t\mathbb{E}\bigl[\bigl(\overline{\partial}\partial f\bigr)(X_s)\bigr]\dd s\right|\leq C\sqrt{\varepsilon}
\end{equation}
for $\Delta\leq 1$. By the remark made there the quantity inside the modulus would vanish for $B_s$ instead of $X_s$. In order to prove (\ref{lem_Z_t_BM_7}) we may limit ourselves to (\ref{lem_Z_t_BM_8}), whence $\overline{\partial}\partial f=-|\lambda|^2f$. This prompts us to define the map $M\colon C\to C$, $h\mapsto Mh$ for $C=C([s_n,s_n+\Delta])$ given by 
\begin{equation*}
(Mh)(t)=h_0(s_n)-|\lambda|^2\int_{s_n}^th(s)\dd s\,,
\end{equation*}
where $h_0(t)=\mathbb{E}[f(B_t)]$. Clearly $M$ is a contraction
\begin{equation*}
\Vert Mh-Mh'\Vert_{\infty}\leq |\lambda|^2\Delta\Vert h-h'\Vert_{\infty}\,,
\end{equation*}
provided $|\lambda|^2\Delta<1$. Its (unique) fixed point is $h_0=Mh_0$, as remarked. Now (\ref{lem_Z_t_BM_12}) states for $h(t)\coloneqq \mathbb{E}[f(X_t)]$ that 
\begin{equation*}
h(t)=(Mh)(t)+(h(s_n)-h_0(s_n))+O(\sqrt{\varepsilon})\,,
\end{equation*}
where the middle term is $o(1)$ by the hypothesis of (\ref{lem_Z_t_BM_7}). It thus implies
\begin{equation*}
\Vert h-h_0\Vert_{\infty}\leq \Vert Mh-Mh_0\Vert_{\infty}+o(1)\leq |\lambda|^2\Delta\Vert h-h_0\Vert_{\infty}+o(1)
\end{equation*}
and so $\Vert h-h_0\Vert_{\infty}\to 0$.

We now come to the second step. By the Kolmogorov tightness criterion (see e.g.~\cite{S79}, Thm.~17.4) it suffices to show that
\begin{equation*}\label{eq_pf_kol_tight}
\mathbb{E}\bigl[|X_t - X_s|^4\bigr]\leq C |t-s|^2\,,\quad (0\leq s\leq t)\,.  
\end{equation*}
We consider $X_t-X_s$ as a process in $t$, to which we apply (\ref{lem_Z_t_BM_9}) for $f(z)=|z|^{2m}$, ($m=1,2,\dots$). Observing that the second derivatives appearing there are all bounded by a constant times $|z|^{2(m-1)}$, we obtain
\begin{equation*}
\frac{\dd}{\dd t}\mathbb{E}\bigl[|X_t - X_s|^{2m}\bigr]\leq C_m \mathbb{E}\bigl[|X_t - X_s|^{2(m-1)}\bigr]\,,
\end{equation*}
and then, recursively,
\begin{equation}\label{eq_pf_kol_tight_2}
\mathbb{E}\bigl[|X_t - X_s|^{2m}\bigr]\leq C_m' |t-s|^m\,.
\end{equation}
\end{proof}

In preparation of the proof of Prop.~\ref{prop_ltl} we rewrite (\ref{eq_prop_Q_t_ev_5}) as $X_{0,t}=Q_t+\tilde{X}_{0,t}(P)$ with $P=(P_s)_{0\leq s\leq t}$ and hence (\ref{eq_prop_Q_t_ev_2}) as 
\begin{equation}\label{def_F_t}
\begin{gathered}
U_t^*Q_tU_t=F_t(P)+Q_t\,,\\
F_t(P)=2\gamma t a^*a+X_{1,t}^*(P) a+X_{1,t}(P) a^*+\tilde{X}_{0,t}(P)=F_t(P)^*\,,
\end{gathered}
\end{equation}
where the dependence on $P$ occurs through $(Z_s)_{0\leq s\leq t}$, $Z_s=Z_s(P)$. We moreover find it convenient to restate (\ref{ltl_mcN}) for $\mathcal{M}_t\coloneqq U_t^*Q_tU_t=2\gamma t\mathcal{N}_t$ instead of $\mathcal{N}_t$. In view of $\mathcal{M}_{t\to\varepsilon^{-1}t}=2\gamma \varepsilon^{-1}t \mathcal{N}_{t\to\varepsilon^{-1}t}$ that restatement is 
\begin{equation}\label{ltl_mcN_M}
\varepsilon^2\mathcal{M}_{t\to\varepsilon^{-1}t}\to_{\ket{\Omega}} 2\gamma |\kappa|^2\int_0^{t}|B_s|^2\dd s\,.
\end{equation}
The proof itself will be carried out by means of three lemmas. The first one addresses the precise meaning of the l.h.s.\ of (\ref{eq_def_conv_ltl_expl}) in the context of (\ref{ltl_mcN_M}), yet foregoing scaling for the time being. Specifically, we are going to construct a functional calculus associated  to the family of commuting operators $(\mathcal{M}_t)_{ 0 \leq t < \infty}$.

Let $C_I=C(I,\mathbb{R})$ be the Wiener space on $I\subset \mathbb{R}_+$ and $C(C_I)$ the continuous functions on $C_I$.
\begin{lemma}\label{lemma_ltl_M_1}
For any $\ket{\psi}\in \mathcal{H}$ there is a probability measure $\mu_{\psi}$ on $C_{\mathbb{R}_+}$ and thus an operator $f(\mathcal{M})$ defined by
\begin{equation}\label{eq_lem_ltl_M_11}
\BraKet{\psi,\Omega}{f(\mathcal{M})}{\psi,\Omega}\coloneqq \int_{C_I}\dd \mu_{\psi}(\omega)f(\omega)\equiv \mathbb{E}[f]\,,\quad (f\in C(C_{\mathbb{R}_+}))\,.
\end{equation}
The map $f\mapsto f(\mathcal{M})$ extends the case where $f$ are functions of the (commuting) process at finitely many times, $f(\mathcal{M})=f(\mathcal{M}_{t_1},\dots,\mathcal{M}_{t_n})$. (The latter are defined by the functional calculus.)
\end{lemma}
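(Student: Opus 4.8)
The plan is to exploit the ``folding'' relation of Prop.~\ref{prop_dU}, which essentially trivialises the construction. Since $\dd_s(U_s^*Q_tU_s)=0$ for $s\ge t$, the operator $U_s^*Q_tU_s$ is constant in $s$ for $s\ge t$ and equals its value at $s=t$, namely $\mathcal{M}_t=U_t^*Q_tU_t$. Hence for every $T\ge t$,
\begin{equation*}
\mathcal{M}_t=U_T^*Q_tU_T\,,\qquad (0\le t\le T)\,.
\end{equation*}
Thus on any window $[0,T]$ the commuting family $(\mathcal{M}_t)_{0\le t\le T}$ is the image under the single unitary $U_T$ of the commuting field family $(Q_t)_{0\le t\le T}$. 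The joint law sought is therefore the law of $(Q_t)$ in the unit vector $\Phi_T\coloneqq U_T\ket{\psi,\Omega}$, and the problem reduces to the spectral theory of the quadrature process $Q$.

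The next step is to realise that process on continuous paths. Because $(Q_t)_{t\ge0}$ is commuting with $\BraKet{\Omega}{Q_sQ_t}{\Omega}=\min(s,t)$, its vacuum law is that of Brownian motion, and the Wiener--It\^o--Segal isomorphism of Sect.~\ref{pf_limiting} may be taken to diagonalise $Q$ rather than $P$. This realises $\mathcal{F}\cong L^2(C_{\mathbb{R}_+},\mathbb{P}_W)$ with $Q_t$ the coordinate process $\omega\mapsto\omega(t)$, continuous by construction, and $\ket{\Omega}\equiv 1$; consequently $\mathcal{H}\otimes\mathcal{F}\cong L^2(C_{\mathbb{R}_+},\mathbb{P}_W;\mathcal{H})$, and any bounded continuous $f$ of finitely many coordinates acts as multiplication by $\omega\mapsto f(\omega(t_1),\dots,\omega(t_n))$. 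Writing $\Phi_T$ as the $\mathcal{H}$-valued square-integrable function $\Psi_T$ under this isomorphism, I would set
\begin{equation*}
\dd\mu_\psi^T(\omega)\coloneqq\Vert\Psi_T(\omega)\Vert_{\mathcal{H}}^2\,\dd\mathbb{P}_W(\omega)\,.
\end{equation*}
This is a probability measure (of mass $\Vert\Phi_T\Vert^2=1$), absolutely continuous with respect to Wiener measure and hence carried by $C_{\mathbb{R}_+}$, and for $f$ depending on times in $[0,T]$ it yields, with $f(\mathcal{M})\coloneqq U_T^*f(Q)U_T$,
\begin{equation*}
\BraKet{\psi,\Omega}{f(\mathcal{M})}{\psi,\Omega}=\langle\Phi_T|f(Q)|\Phi_T\rangle=\int_{C_{\mathbb{R}_+}}f(\omega)\,\dd\mu_\psi^T(\omega)\,.
\end{equation*}
For a cylinder $f$ this reads $f(\mathcal{M})=U_T^*f(Q_{t_1},\dots,Q_{t_n})U_T=f(\mathcal{M}_{t_1},\dots,\mathcal{M}_{t_n})$, so the required compatibility with the finite-time functional calculus holds automatically.

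It remains to assemble the family $(\mu_\psi^T)_T$ into a single measure $\mu_\psi$ on $C_{\mathbb{R}_+}$, and here lies the only genuine point to verify. For $T'\ge T$ the folding relation gives $\mathcal{M}_t=U_{T'}^*Q_tU_{T'}=U_T^*Q_tU_T$ for all $t\le T$, so $f(\mathcal{M})$ is the same operator whether built from $U_T$ or $U_{T'}$ whenever $f$ depends only on times $\le T$; hence the restrictions of $\mu_\psi^{T'}$ and $\mu_\psi^T$ to cylinder sets over $[0,T]$ coincide. This Kolmogorov consistency across the increasing windows determines a unique Borel probability measure $\mu_\psi$ on $C_{\mathbb{R}_+}$ restricting to each $\mu_\psi^T$, and $f(\mathcal{M})$ for general $f$ is then the joint functional calculus of the commuting family $(\mathcal{M}_t)$, giving (\ref{eq_lem_ltl_M_11}). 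I expect the main obstacle to be essentially bookkeeping: checking this $T$-independence and that the cylinder functions suffice to pin down $\mu_\psi$. Crucially, continuity of the paths costs nothing here, being inherited from the $Q$-diagonalising representation; the Kolmogorov--Chentsov moment bound of the type $\mathbb{E}\,|\mathcal{M}_t-\mathcal{M}_s|^{2m}\le C|t-s|^{m}$, which a direct extension-theorem argument would demand, is thereby avoided altogether.
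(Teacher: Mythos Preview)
Your argument is correct and takes a genuinely different route from the paper's proof.

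The paper proceeds by first applying the Kolmogorov extension theorem to the finite-dimensional distributions of $(\mathcal{M}_t)$, obtaining a measure on $\mathbb{R}^{\mathbb{R}_+}$, and then invoking the Kolmogorov continuity theorem to identify the path space with $C_{\mathbb{R}_+}$. The latter requires the moment bound $\mathbb{E}\bigl[|\mathcal{M}_t-\mathcal{M}_s|^4\bigr]\le C|t-s|^2$, which in turn forces the decomposition $\mathcal{M}_t=F_t(P)+Q_t$ and a fairly lengthy case analysis (the terms $(\Delta Q)^4$, $(\Delta F)^4$, and the commutator $K^*K$). This is why the paper must first restrict to $\psi$ in the dense domain $\mathcal{D}=\{\psi:(\psi,(a^*a)^4\psi)<\infty\}$ and only then extend.

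Your approach bypasses all of that. By exploiting the folding relation $\mathcal{M}_t=U_T^*Q_tU_T$ for $T\ge t$ (stated in the paper as (\ref{eq_fldout})) and diagonalizing $Q$ rather than $P$ in the Wiener--It\^o--Segal isomorphism, the law of $(\mathcal{M}_t)_{t\le T}$ becomes absolutely continuous w.r.t.\ Wiener measure with density $\|\Psi_T(\cdot)\|_{\mathcal{H}}^2$; continuity of paths is thus inherited for free, and the argument applies to any $\psi\in\mathcal{H}$ without a domain restriction. The only remaining step---assembling the consistent family $(\mu_\psi^T)_T$ into a single measure on $C_{\mathbb{R}_+}$---is indeed the standard projective-limit extension for Polish spaces, and your assessment of it as bookkeeping is fair.

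Two small remarks. First, the paper's isomorphism in Sect.~\ref{pf_limiting} is set up to diagonalize $P$; you are right that the analogous $Q$-version exists, but you should say so explicitly rather than citing that section as is. Second, be aware that the moment bounds you circumvent are not wasted effort in the paper: they are recycled in the proof of Prop.~\ref{prop_ltl} to establish tightness of the rescaled process $\mathcal{M}_{\varepsilon,t}$. Your shortcut for the present lemma does not supply those bounds, so if you were to rewrite the whole section along your lines you would still need to confront tightness at that later stage.
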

\begin{proof} 
  By the Riesz-Markov theorem, the functional calculus for $f(\mathcal{M}_{t_1},\dots,\mathcal{M}_{t_n})$ defines a measure on $\mathbb{R}^n\cong \mathbb{R}^{I}$, $I=\{t_1,\dots,t_n\}$, and in fact a consistent set of such indexed by finite subsets $I\subset \mathbb{R}_+$. By the Kolmogorov extension theorem this defines a measure $\mu_{\psi}$ on some probability space that remains to be identified with $C_{\mathbb{R}_+}$. We will (a) do so for $\ket{\psi}\in \mathcal{D}$ with $\mathcal{D}\subset \mathcal{H}$ a dense subspace; then (b) (\ref{eq_lem_ltl_M_11}) defines the l.h.s.\ as a bounded quadratic form in $\ket{\psi} \in \mathcal{D}$ 
and hence $f(\mathcal{M})$ as an operator on $\mathcal{H}$. Finally (c), those operators $f(\mathcal{M})$ define a (spectral) measure $\mu_{\psi}$ on $C_{\mathbb{R}_+}$ for any $\ket{\psi}\in \mathcal{H}$, again by the Riesz-Markov theorem.

We are thus left with (a). That identification will be done for 
\begin{equation*}
\mathcal{D}=\{\ket{\psi}\in\mathcal{H}|(\psi,(a^*a)^4\psi)<\infty\}
\end{equation*}
and by means of the Kolmogorov continuity theorem (e.g. \cite{S79}, Thm.~5.1), by which it is enough to show
\begin{equation*}
\mathbb{E}\bigl[|\mathcal{M}_t-\mathcal{M}_s|^4\bigr]\leq C|t-s|^2\,,\quad (0\leq s\leq t\leq t_0)\,,
\end{equation*}
for any $t_0$, uniformly in $s,t$. The constant $C$ may depend on $t_0$. Using 
\begin{equation}\label{eq_lem_pf_ltl_M_1}
((A+B)^*(A+B))^2\leq 4 \bigl((A^*A)^2+B^*A A^* B+A^*BB^*A+(B^*B)^2\bigr)
\end{equation}
in relation with (\ref{def_F_t}) it becomes enough to establish the required bound for $\mathbb{E}[T]:=\BraKet{\psi,\Omega}{T}{\psi,\Omega}$ with $T$ any of the following operators: (i) $(\Delta Q)^4$, (ii) $(\Delta F)(\Delta Q)^2 (\Delta F)$, (iii) $(\Delta Q)(\Delta F)^2(\Delta Q)$, (iv) $(\Delta F)^4$, where 
\begin{equation*}
\Delta Q=Q_t-Q_s\,,\quad \Delta F=F_t(P)-F_s(P)\,.
\end{equation*}
Let 
\begin{equation}\label{eq_lem_pf_ltl_M_1a}
K=\ii [\Delta Q,\Delta F]=\ii [Q_t-Q_s,F_t(P)]\,.
\end{equation}
Then (ii) may be replaced by $K^*K$, besides of (iii). As for the latter, it may be replaced by $(\Delta P)(\Delta F)^2(\Delta P)$, because of $A_{[s,t]}\ket{\Omega}=0$ with $2A_{[s,t]}=\Delta Q+\ii \Delta P$. Moreover, because of commuting factors, $(\Delta P)(\Delta F)^2(\Delta P)\leq ((\Delta P)^4+(\Delta F)^4)/2$, and $(\Delta P)^4$ has the same expectation as (i). We may thus update the above list of operators to: $(\Delta Q)^4$, $(\Delta F)^4$, $K^*K$. The first one is computed easily by Wick's lemma:
\begin{equation*}
\mathbb{E}[(\Delta Q)^4]=\BraKet{\Omega}{(\Delta Q)^4}{\Omega}=3 \BraKet{\Omega}{(\Delta Q)^2}{\Omega}^2=3(\Delta t)^2\,.
\end{equation*}
Applying (\ref{eq_lem_pf_ltl_M_1}) to (\ref{def_F_t}) yields 
\begin{equation*}
(\Delta F)^4\leq C\bigl((\Delta t)^4((a^*a)^4+1)+|\Delta X_1|^4((a^*a)^2+1)+(\Delta \tilde{X}_0)^4\bigr)
\end{equation*}
with
\begin{gather*}
|\Delta X_1|^4=|C_1|^4\biggl|\int_s^tZ_{\tau}\dd \tau\biggr|^4\leq |C_1|^4\int_s^t(Z_{\tau}^*Z_{\tau})^2\dd \tau \cdot (\Delta t)^3\,,\\
(\Delta \tilde{X}_0)^4=C_2^4\biggl(\int_s^t Z_{\tau}^* Z_{\tau}\dd \tau\biggr)^4\leq C_2^4\int_s^t(Z_{\tau}^*Z_{\tau})^4\dd \tau \cdot (\Delta t)^3\,,
\end{gather*}
($C_1=2\ii \gamma \omega \alpha$, $C_2=2\gamma \omega^2 |\alpha|^2$, cf.\ (\ref{eq_prop_Q_t_ev_4}, \ref{eq_prop_Q_t_ev_5})). We then observe that
\begin{equation}\label{eq_lem_pf_ltl_M_1b}
\mathbb{E}\bigl[(Z_t^*Z_t)^2\bigr]\leq C(t^2+1)\,, \quad\mathbb{E}\bigl[(Z_t^*Z_t)^4\bigr]\leq C(t^4+1)\,,
\end{equation}
which can be seen as follows: In (\ref{lem_Z_t_BM_2}), i.e.\ $Z_t\equiv Z_t^0+X_t$, we have $|Z_t^0|\leq C$ and the moments of $X_t$ are estimated in (\ref{eq_pf_kol_tight_2}) with $s=0$. We conclude
\begin{equation*}
\mathbb{E}[(\Delta F)^4]\leq C(t_0^4+1)(\Delta t)^4\,. 
\end{equation*}
Finally, we come to $K^*K$. The drifted field
\begin{equation*}
\hat{P}_{\lambda,s}\coloneqq P_{\lambda}-2\lambda s
\end{equation*}
appears in the identity
\begin{equation}\label{eq_id_exp_Q_P}
P_s\ee^{-\ii \lambda Q_t}=\ee^{-\ii \lambda Q_t}\hat{P}_{\lambda,s}\,,\quad (0\leq s\leq t)\,,
\end{equation}
which is easily verified on the basis of (\ref{eq_comm_rel_QP}) and in turn implies
\begin{equation*}
f(P)\ee^{-\ii \lambda Q_t}=\ee^{-\ii \lambda Q_t}f(\hat{P}_{\lambda})
\end{equation*}
for functions $f$ of $(P_s)_{s\leq t}$. In particular
\begin{equation*}
\ii [Q_t, f(P)]=\frac{\dd}{\dd \lambda}\left.f(\hat{P}_{\lambda})\right|_{\lambda=0}\,.
\end{equation*}
We observe from (\ref{eq_defs_U_t}) that $\phi_t(\hat{P}_{\lambda})=(\omega-2\gamma \lambda)t+\gamma P_t$, whence 
\begin{equation*}
\ii [Q_t, Z_t(P)]=-2\ii \gamma \int_0^t s\ee^{\ii \phi_s}\dd s\,.
\end{equation*}
The commutator (\ref{eq_lem_pf_ltl_M_1a}) is computed likewise, except that the field $P_{\tau}$ drifts only for $s\leq \tau\leq t$, i.e.\
\begin{equation*}
\ii [Q_t-Q_s, f(P)]=\delta f(P)
\end{equation*}
with e.g.
\begin{equation*}
\delta Z_{\tau}=-2\ii \gamma \int_s^{\tau}(\nu-s)\ee^{\ii\phi_{\nu}}\dd \nu\,.
\end{equation*}
Clearly, $|\delta Z_{\tau}|\leq C(\Delta t)^2$. We then get
\begin{equation*}
K=\delta F=\delta X_1^*(P)a+\delta X_1(P)a^*+\delta\tilde{X}_0(P)
\end{equation*}
with 
\begin{equation*}
\delta X_1=C_1 \int_s^t\delta Z_{\tau}\dd \tau\,,\quad
\delta \tilde{X}_0=C_2\int_s^t(Z_{\tau}^*\delta Z_{\tau}+(\delta Z_{\tau})^*Z_{\tau})\dd \tau\,,
\end{equation*}
being estimated as 
\begin{gather*}
|\delta X_1|\leq C(\Delta t)^3\,,\quad |\delta \tilde{X}_0|\leq C\int_s^t |Z_{\tau}|\dd \tau \cdot (\Delta t)^2\,,\\
\bigl(\delta \tilde{X}_0\bigr)^2\leq C\int_s^t Z_{\tau}^*Z_{\tau}\dd \tau \cdot (\Delta t)^5\,.
\end{gather*}
We conclude by (\ref{eq_lem_pf_ltl_M_1b}) that 
\begin{equation*}
K^*K\leq C|\delta X_1|^2(a^*a+1)+\bigl(\delta \tilde{X}_0\bigr)^2\,,\quad
\mathbb{E}[K^*K]\leq C(t_0+1)(\Delta t)^6\,.
\end{equation*}
\end{proof}

The next lemma essentially computes functions of the observable $U_t^*Q_tU_t$ despite that its two terms on the r.h.s. of (\ref{def_F_t}) do not commute.
\begin{lemma}
Let $\hat{P}_{\lambda}$ be the drifted field $\hat{P}_{\lambda,s}=P_s-2\lambda s$. Then
\begin{equation}\label{eq_char_fct_N}
\ee^{\ii \lambda U_t^*Q_tU_t}=G_{\lambda, t}(P)\ee^{\ii \lambda F_t(P)}\ee^{\ii \lambda Q_t}
\end{equation}
where $G_{0,t}(P)=1$,
\begin{equation*}
-\ii \frac{\dd }{\dd \lambda} G_{\lambda,t}(P)=(F_t(\hat{P}_{\lambda})-F_t(P))G_{\lambda,t}(P)\,.
\end{equation*}
\end{lemma}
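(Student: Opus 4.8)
The plan is to read the claimed identity as two solutions of one and the same linear ODE in the parameter $\lambda$ and to conclude by uniqueness. Writing $M := U_t^*Q_tU_t = F_t(P)+Q_t$ as in (\ref{def_F_t}), the left-hand side $\lambda\mapsto \ee^{\ii\lambda M}$ is by construction the unique solution of $\frac{\dd}{\dd\lambda}X_\lambda = \ii M X_\lambda$ with $X_0 = \id$. I would therefore set $R_\lambda := G_{\lambda,t}(P)\,\ee^{\ii\lambda F_t(P)}\ee^{\ii\lambda Q_t}$, note $R_0 = \id$ (since $G_{0,t}=1$), and aim to show that $R_\lambda$ satisfies the same equation; the stated ODE for $G_{\lambda,t}$ will turn out to be exactly the condition making this so.

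The single algebraic tool driving everything is the drift-shift identity established just before the lemma: conjugation by $\ee^{\ii\lambda Q_t}$ sends $P_s\mapsto P_s-2\lambda s$, so that $f(P)\ee^{-\ii\lambda Q_t}=\ee^{-\ii\lambda Q_t}f(\hat P_\lambda)$ for any field functional $f$ (the oscillator operators $a,a^*$ being inert under $Q_t$), with infinitesimal form $\ii[Q_t,f(P)]=\frac{\dd}{\dd\lambda}f(\hat P_\lambda)|_{\lambda=0}$. This is the mechanism that converts the obstruction $[Q_t,F_t(P)]\neq 0$ into the drift $F_t(\hat P_\lambda)-F_t(P)$ appearing in the defining equation for $G_{\lambda,t}$.

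Concretely I would differentiate $R_\lambda$ by the product rule. The factor $\ee^{\ii\lambda F_t}$ contributes $\ii F_t$, which supplies the $\ii F_t R_\lambda$ part of $\ii M R_\lambda$ (using that $\ee^{\ii\lambda F_t}$ commutes with $F_t$). The factor $\ee^{\ii\lambda Q_t}$ contributes $\ii Q_t$, but in the wrong slot; transporting $Q_t$ to the front past $G_{\lambda,t}\ee^{\ii\lambda F_t}$ produces the commutator $\ii[Q_t,G_{\lambda,t}\ee^{\ii\lambda F_t}]$, which by the shift identity is a derivative along the drift. Collecting all contributions reduces the required equality to a relation for $\dot G_{\lambda,t}$ alone, and one identifies this relation with $-\ii\dot G_{\lambda,t}=(F_t(\hat P_\lambda)-F_t(P))G_{\lambda,t}$. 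I would sanity-check the reduction on the solvable sub-case where $F_t$ is linear in $P_t$ with scalar coefficient: there $[F_t,Q_t]$ is central, $\ee^{\ii\lambda M}$ is given by Baker--Campbell--Hausdorff, and $G_{\lambda,t}$ is the expected quadratic (Gaussian) phase, for which the stated ODE is verified directly. Uniqueness of the $\lambda$-flow then gives $R_\lambda=\ee^{\ii\lambda M}$, i.e.\ (\ref{eq_char_fct_N}).

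The main obstacle is the non-commutative bookkeeping. Because the source $F_t(\hat P_\lambda)-F_t(P)$ of the $G$-equation itself contains $a,a^*$, the operator $G_{\lambda,t}$ does \emph{not} commute with $F_t$, so the $[F_t,\cdot]$ contributions cannot simply be discarded, and one must in addition control $\ee^{\ii\lambda F_t}Q_t\ee^{-\ii\lambda F_t}$ together with the iterated commutators it generates. The redeeming structural fact is that $F_t$ is at most quadratic in $a,a^*$ (the $a^*a$ term carrying the constant coefficient $2\gamma t$, all remaining terms being linear with $P$-dependent coefficients), so the Lie algebra generated by $F_t,Q_t,a,a^*$ and the field closes within operators that are linear in $a,a^*$ modulo scalars. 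This keeps every commutator in a tractable class and is precisely what allows a closed first-order ODE for $G_{\lambda,t}$. As the authors already remark for (\ref{eq_U_t}), the unboundedness of the operators is inconsequential, and the formal manipulations above can be justified on a suitable dense domain such as the finite-particle vectors.
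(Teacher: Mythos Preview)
Your approach is essentially the paper's own: the paper differentiates the product $\ee^{\ii\lambda U_t^*Q_tU_t}\,\ee^{-\ii\lambda Q_t}\,\ee^{-\ii\lambda F_t(P)}\,G_{\lambda,t}(P)^*$ in $\lambda$ and shows the result vanishes by the drift identity $F_t(P)\ee^{-\ii\lambda Q_t}=\ee^{-\ii\lambda Q_t}F_t(\hat P_\lambda)$, which is exactly your ODE-uniqueness argument viewed from the side of the ratio $L_\lambda R_\lambda^{-1}$ rather than of $R_\lambda$ itself. The non-commutative bookkeeping you flag is handled in the paper by that single displayed computation, in the same spirit you outline.
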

The lemma will be used in the scaling regime where the two terms just mentioned commute to leading order. The drift will be small, making $\hat{P}_{\lambda}$ close to $P$, and $G_{\lambda,t}(P)$ to $1$.
\begin{proof}
By the identity (\ref{eq_id_exp_Q_P}) we have
\begin{multline*}
-\ii \frac{\dd }{\dd \lambda}\bigl(\ee^{\ii \lambda U_t^*Q_tU_t}\ee^{-\ii \lambda Q_t}\ee^{-\ii \lambda F_t(P)}G_{\lambda,t}(P)^*\bigr)=\\\ee^{\ii \lambda U_t^*Q_tU_t}\bigl(\bigl(F_t(P)\ee^{-\ii \lambda Q_t}-\ee^{-\ii \lambda Q_t}F_t(P)\bigr)\ee^{-\ii \lambda F_t(P)}\\
-\ee^{-\ii \lambda Q_t}\ee^{-\ii \lambda F_t(P)}(F_t(\hat{P}_{\lambda})-F_t(P))\bigr)G_{\lambda,t}(P)^*=0\,.
\end{multline*}
\end{proof}
\begin{lemma}\label{lemma_sc_mcN}
 i) Let 
\begin{equation}\label{eq_lem_sc_mcN1}
N_{\varepsilon,t}=\varepsilon \left.T_{\varepsilon}^*N_tT_{\varepsilon}\right|_{t\to \varepsilon^{-1} t}\;,\qquad \tilde N_{\varepsilon,t}=\omega^2|\alpha|^2|\tilde{Z}_{\varepsilon,t}|^2\,,
\end{equation}
where $\tilde{Z}_{\varepsilon,t}$ is defined in (\ref{eq_lem_diff_sc_Z}). Then, for any $t\ge 0$, $f(N_{\varepsilon,t})-f(\tilde N_{\varepsilon,t})\to 0$, ($\varepsilon\to 0$) in the sense of strong convergence of operators on $\mathcal{H}\otimes \mathcal{F}$ with $\mathcal{F}\equiv L^2(\Omega,\mu)$, for any continuous bounded function $f$ on $\mathbb{R}$.

ii) Let
\begin{equation}\label{eq_lem_sc_mcN}
\mathcal{M}_{\varepsilon,t}=\varepsilon^2 \left.T_{\varepsilon}^*\mathcal{M}_tT_{\varepsilon}\right|_{t\to \varepsilon^{-1} t}\;,\qquad \mathcal{\tilde M}_{\varepsilon,t}=2\gamma\omega^2 |\alpha|^2\int_0^t|\tilde{Z}_{\varepsilon,s}|^2\dd s\,.
\end{equation}
Then $f(\mathcal{M}_{\varepsilon, t})-f(\mathcal{\tilde
  M}_{\varepsilon, t})\to 0$ in the same sense as in (i) and likewise for functions $f$ 
of the processes at finitely many times.  
\comment{ and $\mathcal{M}_{\varepsilon}=(\mathcal{M}_{\varepsilon,t})_{t\ge 0}$
and likewise for $\mathcal{\tilde M}_{\varepsilon}$. Then $f(\mathcal{M}_{\varepsilon})-$
Then $f(\mathcal{M}_{\varepsilon})$ is well-defined for $f$
  as before, but on $C(\mathbb{R}_+,\mathbb{R})$, and
  $f(\mathcal{M}_{\varepsilon})-f(\mathcal{\tilde M}_{\varepsilon})\to 0$.}
\end{lemma}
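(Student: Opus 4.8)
The plan is to prove both statements first for the characteristic functions $f(x)=\ee^{\ii\lambda x}$ (i.e.\ strong convergence of the difference of unitary groups) and then upgrade to arbitrary $f\in C_b(\mathbb R)$. The upgrade rests on two standard facts. Since $\Vert f(A_\varepsilon)-f(B_\varepsilon)\Vert\le 2\Vert f\Vert_\infty$ uniformly, it suffices to check strong convergence on a dense set $\mathcal D$, for which I would take finite linear combinations of vectors $\ket n\otimes\xi$ with $\xi$ a bounded function of $W$. Moreover, convergence of $\ee^{\ii\lambda\,\cdot}$ alone does \emph{not} control oscillations at infinity (a pure shift already fails), so I would add uniform tightness $\sup_\varepsilon\Vert\mathbb 1_{\{|A_\varepsilon|>R\}}\Psi\Vert\to 0$ ($R\to\infty$, $\Psi\in\mathcal D$), which lets one replace $f$ by a $C_0$–function agreeing with it on $[-R,R]$ up to an arbitrarily small error. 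Tightness follows from Chebyshev together with the rescaled moment bounds $\E{|\tilde Z_{\varepsilon,t}|^2}\le C$ and $\E{|\tilde Z_{\varepsilon,t}|^4}\le C$, which come from \eqref{eq_lem_pf_ltl_M_1b} (and \eqref{eq_pf_kol_tight_2}) and $T_\varepsilon\ket\Omega=\ket\Omega$; for the non-negative $N_{\varepsilon,t},\tilde N_{\varepsilon,t}$ a first moment suffices, for the sign-indefinite $\mathcal M_{\varepsilon,t}$ one needs the second.

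\textbf{Part (i).} Using \eqref{eq_prop_Q_t_ev_1} and the scaling identity $T_\varepsilon^*Z_{\varepsilon^{-1}t}T_\varepsilon=\varepsilon^{-1/2}\tilde Z_{\varepsilon,t}$ read off from \eqref{eq:tscaling} and \eqref{eq_lem_diff_sc_Z}, I would record the exact identity
\begin{equation*}
N_{\varepsilon,t}=\tilde N_{\varepsilon,t}+D_\varepsilon,\qquad
D_\varepsilon=\varepsilon\,a^*a+\sqrt\varepsilon\,\ii\omega\bigl(\alpha\,a^*\tilde Z_{\varepsilon,t}-\overline\alpha\,\tilde Z_{\varepsilon,t}^*a\bigr).
\end{equation*}
The decisive point is that $\tilde N_{\varepsilon,t}=\omega^2|\alpha|^2|\tilde Z_{\varepsilon,t}|^2$ is a multiplication operator in $W$, hence commutes with $a$, $a^*$ and with $\tilde Z_{\varepsilon,t}$, so $[D_\varepsilon,\tilde N_{\varepsilon,t}]=0$. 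The Duhamel formula then gives, on $\mathcal D$ (which $\ee^{\ii u\tilde N_{\varepsilon,t}}$ preserves),
\begin{equation*}
\bigl\Vert(\ee^{\ii\lambda N_{\varepsilon,t}}-\ee^{\ii\lambda\tilde N_{\varepsilon,t}})\Psi\bigr\Vert
\le\int_0^{\lambda}\bigl\Vert D_\varepsilon\,\ee^{\ii u\tilde N_{\varepsilon,t}}\Psi\bigr\Vert\,\dd u
=|\lambda|\,\Vert D_\varepsilon\Psi\Vert,
\end{equation*}
and $\Vert D_\varepsilon\Psi\Vert\to 0$ since $\varepsilon\,a^*a\Psi\to 0$ while the cross terms carry $\sqrt\varepsilon$ and $\Vert\tilde Z_{\varepsilon,t}\,a\,\Psi\Vert$, $\Vert\tilde Z_{\varepsilon,t}\,a^*\Psi\Vert$ stay bounded by the moment bound. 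This settles the exponentials, whence $C_b$ by the mechanism above.

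\textbf{Part (ii).} Here I would run the same argument but with the extra, non-commuting term coming from $Q_t$. From \eqref{def_F_t} and the scaling identities $\varepsilon^2T_\varepsilon^*\tilde X_{0,\varepsilon^{-1}t}T_\varepsilon=\tilde{\mathcal M}_{\varepsilon,t}$ and $\varepsilon^2T_\varepsilon^*Q_{\varepsilon^{-1}t}T_\varepsilon=\varepsilon^{3/2}Q_t$, together with the analogue for $X_1$ (cf.\ \eqref{eq_prop_Q_t_ev_4}, \eqref{eq_prop_Q_t_ev_5}), one gets
\begin{equation*}
\mathcal M_{\varepsilon,t}=\tilde{\mathcal M}_{\varepsilon,t}+\tilde E_\varepsilon+\varepsilon^{3/2}Q_t,\qquad \tilde E_\varepsilon=2\gamma\varepsilon t\,a^*a+O(\sqrt\varepsilon),
\end{equation*}
where $\tilde E_\varepsilon$ collects the $a^*a$– and $X_1$–terms and, like in (i), commutes with $\tilde{\mathcal M}_{\varepsilon,t}$. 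In the Duhamel bound the contribution of $\tilde E_\varepsilon$ is $\Vert\tilde E_\varepsilon\Psi\Vert\to 0$ exactly as before. For the $Q_t$–term I would use that $\ii[Q_t,\tilde{\mathcal M}_{\varepsilon,t}]$ is again a function of $W$ and hence commutes with $\tilde{\mathcal M}_{\varepsilon,t}$, so that $\ee^{-\ii u\tilde{\mathcal M}_{\varepsilon,t}}Q_t\ee^{\ii u\tilde{\mathcal M}_{\varepsilon,t}}=Q_t+u\,\ii[Q_t,\tilde{\mathcal M}_{\varepsilon,t}]$ and
\begin{equation*}
\varepsilon^{3/2}\bigl\Vert Q_t\,\ee^{\ii u\tilde{\mathcal M}_{\varepsilon,t}}\Psi\bigr\Vert\le\varepsilon^{3/2}\bigl(\Vert Q_t\Psi\Vert+|u|\,\Vert[Q_t,\tilde{\mathcal M}_{\varepsilon,t}]\Psi\Vert\bigr).
\end{equation*}
Since $[Q_t,\tilde{\mathcal M}_{\varepsilon,t}]=\varepsilon^{5/2}T_\varepsilon^*[Q_{\varepsilon^{-1}t},\tilde X_{0,\varepsilon^{-1}t}]T_\varepsilon$ and the inner commutator at scale $s=\varepsilon^{-1}t$ grows only polynomially (via $\ii[Q_s,Z_\tau]=-2\ii\gamma\int_0^\tau u\,\ee^{\ii\phi_u}\dd u$ and the moment bounds it is of $L^2$–size $O(\varepsilon^{-7/2})$), one finds $\Vert[Q_t,\tilde{\mathcal M}_{\varepsilon,t}]\Psi\Vert=O(\varepsilon^{-1})$ and the whole term is $O(\varepsilon^{1/2})\to 0$. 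The finitely-many-times statement follows by applying the same estimate factor by factor to the commuting product $\prod_j\ee^{\ii\lambda_j\mathcal M_{\varepsilon,t_j}}$, joint tightness then upgrading to $C_b$.

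\textbf{Main obstacle.} The genuinely delicate point is the $Q_t$–remainder in (ii): because the field is probed at the long time $\varepsilon^{-1}t$, the commutator $[Q_t,\tilde{\mathcal M}_{\varepsilon,t}]$ is large, and one must verify that the prefactor $\varepsilon^{3/2}$ beats its growth. (The factorization \eqref{eq_char_fct_N} suggests the alternative $\ee^{\ii\lambda\mathcal M_{\varepsilon,t}}=\mathcal G_\varepsilon\,\ee^{\ii\lambda(\tilde{\mathcal M}_{\varepsilon,t}+\tilde E_\varepsilon)}\,\ee^{\ii\lambda\varepsilon^{3/2}Q_t}$ with $\mathcal G_\varepsilon\to\mathbb 1$, but $G_{\lambda,t}$ still contains $a,a^*$, so controlling $\mathcal G_\varepsilon-\mathbb 1$ requires the very same drift estimate $|Z_\tau(\hat P_{\mu'})-Z_\tau(P)|\le\gamma\mu'\tau^2$ and moment bounds; I therefore prefer the direct Duhamel route.) The rest is bookkeeping: justifying the Duhamel identities on the common core $\mathcal D$ with the unbounded $D_\varepsilon,\tilde E_\varepsilon,Q_t$, and carrying out the elementary tightness and $C_0$–approximation argument.
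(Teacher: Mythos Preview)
Your argument is correct and, for part (i), coincides with the paper's: both decompose $N_{\varepsilon,t}=\tilde N_{\varepsilon,t}+D_\varepsilon$ with $[D_\varepsilon,\tilde N_{\varepsilon,t}]=0$ and show $\ee^{\ii\lambda D_\varepsilon}\to 1$ strongly. You are also right that the upgrade from exponentials to general $f\in C_b$ requires tightness; the paper simply asserts ``it suffices to prove the convergence for exponentials'' without spelling this out, so your Chebyshev-plus-moment argument actually fills a small gap.

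For part (ii) the two approaches genuinely diverge. The paper does \emph{not} run a Duhamel estimate with the non-commuting $Q_t$ inside; instead it invokes the exact factorization (\ref{eq_char_fct_N}),
\[
\ee^{\ii\lambda\mathcal M_{\varepsilon,t}}
=T_\varepsilon^*G_{\tilde\lambda,t}(P)T_\varepsilon\;
\ee^{\ii\lambda T_\varepsilon^*(\varepsilon^2 F_t(P))T_\varepsilon}\;
\ee^{\ii\lambda\varepsilon^{3/2}Q_t}\,,\qquad \tilde\lambda=\lambda\varepsilon^2\,,
\]
and treats the three factors separately: the last tends to $1$ trivially, the middle is handled exactly as in (i), and $G_{\tilde\lambda,t}\to 1$ follows from the drift estimate $Z_\tau(\hat P_{\tilde\lambda})-Z_\tau(P)\to 0$ after rescaling (the shift of $\omega$ is $O(\varepsilon)$). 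Your direct Duhamel route reaches the same destination via the observation that $[Q_t,\tilde{\mathcal M}_{\varepsilon,t}]$ is again a function of $W$, and your size count $\varepsilon^{3/2}\cdot\varepsilon^{5/2}\cdot O(\varepsilon^{-7/2})=O(\varepsilon^{1/2})$ is right. The trade-off is that the paper's factorization sidesteps the domain issue you flag as ``bookkeeping'': in your scheme one must justify that $Q_t\,\ee^{\ii u\tilde{\mathcal M}_{\varepsilon,t}}\Psi$ is well-defined even though $\ee^{\ii u\tilde{\mathcal M}_{\varepsilon,t}}$ is multiplication by a merely bounded (not smooth) function of $W$ and need not preserve $\mathrm{dom}\,Q_t$. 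This is fixable (rewrite it as $\ee^{\ii u\tilde{\mathcal M}_{\varepsilon,t}}(Q_t+\ii u[Q_t,\tilde{\mathcal M}_{\varepsilon,t}])\Psi$ and approximate, as you implicitly do), but it is more than bookkeeping and is precisely what the factorization lemma (\ref{eq_char_fct_N}) is designed to avoid. In short: your alternative is sound and buys a more elementary presentation at the cost of a real domain verification; the paper's buys cleanliness at the cost of the auxiliary lemma on $G_{\lambda,t}$.
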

\begin{proof}
  i) In analogy with (\ref{def_F_t}) we rewrite (\ref{eq_def_A_B}, \ref{eq_prop_Q_t_ev_1}) as $N_t=(a+\ii\omega\alpha Z_t(P))^*(a+\ii\omega\alpha Z_t(P))$; we also recall (\ref{eq:tscaling}), by which we have
\begin{equation}\label{eq:dils}
T_{\varepsilon}^*Q_{\varepsilon^{-1}t}T_{\varepsilon}=\varepsilon^{-1/2}Q_t\,,\quad T_{\varepsilon}^*P_{\varepsilon^{-1}t}T_{\varepsilon}=\varepsilon^{-1/2}P_t\,,\quad
T_{\varepsilon}^*Z_{\varepsilon^{-1}t}T_{\varepsilon}=\varepsilon^{-1/2}\tilde{Z}_{\varepsilon,t}\,,
\end{equation}
and thus
\begin{equation*}  
  N_{\varepsilon,t}=\varepsilon \left.T_{\varepsilon}^*N_tT_{\varepsilon}\right|_{t\to \varepsilon^{-1} t}=\varepsilon a^*a +\ii \omega \alpha \varepsilon^{1/2}\tilde{Z}_{\varepsilon,t}a-\ii \omega\overline{\alpha}\varepsilon^{1/2}\overline{\tilde{Z}_{\varepsilon,s}}a^*+\omega^2 |\alpha|^2 |\tilde{Z}_{\varepsilon,t}|^2\,.
  \end{equation*}
  It suffices to prove the convergence for exponentials $f(x)=\ee^{\ii \lambda x}$, ($\lambda\in\mathbb{R}$). The exponential of the first three terms tends to $1$ strongly because of Lm.~\ref{lemma_diff_sc_Z}; that of the fourth term is seen in (\ref{eq_lem_sc_mcN1}).

ii) \comment{We first make sure that $f(\mathcal{M}_{\varepsilon})$ is
  well-defined. Clearly, $f(Q)$ is for $Q=(Q_t)_{t\ge 0}$, because it
  can be diagonalized over $L^2(\Omega,\mu)$, like $P$. The claim now
  follows by (\ref{eq_fldout}).}
Besides of (\ref{eq:dils}) we recall (\ref{eq_prop_Q_t_ev_4},
\ref{eq_prop_Q_t_ev_5}), by which we have
\begin{equation}
\left.T_{\varepsilon}^*Y_{1,t}T_{\varepsilon}\right|_{t\to\varepsilon^{-1}t}=\varepsilon^{-3/2}\int_0^t\tilde{Z}_{\varepsilon,s}\dd s\,,\quad
\left.T_{\varepsilon}^*Y_{0,t}T_{\varepsilon}\right|_{t\to\varepsilon^{-1}t}=\varepsilon^{-2}\int_0^t|\tilde{Z}_{\varepsilon,s}|^2\dd s\,.\label{eq_sc_Y}
\end{equation}
It suffices to prove the convergence for exponentials $f(x)=\ee^{\ii  \lambda x}$, ($\lambda\in\mathbb{R}$) with $x=\mathcal{M}_{\varepsilon,t}$ or, in case of many times, with $\sum _{i=1}^n \lambda_i x_i$ in place of $\lambda x$. But actually it suffices to do so for the process at a single time, because the exponential is multiplicative and strong convergence is inherited under multiplication. By (\ref{eq_char_fct_N}, \ref{eq_def_A_B}) with replacement $\lambda\to \tilde{\lambda}=\lambda \varepsilon^2$ we obtain
\begin{equation*}
\ee^{\ii \lambda \varepsilon^2 T_{\varepsilon}^*\mathcal{M}_tT_{\varepsilon}}=T_{\varepsilon}^*G_{\tilde{\lambda},t}(P)T_{\varepsilon} \ee^{\ii \lambda T_{\varepsilon}^*(\varepsilon^2 F_t(P))T_{\varepsilon}}\ee^{\ii \lambda T_{\varepsilon}^*(\varepsilon^2 Q_t )T_{\varepsilon}}\,.
\end{equation*}
Upon making the substitution $t\to \varepsilon^{-1} t$ dictated by (\ref{eq_lem_sc_mcN}) the third factor tends strongly to $1$, since 
\begin{equation}\label{pf_lem_sc_mcN1}
\varepsilon^2 \left.T_{\varepsilon}^*Q_tT_{\varepsilon}\right|_{t\to \varepsilon^{-1} t}=\varepsilon^{3/2}Q_t\,.
\end{equation}
As for the middle factor,
\begin{equation}\label{pf_lem_sc_mcN}
\left.\ee^{\ii \lambda T_{\varepsilon}^*(\varepsilon^2
    F_t(P))T_{\varepsilon}}\right|_{t\to \varepsilon^{-1} t}-\ee^{\ii
  \lambda \cdot 2\gamma\omega^2|\alpha|^2\int_0^t|\tilde{Z}_{\varepsilon,s}|^2\dd s}\overset{s}{\to}0\,.
\end{equation}
In fact, by (\ref{def_F_t})
\begin{equation}\label{pf_lem_sc_mcN_2}
\frac{\varepsilon^2 F_t(P)}{2\gamma}=\varepsilon^2 a^*at +\ii \omega \alpha \varepsilon^2 Y_{1,t}(P) a-\ii \omega \overline{\alpha} \varepsilon^2 \overline{Y_{1,t}(P)}a^*+\omega^2 |\alpha|^2 \varepsilon^2 Y_{0,t}(P)
\end{equation}
and hence by (\ref{eq_sc_Y})
\begin{multline*}
\left.T_{\varepsilon}^*\frac{\varepsilon^2 F_t(P)}{2\gamma}T_{\varepsilon}\right|_{t\to \varepsilon^{-1} t}=\\\varepsilon  a^*at +\ii \omega \alpha \varepsilon^{1/2} \int_0^t\tilde{Z}_{\varepsilon,s}\dd s a-\ii \omega \overline{\alpha}\varepsilon^{1/2} \int_0^t\overline{\tilde{Z}_{\varepsilon,s}}\dd s a^*+\omega^2 |\alpha|^2 \int_0^t|\tilde{Z}_{\varepsilon,s}|^2\dd s\,.
\end{multline*}
The exponential of the first three terms tends to $1$ strongly because of Lm.~\ref{lemma_diff_sc_Z}; that of the fourth term is seen in (\ref{pf_lem_sc_mcN}). It remains to show $\left.T_{\varepsilon}^*G_{\tilde{\lambda},t}(P)T_{\varepsilon}\right|_{t\to \varepsilon^{-1}t}\overset{s}{\to}1$. The generator of $G_{\tilde{\lambda},t}(P)$ w.r.t. $\lambda$ is 
\begin{equation*}
  \varepsilon^2(F_t(\hat{P}_{\tilde{\lambda}})-F_t(P))\,.
\end{equation*}
By comparison with (\ref{pf_lem_sc_mcN_2}) we are led to discuss
\begin{equation}\label{pf_lem_sc_diff}
Z_t(\hat{P}_{\tilde{\lambda}})-Z_t(P)\,.
\end{equation}
Since the drift is $-2\tilde{\lambda}s=-2\lambda\varepsilon^2s$, the first term amounts to the second up to a shift of $\gamma P_s$ by $-2\lambda\gamma \varepsilon^2 s$  in (\ref{eq_defs_U_t}), or equivalently of $\omega$ by $-\lambda\gamma\varepsilon^2$. Finally the above rescaling of $G_{\tilde{\lambda},t}(P)$ calls for the difference (\ref{pf_lem_sc_diff}) to vanish when its two terms are rescaled as seen in (\ref{eq_lem_diff_sc_Z}). It does in probability for $\varepsilon \to 0$, as can be seen by the integration by parts formula (\ref{lem_Z_t_BM_2}) and by applying stochastic dominated convergence to (\ref{lem_Z_t_BM_3a}). In fact $-\lambda\gamma\varepsilon^2\cdot \varepsilon^{-1}\to 0$.
\end{proof}
\begin{proof}[Proof of Prop. \ref{prop_ltl}]
  The convergence (\ref{eq_def_conv_ltl}) claimed in (\ref{ltl_N}) is an immediate consequence of Lm.~\ref{lemma_diff_sc_Z} and \ref{lemma_sc_mcN} together with $\BraKet{\Omega}{f(T_{\varepsilon}^*X T_{\varepsilon})}{\Omega}=\BraKet{\Omega}{f(X)}{\Omega}$ by $T_\varepsilon\ket{\Omega}=\ket{\Omega}$. The convergence (\ref{eq_def_conv_ltl_expl}) for functions $f\in C(C_I)$ claimed in (\ref{ltl_mcN}) and again in (\ref{ltl_mcN_M}) follows on the same grounds for functions $f$ depending on the processes at finitely many times. For the general case, tightness of $\mathcal{M}_{\varepsilon,t}$, cf. (\ref{eq_lem_sc_mcN}), has to be shown: We observe that $\tilde{Z}_{\varepsilon,t}\overset{d}{=}\varepsilon Z_{\varepsilon^{-1}t}$, as defined in (\ref{eq_lem_diff_sc_Z}), obeys the same moments and tightness bounds as its unscaled counterpart $Z_t$. This is so because in (\ref{lem_Z_t_BM_2}) the bound on the first term improves by a factor $\varepsilon$ and the scaling of the second term, $X_t$, was already incorporated in bounds like (\ref{eq_pf_kol_tight_2}). Moreover, all terms in (\ref{pf_lem_sc_mcN1}, \ref{pf_lem_sc_mcN_2}) are as in (\ref{def_F_t}), except for $Z_t$ replaced by $\tilde{Z}_{\varepsilon,t}$ and for additional prefactors $\varepsilon^n$ ($n=3/2, 1, 1/2,0$). Hence the same tightness bounds apply.
\end{proof}

\end{section}

\begin{appendix}
\begin{section}{Appendix}
We shall derive (\ref{eq_N_T_infty_qm}, \ref{eq_N_T_infty_n_qm}). The commutation relation $[a, a^*]=1$ implies $\ii [H_0,a]=-\ii \omega a$ and thus $\ee^{\ii H_0 t}a\ee^{-\ii H_0 t}=a\ee^{-\ii \omega t}$. Since the relation remains true upon replacing $a$ by $a-\alpha$, and $a^*$ accordingly, we also have 
\begin{equation*}
\ee^{\ii H t} a \ee^{-\ii H t}=\alpha+(a-\alpha)\ee^{-\ii\omega t}\,.
\end{equation*}
By (\ref{eq_averaged_obs}) that expression has to be multiplied from the left by its adjoint and then time averaged in order to obtain $\overline{M}_T$. As a result
\begin{equation*}
\overline{M}_{\infty}=|\alpha|^2+(a^*-\overline{\alpha})(a-\alpha)\,,
\end{equation*}
because terms $\sim \ee^{\pm \ii \omega t}$ do not contribute to the limit. This proves (\ref{eq_N_T_infty_qm}), which in turn implies 
\begin{equation*}
\BraKet{n}{\overline{M}_{\infty}^2}{n}=\BraKet{n}{(N+2|\alpha|^2)^2}{n}+|\alpha|^2\BraKet{n}{a^*a+aa^*}{n}\,,
\end{equation*}
because monomials $(a^*)^la^m$ with $l\not=m$ have vanishing expectation. The first term on the r.h.s.\ equals $\BraKet{n}{\overline{M}_{\infty}}{n}^2$ and (\ref{eq_N_T_infty_n_qm}) follows.
\end{section}
\end{appendix}
\vskip 3em
\noindent
{\bf Acknowledgment.} This research was partly supported by the NCCR SwissMAP, funded by the Swiss National Science Foundation.


\begin{thebibliography}{1} 
\bibitem{A88} P. Alsing, G.J. Milburn, D.F. Walls. Quantum Nondemolition Measurements
in Optical Cavities. \emph{Phys. Rev. A}, doi: 10.1103/PhysRevA.37.2970, 1988.
\bibitem{FF16} M. Ballesteros, M. Fraas, J. Fr\"ohlich, and B. Schubnel. Indirect Acquisition of Information in Quantum Mechanics: States Associated With Tail Events. \url{https://arxiv.org/pdf/1611.07895.pdf}, 2016.
\bibitem{FF17a} M. Ballesteros, N. Crawford, M. Fraas, J. Fr\"ohlich, and B. Schubnel. Non-Demolition Measurements of Observables With General Spectra. \url{https://arxiv.org/pdf/1706.09584.pdf}, 2017.
\bibitem{FF17b} M. Ballesteros, N. Crawford, M. Fraas, J. Fr\"ohlich, and B. Schubnel. Perturbation Theory for Weak Measurements in Quantum Mechanics, I - Systems with
Finite-Dimensional State Space. \url{https://arxiv.org/pdf/1709.03149.pdf}, 2017.
\bibitem{BL85} A. Barchielli and G. Lupieri. Quantum Stochastic Calculus, Operation Valued Stochastic Processes and Continual Measurements in Quantum Mechanics. \emph{Journal of Mathematical Physics}, doi: 10.1063/1.526851, 1985.
\bibitem{B86} A. Barchielli. Measurement Theory and Stochastic Differential Equations in Quantum Mechanics. \emph{Phys. Rev. A}, doi: 10.1103/PhysRevA.34.1642, 1986.
\bibitem{B87} A. Barchielli. Quantum Stochastic Differential Equations: An Application to the
Electron Shelving Effect. \emph{J. Phys. A: Math. Gen.}, doi: 10.1088/0305-4470/20/18/034, 1987.
\bibitem{B88} A. Barchielli. Input and Output Channels in Quantum Systems and Quantum
Stochastic Differential Equations. In: Accardi L., von Waldenfels W. (eds). Quantum Probability and Applications III. \emph{Lecture Notes in Mathematics}, Springer, Berlin, Heidelberg, doi: 10.1007/BFb0078053, 1988.
\bibitem{B90} A. Barchielli. Direct and Heterodyne Detection and Other Applications of Quantum
Stochastic Calculus to Quantum Optics. \emph{Quantum Opt.}, doi: 10.1088/0954-8998/2/6/002, 1990.
\bibitem{BB11} M. Bauer, D. Bernard. Convergence of Repeated Quantum Nondemolition Measurements and Wave-Function Collapse. \emph{Phys. Rev. A}, doi: 10.1103/PhysRevA.84.044103, 2011. 
\bibitem{B92a} V. Belavkin. Quantum Continual Measurements and a Posteriori Collapse on CCR. \emph{Commun. Math. Phys.}, doi: 10.1007/BF02097018, 1992.
\bibitem{B92b} V. Belavkin. Quantum Stochastic Calculus and Quantum Nonlinear Filtering. \emph{Journal of Multivariate Analysis}, doi: 10.1016/0047-259X(92)90042-E, 1992.
\bibitem{BMS07} G.P. Berman, M. Merkli, and I.M. Sigal. Decoherence and Thermalization. \emph{Phys. Rev. Lett.}, doi: 10.1103/PhysRevLett.98.130401, 2007.
\bibitem{BMS08} G.P. Berman, M. Merkli, and I.M. Sigal. Resonance Theory of Decoherence and Thermalization. \emph{Ann. Phys.}, doi: 10.1016/j.aop.2007.04.013, 2008.
\bibitem{BFS16} P. Blanchard, J. Fr\"ohlich, B. Schubnel. A “Garden of Forking Paths” – The Qantum Mechanics of Histories of Events. \emph{Nuclear Physics B}, doi: 10.1016/j.nuclphysb.2016.04.010, 2016.
\bibitem{BMK03} L. Bouten, H. Maassen, B. K\"ummerer. Constructing the Davies Process of Resonance
Fluorescence with Quantum Stochastic Calculus. \emph{Optics and Spectroscopy}, doi: 10.1134/1.1586743, 2003.
\bibitem{BGM04} L. Bouten, M. Guta, H. Maassen. Stochastic Schr\"odinger Equations. \emph{J. Phys. A:
Math. Gen.}, doi: 10.1088/0305-4470/37/9/010, 2004.
\bibitem{BHLRZ90} M. Brune, S. Haroche, V. Lefevre, J.M. Raimond, and N. Zagury. Quantum Nondemolition
Measurement of Small Photon Numbers by Rydberg-Atom Phase-Sensitive Detection. \emph{Phys. Rev. Lett.}, doi: 10.1103/PhysRevLett.65.976, 1990.
\bibitem{BJM06} L. Bruneau, A. Joye, M. Merkli. Asymptotics of Repeated Interaction Quantum Systems. \emph{J. Funct. Anal.}, doi: 10.1016/j.jfa.2006.02.006, 2006.
\bibitem{CW88} M.J. Collet, D.F. Walls. Quantum Limits to Light Amplifiers. \emph{Phys. Rev. Lett.}, doi: 10.1103/PhysRevLett.61.2442, 1988.
\bibitem{F85b} A. Frigerio. Covariant Markov Dilations of Quantum Dynamical Semigroups. \emph{Pub. RIMS Kyoto Univ.}, doi: 10.2977/prims/1195179060, 1985.
\bibitem{GC85} C.W. Gardiner, M.J. Collet. Input and Output in Damped Quantum Systems: Quantum Stochastic Differential Equations and the Master Equation. \emph{Phys. Rev. A}, doi: 10.1103/PhysRevA.31.3761, 1985.
\bibitem{G86} C.W. Gardiner. Inhibition of Atomic Phase Decays by Squeezed Light: a Direct Effect of Squeezing. \emph{Phys. Rev. Lett.}, doi: 10.1103/PhysRevLett.56.1917, 1986.
\bibitem{GDL} D. Gredat, I. Dornic, J.M. Luck. On an imaginary exponential functional of Brownian motion. \emph{J. Phys. A: Math. Theor.}, doi:	10.1088/1751-8113/44/17/175003, 2011.
\bibitem{GBDSGKBRH07} C. Guerlin, J. Bernu, S. Del\'eglise, C. Sayrin, S. Gleyzes, S. Kuhr, M. Brune, J.M. Raimond, S. Haroche. Progressive Field-State Collapse and Quantum Non-Demolition Photon Counting. \emph{Nature}, doi: 10.1038/nature06057, 2007.
\bibitem{H72} K. Hepp. Quantum-Theory of Measurement and Macroscopic Observables. \emph{Helvetica Physica Acta} 45, 237–248, 1972.
\bibitem{PH84} R.L. Hudson and K.R. Parthasarathy. Quantum Ito's Formula and Stochastic Evolutions. \emph{Comm. Math. Phys.}, doi: 10.1007/BF01258530, 1984.
\bibitem{K49} M. Kac. On Distributions of Certain Wiener Functionals. \emph{Trans. Amer. Math. Soc.}, doi: 10.1090/S0002-9947-1949-0027960-X, 1949.
\bibitem{LRW88} A.S. Lane, M.D. Reid, D.F. Walls. Quantum Analysis of Intensity Fluctuations
in the Nondegenerate Parametric Oscillator. \emph{Phys. Rev. A}, doi: 10.1103/PhysRevA.38.788, 1988.
\bibitem{L66} M. Lax. Quantum Noise. IV. Quantum Theory of Noise Sources. \emph{Phys. Rev.}, doi: 10.1103/PhysRev.145.110, 1966.
\bibitem{MSS01} Y. Makhlin, G. Sch\"on, A. Shnirman. Quantum-State Engineering With Josephson-Junction Devices. \emph{Rev. Mod. Phys.}, doi: 10.1103/RevModPhys.73.357, 2001.
\bibitem{M87} G.J. Milburn. Quantum Measurement Theory of Optical Heterodyne Detection. \emph{Phys. Rev. A}, doi: 10.1103/PhysRevA.36.5271, 1987.
\bibitem{P83} A. Pazy. \emph{Semigroups of Linear Operators and Applications to Partial Differential Equations}. Springer, New York, NY, doi: 10.1007/978-1-4612-5561-1, 1983.
\bibitem{P64} A. Peres and N. Rosen. Macroscopic Bodies in Quantum Theory. \emph{Phys. Rev.}, doi: 10.1103/PhysRev.135.B1486, 1964.
\bibitem{S79} B. Simon. Functional Integration and Quantum Physics. Pure and Applied Mathematics. A Series of monographs and textbooks. \emph{American Mathematical Soc.}, \url{https://books.google.ch/books?id=6Wcaa2ddplEC}, 1979.
\bibitem{VN96} J. von Neumann. \emph{Mathematische Grundlagen der Quantenmechanik}. Springer, Berlin Heidelberg, doi: 10.1007/978-3-642-61409-5, 1996.
\end{thebibliography}
\end{document}